\begin{document}

\date{}

\def\name{\textsf{Piquant$\varepsilon$}}
\title{\name: Private Quantile Estimation in the Two-Server Model}

\author{
{\rm  Hannah Keller}\\
Aarhus University
\and
{\rm Jacob Imola}\\
BARC, University of Copenhagen
\and 
{\rm Fabrizio Boninsegna}\\
University of Padova
\and
{\rm Rasmus Pagh}\\
BARC, University of Copenhagen
\and
{\rm Amrita Roy Chowdhury}\\
University of Michigan
}

\maketitle

\begin{abstract}
Quantiles are key in distributed analytics, but computing them over sensitive data risks privacy. Local differential privacy (LDP) offers strong protection but lower accuracy than central DP, which assumes a trusted aggregator. Secure multi-party computation (MPC) can bridge this gap, but generic MPC solutions face scalability challenges due to large domains, complex secure operations, and multi-round interactions.

We present \name, a system for privacy-preserving estimation of multiple quantiles in a distributed setting without relying on a trusted server. \name~operates under the malicious threat model and achieves accuracy of the central DP model. Built on the two-server model, \name~uses a novel strategy of releasing carefully chosen intermediate statistics, reducing MPC complexity while preserving end-to-end DP. Empirically, \name~estimates 5 quantiles on 1 million records in under a minute with domain size $10^9$, achieving up to $10^4$-fold higher accuracy than LDP, and up to $\sim 10\times$ faster runtime compared to baselines\footnote{Code available at https://github.com/hjkeller16/Piquante}.
\end{abstract}

\section{Introduction}
Quantiles are a fundamental tool for understanding data distributions and are widely used in distributed analytics, where data is spread across many clients. They enable applications such as system performance monitoring, telemetry, and financial risk assessment~\cite{quantile,dean2013tail}. 
However, releasing quantiles over sensitive data can inadvertently leak private information—for instance, reporting the median income may expose individual salaries, revealing median medical expenses could hint at specific health conditions, or showing the top spenders in a dataset could identify individual customer behavior.
Local Differential Privacy (LDP) offers a compelling solution by providing strong privacy guarantees \textit{without} a trusted  server. This makes LDP particularly well-suited for distributed analysis, where centralized trust is often impractical or undesirable, and is already deployed at scale by companies like Google~\cite{Rappor1}, Apple~\cite{Apple}, and Microsoft~\cite{Microsoft}.

Despite these advantages, LDP comes with a well-documented trade-off: a significant gap in accuracy compared to central DP~\cite{error1,error2,error3}. In the central model, a \textit{trusted} server collects raw data from the clients and then applies a differentially private mechanism on the centralized dataset, achieving better utility. In contrast, the local model requires each client to \textit{individually} randomize their data  before transmission. In the context of quantiles, prior work shows an error bound of $O(\frac{\sqrt{n}\log(|\B|)^2}{\epsilon })$\footnote{\label{f1} For simplicity we omit a term that depends on the failure probability~$\beta$.}~\cite{edmonds2020power, hierarchical_LDP} in the local model where $\B$ is the data domain, while the central DP estimates can achieve error $O(\frac{\log(|\B|)}{\epsilon} + \frac{\log^2 m}{\epsilon})$\footnoteref{f1}~\cite{imola2025differentiallyprivatequantilessmaller}.
A natural strategy for bridging this gap is to leverage secure multi-party computation (MPC) to design a distributed DP protocol. In particular, the exponential mechanism~\cite{EM} is known to be the optimal approach for estimating a single quantile under central DP. Thus, implementing the exponential mechanism securely via MPC enables accurate and private quantile estimation in a distributed setting—without requiring a trusted aggregator. 

While appealing in theory, deploying this approach using generic MPC tools in practice suffers from several challenges. 
First, a generic MPC solution would require executing a multi-round protocol across all participating client devices. At the scale of millions of users, such protocols become entirely impractical due to network latency, synchronization overhead, and frequent client dropouts.
Second, the exponential mechanism selects an output from the domain based on a utility score that reflects the quality of the result. For quantile estimation, the output domain is effectively the entire domain of data—which can be extremely large. 
Another issue is that quantiles are rank-based statistics, and computing utility scores for the exponential mechanism requires sorting the dataset. Secure sorting under MPC is notoriously expensive and becomes prohibitively costly at scale.
Finally, estimation of \textit{multiple} quantiles further compounds the complexity. Naively repeating the estimation \( m \) times not only increases computational cost but also degrades the utility due to the composition of multiple DP mechanisms. 

A line of prior work~\cite{CryptE,10.1145/2818000.2818027,cryptoeprint:2014/482} has focused on developing general-purpose frameworks that bridge the gap between the central and local models of DP. However, the generality of these solutions comes at a cost: they inherit the computational and scalability challenges discussed above. 
The closest line of work that specifically targets quantile estimation—particularly median computation—includes~\cite{BohlerK20, Kerschbaum}. However, both approaches suffer from critical limitations. The protocol in~\cite{BohlerK20} is designed for computing a single quantile over the joint dataset of exactly two parties, each of whom holds an entire dataset. This setting does not generalize to distributed settings where each client contributes only a single data point. Moreover, the protocol assumes a semi-honest threat model and does not scale well with large domain sizes. The approach in~\cite{Kerschbaum} improves on domain-size dependency but still assumes the semi-honest model, supports only a single quantile estimate, and requires an interactive protocol with 
$O(\log |\B|)$ communication rounds per client—making it impractical for large-scale deployment.
\\\noindent\textbf{Our Contributions.}
In this paper, we present \name, a system for the privacy-preserving estimation of multiple quantiles in a distributed setting without relying on a trusted central server. We focus on the practically relevant case of estimating up to  $o(\sqrt{n})$ quantiles (e.g., about 20 quantiles for a dataset with one million records). Such quantile resolutions are widely used in many real-world tasks: income statistics often use $m=4$~\cite{income}, clinical reference ranges such as birth-weight statistics typically use $m=9$~\cite{CDC}, and educational assessments may use $m=20$ to define grading thresholds~\cite{GRE}.
\name~operates under the \emph{malicious} threat model and achieves accuracy guarantees matching those of central DP. \name~builds on the widely adopted two-server model, where each client secret-shares their data between two non-colluding servers. This model has been extensively studied in distributed learning~\cite{HPI, CryptE, Bell22} and successfully deployed in real-world applications~\cite{deploy1,deploy2}. It is also currently under standardization by the IETF~\cite{ietf}. 

At a high level, rather than consuming the entire privacy budget to directly estimate the requested quantiles, \name~strategically allocates part of the budget to release intermediate statistics. The core novelty of \name~lies in the careful design and selection of these intermediate statistics, which substantially reduces the complexity of the required MPC operations without sacrificing the accuracy of the final quantile estimates. As a result, \name~eliminates all three major bottlenecks in secure quantile estimation—dependency on input domain size, dataset size, and the number of target quantiles. \name's~protocol is non-interactive from the client’s perspective, requiring only a single one-shot communication of secret-shares to the servers—yielding a system that is efficient and scalable. Empirically, \name~estimates 5 quantiles on 1 million records in under a minute with domain size $|\B|=10^9$, achieving  $10^4$-fold better accuracy than LDP with up to $\sim 10\times$ faster runtime 
compared to baselines.

\section{Background}\label{sec:background}

\subsection{Cryptographic Primitives}\label{sec:crypto}
\noindent \textbf{Linear Secret Shares (LSS).} 
Linear secret shares (LSS) is an MPC technique that allows mutually-distrusting parties to compute over secret inputs. We use $[x]$ to denote a linear secret sharing of an input $x \in \mathbb{F}$. Each party $P_i$ holds a share $[x]_i$ such that $\sum_{i=1}^k [x]_i=x$. LSS supports local linear operations. 
\\\noindent \textbf{Authenticated Secret Shares (ASS).} 
We adopt the SPDZ-style~\cite{Spdz2,Spdz1} authenticated secret shares (ASS), which ensures the integrity of the shared secrets. ASS augments the shares with an additional sharing of an information-theoretic message authentication code (IT-MAC). Specifically, each party shares $[\gamma]$ for a secret MAC key $\gamma \in \mathbb{F}$, and for a shared value $[x]$, they also share the corresponding MAC $[\gamma \cdot x]$. We denote $\langle x \rangle = ([x], [\gamma \cdot x])$
to be the ASS for a secret $x$ and $\langle x \rangle_i = ([x], [\gamma \cdot x]) \in \mathbb{F}^2$ as the ASS held by party $P_i$. ASS also supports local linear operations. $\textsf{Share}(x)$ denotes the protocol of generating and distributing ASS of a value $x$ (App.~\ref{app:ass}).

\begin{table}
\centering
\scalebox{0.9}{\begin{tabular}{lp{6cm}}
\toprule\\
\textbf{MPC Protocol} & \textbf{Output/ Functionality}\\
\midrule
$\Rec(\langle x\rangle)$ & x, reconstructed from $\langle x\rangle$\\
$\Mult(\langle x \rangle, \langle y \rangle)$ & $\langle x \cdot y\rangle$\\
$\Abs(\langle x\rangle)$ & $\langle |x| \rangle$\\
$\Trunc(\langle x \rangle,y)$ & $\lfloor x/2^y
\rfloor$, where $y$ is public\\
$\textsf{Equal}(\langle x \rangle, \langle y \rangle)$& $\langle 1\rangle$, if $x==y$, $\langle 0 \rangle$ otherwise\\
$\Cmp (\langle x \rangle, \langle y \rangle)$ & $\langle 1\rangle$, if $x<y$, $\langle 0 \rangle$ otherwise
\\ $\textsf{Rand}(y)$ & $\langle r \rangle$ with uniform random $y$-bit value $r$
\\
\bottomrule
\end{tabular}}
\caption{Basic MPC Protocols~\cite{cryptoeprint:2012/405} used in \name. We prefix protocols for floats with $\textsf{FL}$.}
\end{table}
\noindent\textbf{Secure Computation.} 
Let $\Pi$ be a $k$-party secure computation protocol, with up to $k' < k$ parties corrupted by a static adversary $\mathcal{A}$. Let $X$ denote the honest parties’ inputs and $\lambda$ the computational security parameter. We denote by $\texttt{REAL}_{\Pi,\mathcal{A}}(1^\lambda, X)$ the random variable representing the view of $\mathcal{A}$ and the output of the honest parties during an execution of $\Pi$, when they hold $X$. To formally characterize the security guarantees of $\Pi$, we define an ideal world in which a trusted third party evaluates an ideal functionality $\mathcal{F}$ for $n$ clients in the presence of $\mathcal{A}$. $\mathcal{F}$ may be interactive: it can send intermediate values to $\mathcal{A}$ and receive additional inputs (depending on the specifics of $\mathcal{F}$). In particular, $\mathcal{A}$ may send an \texttt{ABORT} message to $\mathcal{F}$, which prevents the honest parties from receiving any output. After the interaction, $\mathcal{A}$ may compute and output an arbitrary function of its view. We denote by $\texttt{IDEAL}_{\mathcal{F}, \mathcal{A}}(1^\lambda, X)$ the random variable corresponding to the adversary’s output and the output of the honest parties in this ideal execution.


\begin{definition} (Secure computation). $\Pi$ securely computes $\F$
if for every polynomial-time real-world adversary $\calA$ there is a
polynomial-time ideal-world adversary $\calA'$
such that for all X,
the distributions $\texttt{REAL}_{\Pi,\calA}(1^\lambda
, X)$ and $\texttt{IDEAL}_{\F,\calA'} (1^\lambda
, X)$
are computationally indistinguishable.
\end{definition}

\subsection{ Differential Privacy}\label{sec:DP}
Here, we provide background on differential privacy. Two datasets of size $n$ are neighboring if one can be obtained from the other by substituting a single user.
\begin{definition}[Differential Privacy]
    A randomized mechanism $\mathcal{M}: \mathcal{X}^n \rightarrow \mathcal{Y}$ satisfies $(\epsilon, \delta)$-DP if, for any neighboring datasets $X\sim X'$ and for any $Y\subseteq \mathcal{Y}$ we have that 
    \begin{equation}\label{eq:dp}
    \mathrm{Pr}[\calM(X)\in Y] \leq e^{\epsilon} \mathrm{Pr}[\calM(X')\in Y] + \delta \enspace .
    \end{equation} 
    We abbreviate the property (\ref{eq:dp}) as $ \calM(X) \approx_{\epsilon,\delta} \calM(X')$.
\end{definition}
\begin{definition} An ideal functionality $\F$ satisfies $(\epsilon, \delta)$-DP if for all $\calA$, $\lambda$, and $\X \sim X'$,
\begin{gather} 
\texttt{IDEAL}_{\F,\calA} (1^\lambda
, X) \approx_{\epsilon,\delta}\texttt{IDEAL}_{\F,\calA} (1^\lambda
, X') \enspace .
\end{gather} 
\end{definition}

Next, we  define computational DP
for cryptographic protocols by adapting the notion of $\text{SIM}^+$-CDP \cite{computationalDP}. 
\begin{definition}[Computational Differential Privacy]
    A protocol $\Pi$ satisfies $(\epsilon, \delta)$-computational
differential privacy if there is an $(\epsilon, \delta)$-DP ideal
functionality $\F$ such that $\Pi$ securely computes $\F$.\label{def:compDP} 
\end{definition}



\noindent We now present two mechanisms that will be used in \name.
\\\noindent\textbf{Exponential Mechanism.} The exponential mechanism is a classic differentially private algorithm for selecting an output from a set of candidates based on a utility function $\u: \mathcal{X}^n \times \calY \to \mathbb{R}$. $\u$ quantifies the candidate by assigning a real-valued score. It is designed to prefer ``good" outputs—those with higher utility—by assigning them higher probabilities of being selected. 
\begin{definition} [Exponential Mechanism~\cite{mcsherry2007mechanism}] For any utility function $\u : (\mathcal{X}^n \times \calY )\mapsto \mathbb{R}$, a privacy parameter $\epsilon$ and a dataset $X \in \mathcal{X}^n$, the exponential mechanism $\EM_\u(X, \epsilon)$ releases an element $y\in \mathcal{Y}$ according to 
\begin{equation*} 
\mathrm{Pr}\big[\EM_\u(X, \epsilon) = y] = \exp \left(\frac{\epsilon u(X,y)}{2	\bigtriangleup \u}\right) \Big/ \underset{y' \in \calY}{\sum} \exp \left(\frac{\epsilon \u(X,y)}{2	\bigtriangleup \u}\right), 
\end{equation*}
where $\bigtriangleup u$ is the sensitivity of the utility function defined as $\bigtriangleup u = \max_{X\sim X'}\max_{y\in \mathcal{Y}}|u(X, y)-u(X', y)|$,
 is $(\varepsilon, 0)$-DP.
\end{definition}

\noindent\textbf{Continual Counting.} 
A continual counting mechanism is a differentially private method for releasing prefix sums $\sum_{i=1}^{t} x_i$ for any $t < T$, where $x_i \in {0,1}$. A naive approach adds independent Laplace noise to each of the $T$ prefix sums, incurring maximum error $\tilde{O}(T / \varepsilon)$. The \emph{binary mechanism} \cite{continual-counting, private-continual-statistics} improves this to $O(\log^2 T / \varepsilon)$ by adding correlated Laplace noise derived from a binary tree over prefix sums. 
We denote the resulting noise distribution by $\CC_{\varepsilon}(t) \in \mathbb{Z}^t$, which ensures privacy w.r.t. neighboring inputs $\mathbf{x}, \mathbf{x}'$ (with $|\mathbf{x} - \mathbf{x}'| \leq 1$). We defer the full details to App.~\ref{app:cont-count}.

\begin{lemma}\label{lemma:CC}(From~\cite{chan2011private})
    There exists a noise distribution $\CC_\varepsilon(t)$ (written $\CC$ for short) supported on $\mathbb{Z}^t$ such that 
    \squishl
        \item For all neighboring partial sum vectors $\mathbf{s}, \mathbf{s'} \in \Z^m$ and vectors $\mathbf{v} \in \Z^t$, we have 
    \[
        \Pr_{\boldsymbol{\eta} \sim \CC}[\mathbf{s}+\boldsymbol{\eta}=\mathbf{v}] \leq e^{\epsilon} \Pr_{\boldsymbol{\eta} \sim \CC} [ \mathbf{s}' + \boldsymbol{\eta} = \mathbf{v}]
   \]
    \item For all $\beta > 0$, $\Pr_{\boldsymbol{\eta} \sim \CC}[\|\boldsymbol{\eta}\|_\infty \geq \frac{3}{\epsilon}\log(t) \log(\frac{2t}{\beta})] \leq \beta$.
    \squishe
\end{lemma}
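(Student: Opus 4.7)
The plan is to realize $\CC_{\varepsilon}(t)$ via the standard binary-tree mechanism of Chan--Shi--Song (which is essentially what is being cited), then verify privacy by an $\ell_1$-sensitivity argument on the tree and verify the tail bound by combining a Laplace concentration inequality with a union bound over the $t$ coordinates.

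First I would fix the construction. Build a complete binary tree whose leaves index positions $1,\dots,t$; each node $v$ corresponds to a dyadic sub-interval $I_v \subseteq [t]$ and is assigned an independent noise draw $\eta_v \sim \Lap(b)$ with $b = (\lceil \log_2 t\rceil + 1)/\varepsilon$. Every prefix $[1,j]$ admits a dyadic decomposition into at most $\lceil \log_2 t\rceil + 1$ of these nodes, so the released $j$-th noisy partial sum is $\sum_{i \leq j} x_i$ plus the sum of the $\eta_v$'s on its decomposition nodes. The distribution $\CC_{\varepsilon}(t)$ is then the induced joint distribution on the $t$ noise offsets added to the true partial sums.

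For the first bullet, changing a single input coordinate $x_i$ by at most $1$ perturbs the true value $\sum_{i' \in I_v} x_{i'}$ only at nodes $v$ lying on the root-to-leaf path of the $i$-th leaf, hence at most $\lceil \log_2 t\rceil + 1$ nodes and each by at most $1$. The $\ell_1$-sensitivity of the full vector of true node sums is therefore at most $\lceil \log_2 t\rceil + 1$, so releasing that vector with independent $\Lap(b)$ noise is $\varepsilon$-DP by the classical Laplace-mechanism guarantee. The $t$ noisy prefix sums are deterministic linear combinations of the node sums, so they inherit $\varepsilon$-DP by post-processing, which is exactly the multiplicative-likelihood inequality in the first bullet once it is re-expressed in terms of partial-sum vectors $\mathbf{s}, \mathbf{s}'$ coming from neighboring inputs $\mathbf{x} \sim \mathbf{x}'$. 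For the second bullet, each coordinate of $\boldsymbol{\eta}$ is a sum of at most $k := \lceil \log_2 t\rceil + 1$ independent mean-zero $\Lap(b)$ variables with $b = k/\varepsilon$. Applying a Bernstein-type tail bound for sums of Laplace variables gives a single-coordinate inequality of the form $\Pr\bigl[|\eta_j| \geq \tfrac{3}{\varepsilon}\log(t)\log(2t/\beta)\bigr] \leq \beta/t$, and union-bounding over the $t$ coordinates yields the claim.

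The main obstacle is pinning down the numerical constants: one must reconcile $\lceil \log_2 t\rceil$ with the plain $\log(t)$ appearing in the statement and pick a sub-exponential tail inequality tight enough that the coordinate-wise bound collapses to exactly $\tfrac{3}{\varepsilon}\log(t)\log(2t/\beta)$ rather than an unspecified constant multiple of it. The privacy argument itself is essentially sensitivity bookkeeping once the dyadic decomposition is in place, so all the subtlety sits in the utility tail bound.
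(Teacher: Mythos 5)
Your construction is the right one in spirit --- the lemma is imported from Chan--Shi--Song, and the binary tree with per-node noise, a sensitivity of $\lceil\log_2 t\rceil+1$ along the root-to-leaf path for privacy, and a concentration-plus-union-bound argument for the tail is exactly the mechanism being cited (the paper itself never proves the lemma in-line; its appendix variant, Lemma~\ref{lem:co} in App.~\ref{app:cont-count}, is likewise obtained by citing the \emph{rounded} binary mechanism of prior work). However, there is a genuine gap in your version: you instantiate each node with continuous $\Lap(b)$ noise, so the resulting joint law is not supported on $\mathbb{Z}^t$ as the lemma requires, and the first bullet is stated in terms of point probabilities $\Pr[\mathbf{s}+\boldsymbol{\eta}=\mathbf{v}]$, which are identically zero for a continuous distribution --- the statement you would prove is either vacuous or not the one claimed. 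Integrality is not cosmetic here: downstream, $\FSEM$ uses the $\eta_i$ as index shifts into a sorted array and $\FBuc$ turns consecutive differences of this noise into dummy-record counts, so the noise must genuinely be integer-valued. The repair the paper relies on is to discretize --- either use discrete Laplace (two-sided geometric) noise at each node, or round the Laplace binary mechanism as in the cited adaptation; since the true prefix sums are integers, rounding is post-processing of the noise, so the pure $\varepsilon$ pointwise likelihood-ratio bound survives and the $\ell_\infty$ bound degrades only by an additive constant per coordinate. Your write-up needs this step made explicit.

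On the second bullet, your plan (a Bernstein/sub-exponential tail for a sum of $k\le\lceil\log_2 t\rceil+1$ independent $\Lap(k/\varepsilon)$ variables, then a union bound over the $t$ coordinates) is the standard route and the target constant is plausibly attainable, since the moderate-deviation bound gives roughly $\sqrt{8}\,\varepsilon^{-1}k^{3/2}\sqrt{\ln(2t/\beta)}\le \sqrt{8}\,\varepsilon^{-1}\log(t)\log(2t/\beta)$ using $\log t\le\log(2t/\beta)$, and $\sqrt{8}<3$. But as written you have only gestured at this: the concentration inequality for sums of Laplace variables has a validity condition on the deviation size, and for small $\beta$ (large deviations) one must switch to the purely exponential branch of the tail; you also need to absorb the extra error introduced by rounding. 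None of this is hard, but it is precisely the bookkeeping you flagged and did not carry out, so the tail bound, and with it the lemma as stated, is not yet established by your argument.
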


\section{Problem Statement}\label{sec:problem}

\textbf{Notations.} Given dataset $X = \{x_1, \ldots, x_n\}$ in discrete data domain $\B$, we denote by $x_{(i)}$ the $i$th smallest element in $X$. Vectors are denoted in \textbf{bold}.
For $q \in [0,1]$, the $q$th quantile of $X$ is defined as $x_{(r)}$, where $r = \lfloor qn \rfloor$. The \textit{quantile error} measures the difference between the desired quantile rank $q$ and actual rank of the quantile estimate $z$. For $z \in \mathcal{B}$, the error at point $z$ and quantile $q$ is: 
\begin{equation} 
    \mathsf{err}_{X,q}(z) =\left\lvert q n - \sum_{x \in X} \textbf{1}[x \leq z]\right\rvert \label{eq:utility:quantile}
\end{equation}
\noindent\textbf{Setting.} We consider datasets $X$ where each point is held by one of $n$ clients. The goal is to compute a given set of quantiles $Q=\{q_1, \cdots, q_m\} \in (0,1)^m$. We assume all data points in $X$ are unique, implying that $|x_i - x_j| \geq 1$ for all $i \neq j$. This is a mild assumption that makes our theoretical results easier to state, and can be easily enforced in practice. \footnote{The gap can be enforced by expanding the domain to $|\B|n$ and appending $\log(n)$ unique lower-order bits per client. If unmet, only utility—not privacy or security—is affected.}

In local DP, quantile estimation incurs $\mathcal{O}\left(\sqrt{n}\log^2 |\B| /\epsilon \right)$ error~\cite{edmonds2020power, hierarchical_LDP}, which is significantly worse than the $\mathcal{O}\left(\log |\B| /\epsilon  + \log(m)^2/\epsilon \right)$ achievable under central DP~\cite{imola2025differentiallyprivatequantilessmaller}. An intermediate model, known as shuffle DP, assumes a trusted shuffler that randomly permutes client messages before aggregation, yielding privacy amplification~\cite{cheu_distributed_2019,bittau_prochlo_2017}. In this setting, the best bound on error is $\mathcal{O}\left(\log^2 |\B|/\epsilon\right)$~\cite{shuffle-hiding}, which can still grow too quickly for larger data domain sizes. Our goal is to match the accuracy of the central model while retaining the local model’s assumption of no trusted party, by leveraging an MPC protocol.

 We consider a setting with two \textit{non-colluding, untrusted} servers, $\Ser_0$ and $\Ser_1$. One server may be operated by an entity interested in computing the quantiles, while the other may be an independent third party. Existing services like Divvi Up~\cite{divvi-up}\footnote{Operated by the non-profit Internet Security Research Group.} provide infrastructure to support such non-colluding  servers. Each client splits their private input $x_i$ into ASS and sends them to the two servers. This is a one-time operation, after which clients can go offline. The servers then engage in an MPC protocol to compute the desired quantiles.  
\\\noindent\textbf{Threat Model.}
We consider a probabilistic polynomial-time (PPT) adversary $\mathcal{A}$ that actively corrupts at most one server and an arbitrary number of clients. These server and clients are assumed to be \textit{malicious}—that is, they may deviate arbitrarily from the protocol specification.
Our privacy guarantees ensure that the servers learn only information that is differentially private with respect to the clients' inputs.

\section{Technical Overview}

\subsection{Naive Approach}
Under central DP, a very effective mechanism for computing a single quantile is the exponential mechanism (Sec.\ref{sec:DP}), where the utility function is given by the quantile error~Eq.\eqref{eq:utility:quantile}. A naive strategy to privately compute multiple quantiles in our setting is to instantiate this mechanism under secure computation and invoke it independently for each of the 
$m$ quantiles. However, this approach introduces several challenges:

\squishl
\item First, the exponential mechanism requires evaluating selection probabilities over the entire output domain $\B$, resulting in time complexity linear in $|\B|$. Since $\B$ is typically much larger than the number of unique input values, and in some cases even larger than the total number of users, this introduces a significant computational bottleneck.
\item Second, in the context of estimating multiple quantiles, the naive approach has two major drawbacks. First, it incurs a computational cost of at least $O(nm)$, due to evaluating each of the $m$ quantiles independently over $n$ inputs. Second, from a utility perspective, applying DP via sequential composition increases the overall error by a factor $m$.\footnote{Although prior work improves upon the naive approach by computing quantiles recursively over a binary tree—resulting in a more favorable dependence on the privacy budget due to composition—implementing this method under MPC still incurs a computational overhead of $\mathcal{O}(n(m+\log n))$.  We use this tree-based protocol as a baseline in Sec.~\ref{sec:eval}.}

\item Third, computing each candidate’s utility score requires its rank in $X$, which entails sorting the \textit{entire} dataset. Secure sorting under secret sharing is costly since comparison is a non-linear operation, with each comparison incurring heavy computation and multiple communication rounds~\cite{edaBit}. 

\item Finally, computing the selection probabilities involves performing $|\B|$ exponentiations over floating-point numbers. This is costly in secure computation due to its non-linearity, further exacerbating the overall computational burden.
\squishe

\subsection{\name: Key Ideas}\label{sec:key ideas}
In light of the above challenges, \name~does not release the quantiles directly using the entire privacy budget. Instead, our approach allocates a portion of the budget to release some additional intermediate statistics from the joint dataset $X$ (Fig.~\ref{fig:name}). The novelty of \name~lies in the strategic selection of this additional, differentially private leakage, which significantly reduces the complexity of the required MPC operations -- without compromising the accuracy of the final quantile estimates. In what follows, we elaborate on the specifics of this strategy and how it addresses the aforementioned challenges.
\\\noindent \name~is based on four key ideas:
\\\noindent\textbf{1. Remove computational dependence on the domain size $|\B|$.}
Our first idea builds on the properties of the utility score for quantile estimation. Let $\I$ denote an interval in $[\B]$ such that no private data point from the dataset $\X$ is contained in $\I$, i.e., $\forall x \in \X, x \notin \I$. Now, observe that for every point $t \in \I$, the utility score $\u_q(\X, t)$ for the $q$th quantile is exactly the same, since ${\sum \mathbf{1}_{x\in \X}[x \leq t] } =  \sum \mathbf{1}_{x\in \X}[x \leq t'], \forall (t, t') \in \I \times \I$. As a consequence, instead of having a separate selection probability for each element in $\I$, we can compute a weighted probability for each interval, where the weights are proportional to the size of the interval.
The exponential mechanism can thus be rewritten to the following two-step sampling strategy using the techniques from \cite{Smith11}:
\squishl \item Sort all the data points in $\X \in \B^n$. 
\item Let $\I_k = [x_{(k-1)}, x_{(k)}]$, where $k = 1, \cdots, n + 1$, and $x_{(0)} = 0$, $x_{(n+1)} = \d$, be the set of $n + 1$ intervals between the sorted data points in $\X_{\text{sort}}$. We sample an interval from this set of intervals, where the probability of sampling $\I_k$ is proportional to \begin{gather*} \mathrm{Pr}[\A(\X) = \I_k] \propto \exp\Big(\frac{\epsilon \u_q(\X, \I_k)}{2}\Big) \cdot (x_{(k)} - x_{(k-1)}), \end{gather*} 
Here $\u_q(\X, \I_k)$ refers to the utility of the interval (or equivalently any point within the interval).
\item Return a uniform random point from the sampled interval. \squishe
As a result, \name's computation and communication costs are independent of the domain size $|\B|$. 

\begin{mybox2}{Ideal Functionality: $\FEM$}
\textbf{Input:} Sorted private dataset $X \in \B^n $ with $\B=[\d]$; Queried quantile $q \in (0,1)$; Privacy parameters $\epsilon$; 
\\
\textbf{Output:} Noisy quantile estimate $v$
\begin{enumerate}[nosep]
\item Let the sorted dataset be $\Xs = \{x_{(1)},\cdots, x_{(n)}\}$
\item $\I_1=[1,x_{(1)}]$,  $\I_{n+1}=[x_{(n)},d]$
\item \textbf{For} $i\in [n]$
\item \hspace{0.5cm} $\mathcal{P}(\I_i)=\frac{\exp (\frac{\epsilon u_q(X, \I_i)}{2}) \cdot (x_{(i)}-x_{(i-1)})}{\sum_{j=1}^{n+1} \exp (\frac{\epsilon u_q(X, \I_j)}{2}) \cdot (x_{(j)}-x_{(j-1)})}$
\item \textbf{End For}
\item Sample $\I^*\sim \mathcal{P}$ and return an element $v$ chosen uniformly at random from $\I^*$
\end{enumerate}
\end{mybox2}

 \noindent\begin{minipage}{\columnwidth}
\captionof{figure}{Ideal Functionality: $\F_{\EM}$}\label{fig:IF:EM}
\end{minipage}

\noindent\textbf{2. Improve error dependency on the number of quantiles.} Our second key idea arises from the observation that, rather than using the entire dataset for estimating each of $m$ quantiles, it is sufficient to operate on specific \textit{slices} of the dataset in sorted order. In particular, to privately estimate the quantile $q$, we only need access to a small neighborhood around its corresponding rank $\rq$—$\{x_{(\rq - h)}, \ldots, x_{(\rq + h)}\}$ for some sufficiently large integer $h$. This insight allows us to partition the sorted dataset into $m$ \textit{disjoint} slices—one for each quantile. We then estimate all $m$ quantiles in \textit{parallel} by invoking a (secure) exponential mechanism independently on each slice (Fig. \ref{fig:sEM}). By applying the parallel composition theorem (Thm.~\ref{thm:parrallel}), this strategy improves utility by a factor of $m$ over the naive approach. However, as noted in recent work~\cite{imola2025differentiallyprivatequantilessmaller}, a technical challenge arises: deterministically computing these slices can lead to instability in the partitioning, violating the assumptions required for parallel composition (see Sec.~\ref{sec:IF:slicing} for details). To mitigate this,~\cite{imola2025differentiallyprivatequantilessmaller} proposes to perturb the quantiles themselves, thereby making them private. Unfortunately, this is not suitable for our setting. Making the quantiles private would require computing selection probabilities using secure floating-point exponentiations--which, as discussed earlier,  we seek to avoid. 
\begin{figure}
\centering
\includegraphics[width=0.9\linewidth]{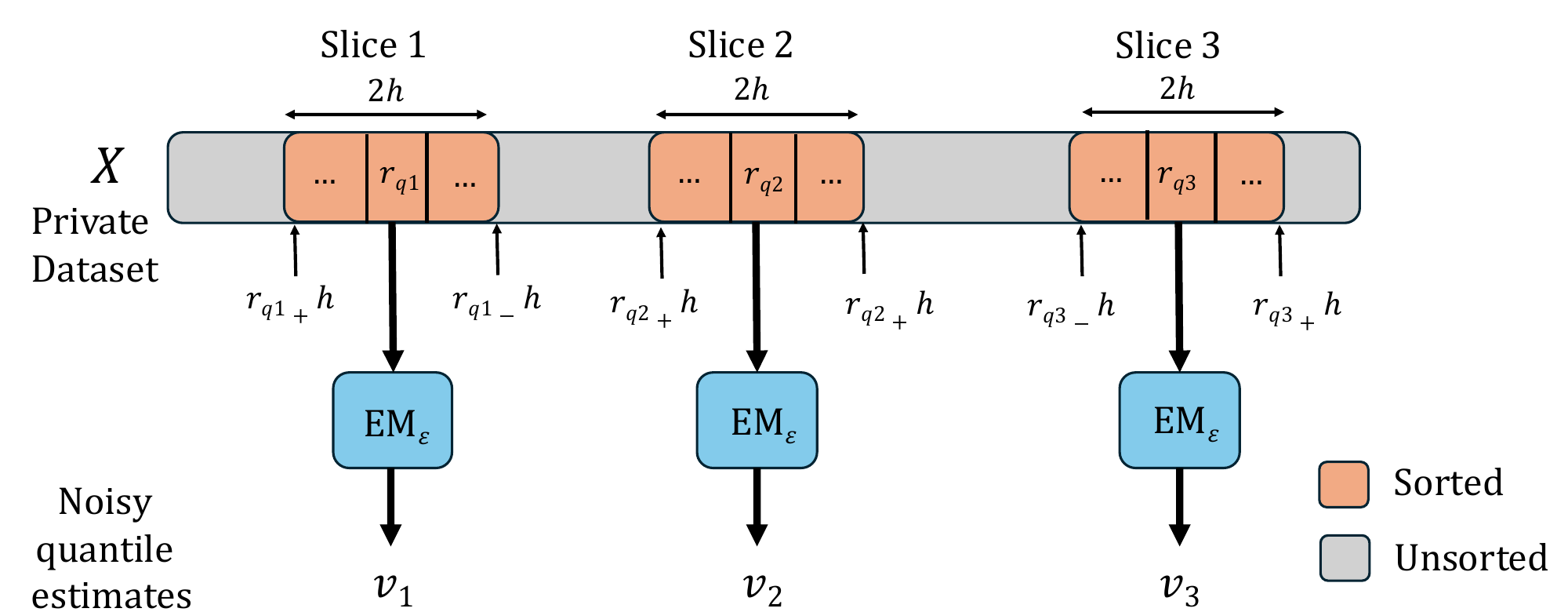} 
\caption{Overview of \SEM~for three quantiles $Q=\{q_1,q_2,q_3\}$. $r_{q_1}$ denotes the element with rank in $\lfloor q_i n\rfloor$ in the dataset $\X \in \B^n$.}

\label{fig:sEM}
\end{figure}
\begin{figure*}
\centering
\includegraphics[width=0.9\linewidth]{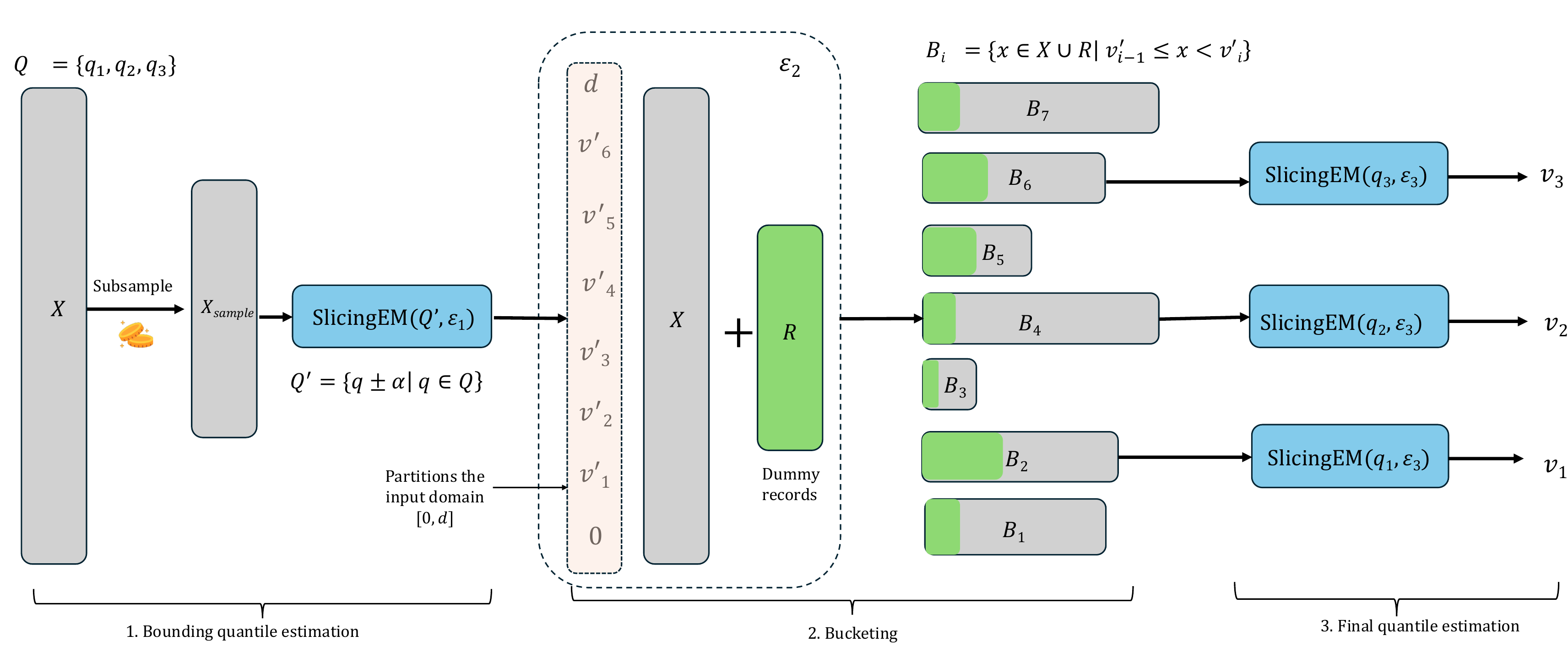}
\caption{Overview of the ideal functionality of \name. \name~securely implements this in the two-server model.}

\label{fig:name}
\end{figure*}
\name~takes a different approach. Instead of perturbing the quantiles, we introduce a novel method that performs a \textit{random shift} of the slicing boundaries to ensure compatibility with parallel composition. This strategy avoids making the quantiles private and eliminates the need for costly secure floating-point operations. Furthermore, we develop an efficient technique to implement this shifting under MPC—making our approach a win-win in terms of both utility and efficiency. We refer to this method as \SEM.
 \\\noindent\textbf{3. Improve scalability with dataset size $n$.} \name~tackles scalability with $n$ with two complementary techniques. First, we observe that for the slicing-based approach it suffices to perform a \textit{partial sort}—only the elements within each of the $m$ slices need to be sorted, reducing the number of secure comparisons from $O(n\log n)$ to $O(n\log m)$ (Thm. \ref{coro:partial-sort}; see Fig. \ref{fig:sEM}). 
 Second, \name~further improves scalability by \textit{reducing the input size for each partial sort}.   The high-level idea is to isolate the individual slices via a two-phase quantile estimation strategy. In the first phase, for every target quantile $q \in Q$, \name~allocates a small portion of the privacy budget to estimate a pair of \textit{bounding quantiles} $q \pm \alpha$. These bounds are used to filter the input dataset into a bucket: \[ \Bu_q = \{x \in \X \mid \a_{q-\alpha} \leq x < \a_{q+\alpha} \}.\] 
In other words, this bucket contains all (unsorted) data points corresponding to the slice relevant for estimating $q$. Finally, in the second phase, \name~applies  \SEM~to each $\Bu_q$ to estimate the target quantile $q$—now using a much smaller subset of the input. Translating this idea into a secure implementation under MPC introduces two key challenges:
\squishl
    \item First, this approach might appear to introduce a circular dependency since every queried quantile now requires  additional estimates for $q\pm \alpha$. However, the key insight is that these estimates need not be highly accurate—they are only used to filter out irrelevant data. To reduce overhead, \name~computes these bounds using \SEM~on a small \textit{subsample} $\Xsam$ of the dataset, yielding sufficiently accurate yet much cheaper estimates. 
    \item Second, securely populating buckets under MPC is non-trivial. Naively padding each bucket with dummy records to size $n$ to hide its true cardinality defeats the efficiency gains of bucketing. \name~mitigates this by leaking a differentially private estimate of the bucket sizes. This allows \name~to add only the minimal number of dummy records needed to satisfy DP—far fewer than with full padding. To further optimize utility, we introduce a new noise distribution for sampling the dummy records (see Sec.\ref{sec:IF:bucketing}).  We refer to this step as \Buck.
\squishe

 Thus, by leaking a some additional intermediate statistics—(1) low-accuracy estimates for bounding quantiles  $Q' = \{q \pm \alpha \mid q \in Q\}$, (2) noisy bucket sizes—\name~substantially reduces computational overhead while preserving an end-to-end DP guarantee. In particular, \name~ensures that in both phases, \SEM~is applied to significantly smaller inputs (the subsample $\Xsam$ and the filtered buckets $\Bu_q$), thereby improving scalability with $n$.
\\\noindent\textbf{4. Avoid Secure Floating-Point Exponentiation.} The final challenge is securely computing the selection probabilities for the intervals in the buckets. Specifically, we need to compute $\exp(\epsilon\u_q(X,\I)/ 2)$, which entails knowing the rank of the interval $\I$ (or equivalently, any point $t \in \I$) w.r.t to the \textit{entire} dataset $\X$.  At first glance, this seems to necessitate computing the rank securely over the entire dataset and then performing a secure exponentiation. \name~circumvents this as follows.  As part of the bucketing strategy described earlier, \name~releases a differentially private estimate of the size of each bucket.  This is, in fact, equivalent to revealing the differentially private rank of the edge points of each bucket, which allows us to compute the utility $\u_q(\X,\I)$ and, subsequently,  exponentiate it \textit{in the clear}. As a result, \name~completely avoids expensive secure floating-point exponentiations.

\section{\name: Ideal Functionalities}
In this section, we describe the ideal functionalities of the key subroutines that make up \name. 
We adopt a \textit{bottom-up} approach: Sec. \ref{sec:IF:slicing} begins with the ideal functionality for \SEM, followed by the functionality for $\Buck$ in Sec. \ref{sec:IF:bucketing}. Finally, in Sec.~\ref{sec:everything}, we integrate these components and present the full ideal functionality of \name.

These functionalities explicitly model interactions with an adversary $\calA$, denoted by $\ad$ in the pseudocode. Both the functionality and the adversary contribute noise: the functionality samples noise from the correct distribution, while $\calA$—depending on whether it is passive or active—either follows the same distribution or injects arbitrary noise. This mirrors our two-server MPC protocols, where both servers are required to contribute noise, but one of the servers may be corrupted and controlled by $\calA$. Additionally, $\calA$ is allowed to abort the functionality at any point.  Consequently, our privacy  and utility analyses extend to the MPC protocols instantiated by \name, which securely implements these functionalities in the two-server setting.

\subsection{\SEM}\label{sec:IF:slicing}
Inspired by a recent work~\cite{imola2025differentiallyprivatequantilessmaller}, \SEM~is a mechanism for estimating multiple quantiles under (central) DP. 
Given a dataset $X$ and a set of target quantiles $Q = {q_1, \ldots, q_m} \in (0,1)^m$, the key idea is to construct $m$ slices of the sorted dataset—each centered around the rank corresponding to a quantile—and apply the exponential mechanism independently to each slice. 
Formally, each slice $S_i$ is given by: 
\begin{equation*}
  S_i =\Xs[c_i, d_i],\quad c_i=r_{q_i}-h,\quad d_i = r_{q_i} +h
\end{equation*}
where $r_{q_i}=\lfloor q_i n\rfloor$
is the rank corresponding to quantile $q_i$ and $h\in \mathbb{N}$
is a parameter chosen based on the utility guarantee of the exponential mechanism. This parameter is chosen in such a way that the resulting slices are mutually disjoint.

However, as noted by prior
work~\cite{imola2025differentiallyprivatequantilessmaller,10.1145/3564246.3585148, 10.1109/FOCS.2015.45} a technical challenge arises: to leverage parallel composition,  the slices must remain nearly unchanged—or \textit{stable}—under adjacent datasets $X$ and $X'$. Unfortunately, the deterministic strategy mentioned above for slicing is not stable. To illustrate, consider a neighboring dataset $X'$ obtained by adding a data point in $X$. For some $t \in [m]$, this addition triggers a cascading change in the slices $S'_t, \ldots, S'_m$, causing a naive analysis using standard composition to inflate the error by a factor of $O(m)$.

To mitigate this,~\cite{imola2025differentiallyprivatequantilessmaller} observes that these changes are structured. In particular, for all $i \geq t$, each slice pair \( \{S_i, S'_i\} \) is shifted by exactly one position, i.e., $
 S_i = \Xs'[c_i-1, \d_i-1].$ This structure enables the use of the continual counting noise mechanism, as property (1) of Lemma~\ref{lemma:CC} allows these shifts to be masked.\footnote{The structure of the shifts mirrors the behavior of a partial sum in $\CC$.}  Concretely, \cite{imola2025differentiallyprivatequantilessmaller}~perturbs each target rank $r_{q_i}$ (equivalently, the queried quantile) with noise $\eta_i$ as $\tilde{r}_{q_i} = r_{q_i} + \eta_i\quad \text{where} \quad \boldsymbol{\eta}\gets\CC_\epsilon(m)$
and generates the slices based on them. 
By Lemma~\ref{lemma:CC}, the resulting noise ensures that the slicing remains stable under the addition of a single data point. Intuitively, this allows the final outputs $\EM(\tilde{S}_1), \ldots, \EM(\tilde{S}_m)$\footnote{For brevity, we omit the additional inputs to $\EM$ in this notation.} to be analyzed under parallel composition — incurring the privacy cost of applying $ \EM$ only once. 
A technicality introduced by the continual counting noise is that the input quantiles must be separated by roughly $\frac{2}{n}(w+h+1)$, which is asymptotically larger than the final error guarantee. For the setting we are interested in where $m$ is much smaller than $n$, and especially for quantiles which are equally spaced within $[0,1]$, this assumption is mild and is usually met. 

\begin{mybox2}{Ideal Functionality:  $\FSEM$}
\textbf{Input.}
Private dataset \scalebox{0.9}{$X \in \B^n$}; Queried quantiles \scalebox{0.9}{$Q = \{q_1, \cdots, q_m\} \in (0,1)^m$};  Privacy parameters \scalebox{0.9}{$\epsilon, \delta$}; Accuracy parameter $\beta$.
\\\textbf{Output.} Noisy quantile estimates \scalebox{0.9}{$\Ans=\{\a_1, \cdots, \a_m\}$}
\\
\textbf{Initialization.}
\begin{enumerate}[nosep, label=\arabic*.]
\item Set slicing parameter $h = \frac{12}{\epsilon}\ln (\frac{m|D|}{\beta})$
\item Set cont.~count.~param.~$w = \frac{24}{\epsilon}\log (m)\ln(\frac{2m}{\delta})$
\end{enumerate}
\textbf{Assumption.} All quantiles are $\geq \frac{2}{n}(w + h + 1)$ apart \\
\textbf{Functionality.}
\begin{enumerate}[nosep, label=\arabic*.]
\setItemnumber{3}
\item \textbf{For} $i \in [m]$
    \item \hspace{0.5cm} $c_i = \lfloor nq_i\rfloor-h$,  $\d_i = \lfloor nq_i\rfloor+h$
    \item \textbf{End For}
\item \scalebox{0.7}{$\{\Xs[c_i-w,\d_i+w]\}_{i=1}^m$}\scalebox{0.7}{$\gets\textsf{PartialSort}(X, \{[a_i-w, b_i+w]\}_{i=1}^m)$} \\\textcolor{blue}{$\rhd$} Only ranks in \scalebox{0.8}{$[c_i-w, d_i+w]$} need to be in order
\item Sample $\eta_{1}, \cdots, \eta_{m} \leftarrow \frac{w}{2}+\CC_{\varepsilon/2}(m)$
\item \textbf{If} $\calA$ is passive, 
\item \hspace{0.5cm} Sample $\eta^\calA_{1}, \cdots, \eta^\calA_{m} \leftarrow \frac{w}{2}+\CC_{\varepsilon/2}(m)$
\item \textbf{Else}  
\item \hspace{0.5cm} $\ad$ $\calA$ sends $\eta^\calA_{1}, \cdots, \eta^\calA_{m} \in \Z^m$ or $\textsc{ABORT}$
\item \hspace{0.5cm} \textbf{If} \textsc{ABORT}, then halt
\item \textbf{End If}
\item \textbf{For} $i \in [m]$
\item \hspace{0.5cm} \textbf{If} $\eta^\calA_i \not \in [0, w]$, then halt
\item \textbf{End For}
\item \textbf{For} $i \in [m]$
    \item \hspace{0.5cm} $\Delta_i = \eta_i -\eta^\calA_i$, $c_i'=c_i+\Delta_i$, $\d_i'=\d_i+\Delta_i$
    \item \hspace{0.5cm} $\a_i=\FEM(\Xs[c'_i, \d'_i], 0.5, \frac{\epsilon}{6})$ 
    \item \textbf{End For}
    \item Output $\Ans=\{\a_1, \cdots, \a_m\} $($\ad$ send  $\Ans$ to $\calA$)
\end{enumerate}
\end{mybox2}

\begin{minipage}{\columnwidth}
\captionof{figure}{Ideal Functionality: $\FSEM$}\label{box}
\end{minipage}



    \noindent\textbf{Distinctions from \cite{imola2025differentiallyprivatequantilessmaller}.} While this strategy is suitable in the central model, it is ill-suited to our setting,  since it requires privatizing the ranks and necessitates expensive secure exponentiation.
To make \SEM~amenable to MPC protocols, we introduce two key technical distinctions:
\squishl
\item  The sampled noise must be bounded and non-negative. 
\item In \SEM, the slices are perturbed as:
\begin{gather*} 
\tilde{S}_i=\Xs[c_i +\Delta_i, \d_i+\Delta_i] \qquad 
\Delta_i=\eta_i-\eta^\calA_i\\
\boldsymbol{\eta} \leftarrow \CC_\epsilon(m)\qquad  \eta_i^\mathcal{A} \in \mathbb{Z}^+_{\leq w}
\end{gather*}
Specifically, the randomized shift $\Delta_i$ is the \textit{difference} between two noise samples -- $\eta_i$ is generated from $\CC$ while $\eta^\calA_i$ is contributed to by $\calA$. Here, $w$ is a utility parameter determined by $\CC_\varepsilon$. 
\squishe
These changes let us implement the mechanism under MPC by first computing the original slices $S_i$ using the (public) target ranks $r_{q_i}$ and then \textit{securely shifting} them to generate the perturbed slices $\tilde{S}_i$--we elaborate on this in Sec.~\ref{sec:prot:slicing}. 
\\\noindent\textbf{Privacy Analysis.}
 Proving the privacy of $\FSEM$~entails showing that the view of the adversary $ \textsc{View}^{\calA}_{\FSEM}(X, Q, \epsilon) = (\boldsymbol{\eta}^\calA, \Ans)$ is 
$(\epsilon, \delta)$-DP, where $\boldsymbol{\eta}^{\calA}$
  represents the noise contributed by the adversary and $\Ans$ is the final output of the functionality.
\begin{theorem} The ideal functionality $\FSEM$ satisfies $(\epsilon, \delta e^{\varepsilon})$-DP. \label{thm:privacy:SEM}
\end{theorem}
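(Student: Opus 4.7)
The plan is to condition on the adversary's noise contribution $\boldsymbol{\eta}^\calA$ and then decompose the remaining privacy loss into two independent pieces: the continual-counting shift that masks the rank re-indexing between substitution-neighboring $X$ and $X'$, and the $m$ parallel exponential-mechanism invocations on the shifted slices. First, I would observe that $\boldsymbol{\eta}^\calA$, the halting test in lines 11--16, and the public offsets $c_i = \lfloor nq_i\rfloor - h$, $d_i = \lfloor nq_i\rfloor + h$ all depend only on $Q$, $n$, and $\calA$'s initial view---not on the private data. So conditioning on a fixed non-halting realization of $\boldsymbol{\eta}^\calA$ is sound and reduces the analysis to the joint distribution of $(\boldsymbol{\eta}, \Ans)$ over the honest noise.

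Next, I would construct a coupling of the honest noise that neutralizes the rank re-indexing between $X$ and $X'$. A substitution removes one element at sorted rank $t_1$ and inserts one at sorted rank $t_2$, which shifts the sorted content at positions in $[\min(t_1,t_2),\max(t_1,t_2)]$ by $\pm 1$; correspondingly, there is a shift vector $\mathbf{s}\in\{-1,0,+1\}^m$ on the slice-index offsets such that coupling $\boldsymbol{\eta}':=\boldsymbol{\eta}+\mathbf{s}$ aligns $\Xs[c_i+\Delta_i, d_i+\Delta_i]$ on $X$ with $\Xs'[c_i+\Delta_i', d_i+\Delta_i']$ on $X'$, up to the single substituted user which may appear in at most two slices crossing $t_1$ or $t_2$. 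The assumption that quantiles are $\tfrac{2}{n}(w+h+1)$ apart ensures the shifted slices remain pairwise disjoint, enabling parallel composition downstream. Applying Lemma~\ref{lemma:CC}(1) with parameter $\varepsilon/2$ bounds the CC cost of the shift, while Lemma~\ref{lemma:CC}(2) together with $w=\tfrac{24}{\varepsilon}\log m\,\ln(2m/\delta)$ bounds by $\delta$ the event that the coupling falls outside the effective range $[0,w]$.

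Conditional on matched slices, at most two of the $m$ EM inputs differ by a single element under substitution; parallel composition (Thm.~\ref{thm:parrallel}) over the disjoint slices, together with EM's $\varepsilon/6$-DP per input change, contributes $\varepsilon/3$ to the joint privacy loss. Summing the CC and EM contributions fits within $\varepsilon$ on the main event, and the $e^{\varepsilon}$ factor on $\delta$ arises when bounding the tail probability of coupling failure under $X'$ in terms of that under $X$ via the standard DP-transfer step (or equivalently, via group-privacy conversion from an add/remove-neighboring analysis to the substitution-neighboring claim).

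The main obstacle I anticipate is two-fold. First, showing that the shift pattern $\mathbf{s}$ induced by substitution is properly covered by the "neighboring partial sum vector" clause of Lemma~\ref{lemma:CC} at the intended parameter; this likely requires decomposing substitution into add-then-remove, invoking group privacy to split the CC budget between the two steps, and aggregating the two $\delta$-tail events into the single $\delta e^\varepsilon$ factor in the theorem statement. Second, in the active-adversary case, verifying that $\boldsymbol{\eta}^\calA$ is genuinely data-independent so that conditioning is valid; the structure of lines 9--16 ensures that $\boldsymbol{\eta}^\calA$ is contributed before any data-dependent value is revealed, so this reduction is sound but requires a careful transcript-level argument to rule out subtle correlations between $\calA$'s chosen noise and the underlying dataset.
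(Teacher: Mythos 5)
Your first step---conditioning on the adversary's noise $\boldsymbol{\eta}^\calA$ after observing that it and the abort test are data-independent---is exactly how the paper begins. After that, however, you take a genuinely different route: the paper does \emph{not} re-derive the privacy of the slicing construction. It argues that, conditioned on any non-aborting $\boldsymbol{\eta}^\calA$, $\FSEM$ behaves identically to the $\mathsf{SliceQuantiles}$ algorithm of~\cite{imola2025differentiallyprivatequantilessmaller} run on the shifted quantiles $q_i + \tfrac{w}{2n} - \tfrac{\eta_i^\calA}{n}$ (which still satisfy that algorithm's gap requirement $\tfrac{2}{n}(\tfrac{w}{2}+h+1)$ because $\eta_i^\calA \in [0,w]$), up to total variation distance $\delta/2$ coming from truncating the honest $\CC$ noise; it then invokes the prior work's privacy theorem black-box with $\epsilon_1=\epsilon/4$, $\epsilon_2=\epsilon/6$, giving $(\epsilon,\tfrac{\delta}{2}e^\epsilon)$-DP, and absorbs the TVD term to reach $(\epsilon,\delta e^\epsilon)$. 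Your plan instead opens up that theorem and re-proves it via a coupling of the honest $\CC$ noise, Lemma~\ref{lemma:CC}, and parallel composition. What the paper's reduction buys is that all of the delicate coupling/accounting is inherited; what your route would buy, if completed, is a self-contained proof.

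The gap is that the completed version is precisely what you defer to as the ``main obstacle.'' Concretely: (i) the claim that a single shift vector in $\{-1,0,1\}^m$ aligns all slices so that ``at most two of the $m$ EM inputs differ by a single element'' is asserted, not proved; it requires the full case analysis of windows straddling the two substitution ranks, the $\tfrac{w}{2}$ re-centering of the honest noise, and disjointness of the shifted windows. (ii) Your accounting charges the exponential mechanism $2\cdot\tfrac{\epsilon}{6}$, whereas the analysis the functionality's constants are actually tuned to is $2\epsilon_1+3\epsilon_2$ with $\epsilon_1=\epsilon/4,\ \epsilon_2=\epsilon/6$, i.e.\ \emph{three} EM charges plus the substitution-level $\CC$ cost, summing exactly to $\epsilon$; your $\tfrac{5\epsilon}{6}$ tally leaves slack, but since the correct constant is not established, one cannot conclude the budget closes without redoing the prior work's argument (note also that Lemma~\ref{lemma:CC} as stated covers add/remove-style shifts of partial sums, so the substitution shift needs the two-application/group-privacy step, packaged in the paper as Lemma~\ref{lem:co}). (iii) The truncation of the honest noise to $[0,w]$ (the source of half of the final $\delta e^{\epsilon}$ in the paper, via a TVD argument) is not explicitly handled; you only bound the probability that the coupled noise leaves the valid range. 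As it stands the proposal is a plausible outline whose central step coincides with the cited theorem rather than a proof of it.
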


\noindent\textbf{Utility Analysis.} We show that the utility of $\FSEM$ is the same as that of~\cite{imola2025differentiallyprivatequantilessmaller} formalized as:
\begin{theorem}\label{thm:slice-utility}\cite[Theorem 5.1]{imola2025differentiallyprivatequantilessmaller}
    With probability at least $1-2\beta$, all returned estimates of $\FSEM$ satisfy $\textnormal{err}_{X,q_i}(v_i) \leq \frac{12\ln(|D|m/\beta)}{\epsilon } + \frac{24\log (m) \ln(2m / {\beta})}{\epsilon }$.
\end{theorem}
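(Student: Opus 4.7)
The approach is to reduce the utility of $\FSEM$ to that of the $\SliceEM$ mechanism from~\cite{imola2025differentiallyprivatequantilessmaller}, whose utility is stated in their Theorem~5.1. The only structural difference is that each rank perturbation in $\FSEM$ is a net shift $\Delta_i = \eta_i - \eta_i^\calA$, where $\eta_i - w/2$ is honestly drawn from $\CC_{\epsilon/2}(m)$ while $\eta_i^\calA$ may be adversarial but is forced by the halt check into $[0,w]$. Two high-probability events drive the utility: (i) the honest noise is in the valid range, and (ii) every parallel EM call on its shifted slice is accurate.

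First, I would apply Lemma~\ref{lemma:CC}(2) to the honest contribution: with probability $\geq 1-\beta$, $\|\boldsymbol{\eta} - \tfrac{w}{2}\mathbf{1}\|_\infty \leq \tfrac{6}{\epsilon}\log(m)\log(2m/\beta)$. By the choice $w = \tfrac{24}{\epsilon}\log(m)\ln(2m/\delta)$ (taking $\delta \leq \beta$), every $\eta_i$ lies in $[0,w]$, so the honest-side validation does not trigger a halt. Because $\eta_i^\calA \in [0,w]$ is enforced symmetrically, the net shift obeys $|\Delta_i| \leq w \leq \tfrac{24}{\epsilon}\log(m)\ln(2m/\beta)$. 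The assumption that queried quantiles are at least $\tfrac{2}{n}(w+h+1)$ apart then guarantees that the shifted slices $\Xs[c'_i,d'_i]$ remain pairwise disjoint and of size $2h+1$, so the partial sort produces well-formed inputs for the $m$ parallel $\FEM$ calls.

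Next, I would invoke the standard utility bound of the interval-based exponential mechanism on each slice: $\FEM(\Xs[c'_i,d'_i], 0.5, \epsilon/6)$ runs EM with sensitivity $1$, privacy parameter $\epsilon/6$, and effective output domain of size $|D|$, so with probability $\geq 1 - \beta/m$ the in-slice rank of $v_i$ differs from the target rank $h$ by at most $\tfrac{12\ln(|D|m/\beta)}{\epsilon}$. A union bound over $i \in [m]$ preserves this for all slices simultaneously with probability $\geq 1-\beta$. Since the in-slice median corresponds to global rank $r_{q_i}+\Delta_i$, the rank triangle inequality gives $\err_{X,q_i}(v_i) \leq |\Delta_i| + \tfrac{12\ln(|D|m/\beta)}{\epsilon}$, and a union bound over the two high-probability events yields the claimed $1 - 2\beta$ success probability.

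The main obstacle is confirming that substituting a single $\CC_{\epsilon/2}$ sample with the adversarially-contaminated difference $\eta_i - \eta_i^\calA$ does not break the reference's utility analysis. The key observation is that the halt check forces $\eta_i^\calA$ into the same bounded range as the honest sample, so the worst-case magnitude of $\Delta_i$ matches that of a single continual-counting sample up to a constant factor; all subsequent slice-disjointness, partial-sort, and parallel-composition arguments from~\cite{imola2025differentiallyprivatequantilessmaller} then transfer directly with $\Delta_i$ playing the role of the original shift.
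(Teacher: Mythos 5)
Your proposal is correct and follows essentially the same route as the paper's own (very terse) proof: bound the shift $|\Delta_i|$ using the utility guarantee of $\CC$ together with the range check on $\eta^\calA_i$, bound the per-slice error via the standard exponential-mechanism utility bound with budget $\epsilon/6$ and a union bound over the $m$ slices, and add the two contributions. Two cosmetic slips worth noting: the functionality's halt check applies only to the adversary's noise (the honest $\eta_i$ is never validated, though your high-probability bound on it is still what is needed), and the inequality $w \leq \frac{24}{\epsilon}\log(m)\ln(2m/\beta)$ goes the wrong way when $\delta \leq \beta$ — a $\delta$-versus-$\beta$ conflation that the paper's own appendix proof shares.
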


\subsection{Bucketing}\label{sec:IF:bucketing}

We assume that the target quantiles are sufficiently spaced.
The \Buck~mechanism inputs the private dataset $X$ and the estimates of the bounding quantiles $\Ans'= \{\a_1', \cdots, \a'_{2m}\}$ where $(\a'_{2i-1}, \a'_{2i})$ are the estimates for $q_i\pm \alpha$, where $\alpha$ is an accuracy parameter precisely defined later. Now observe that these estimates essentially  partition the input data domain 
$\B=[0,\d]$ into $2m+1$ buckets (or intervals) of the form $\{[0,\a_1'), (\a_1', \a_2'], \cdots, (\a_{2m}',\d]\}$. 
The goal of the mechanism is to compute a differentially private histogram over these buckets--that is, the (noisy) count of the number of data points in $X$
that fall into each bucket. While a straightforward approach would be to apply the Laplace mechanism  (using a truncated and shifted Laplace distribution to ensure non-negative noise), we instead opt for a noise distribution better suited to our problem based on the following insight.

The prefix sums of the histogram, i.e., the cumulative counts $\sum_{j=1}^{2i-1}|\Bu_j|$ is a noisy estimate for the rank of the bounding quantiles $\a'_{2i-1}$. Now recall, that the bucket $B_{2i}$ is used to estimate the target $q_i$ and the rank information directly affects the utility scores used by \SEM. Therefore, ensuring the accuracy of these cumulative sums is essential for the utility of downstream quantile estimation. To illustrate how the prefix sums change under substitution, consider adjacent datasets $X$ and $X'$, where a data point originally in bucket $\Bu_t$ is replaced by one in $\Bu_k$, for $1 \leq k < t \leq 2m+1$. As a result, the sums up to bucket $\Bu'_k$ remain unchanged, the sums from $\Bu'_{k+1}$ to $\Bu'_t$ each increase by 1, and the sums beyond $\Bu'_t$ remain unaffected. 
This structured shift is precisely the setting in which the $\CC$~mechanism achieves substantial utility gains—a key observation we leverage here. 
\begin{mybox2}{Ideal Functionality: $\FBuc$}
\textbf{Input.} Private dataset \scalebox{0.8}{$X \in \B^n$}; Boundary quantile estimates \scalebox{0.8}{$V'=\{v'_1, \cdots, v'_{2m}\} \in [\B]^{2m}$}; Privacy parameters $\epsilon, \delta$.\\
\textbf{Output.} Noisy histogram over the buckets $\mathbf{cnt}$\\
\textbf{Initialization.}
\begin{enumerate}[nosep, label=\arabic*.]
\item Define $\tau=\lceil\frac{6}{\epsilon} \log(m) \ln(\tfrac{2m}{\delta})\rceil$
\end{enumerate}
\textbf{Functionality.}
\begin{enumerate}[nosep, label=\arabic*.]
\setItemnumber{2}
\item Sample $\gamma_1, \cdots, \gamma_{2m+1} \leftarrow \CCd_\epsilon (2m+1)$
\item $\ad$ Send $\sum_{i=1}^{2m+1}{\gamma_i}$ to $\calA$
\item \textbf{If} $\calA$ is passive
\item \hspace{0.2cm} Sample $\gamma_1, \cdots, \gamma_{2m+1} \leftarrow \CCd_\epsilon(2m+1)$
\item \textbf{Else} 
\item \hspace{0.2cm} $\ad$ $\calA$ sends $\gamma_1, \cdots, \gamma_{2m+1} \in \mathbb{Z}^{2m+1}$ or \textsc{ABORT}
\item \hspace{0.2cm} \textbf{If} \textsc{ABORT}, then halt
\item \textbf{End If}
\item \textbf{For} $i \in [2m+1]$
\item \hspace{0.2cm} \textbf{If} $|2i\tau-\sum_{j=1}^i \gamma^\calA_j|> \tau$ or $\gamma_i^\calA \notin [0, 4\tau]$
\item \hspace{0.8cm} Halt
\item \hspace{0.2cm} \textbf{EndIf}
\item \textbf{End For}
\item \textbf{For} $i \in [2m+1]$
\item \hspace{0.2cm} Form buckets \scalebox{0.9}{$\Bu_i=\{x \in X| h'_{i-1} \leq x < h'_{i}\}$ } \\\textcolor{white}{\hspace{2.5cm}}with $h_0' = 0$ and $h_{2m+1}'=\d$
\item \hspace{0.2cm} $cnt_i=|\Bu_i|+\gamma_i+\gamma^\calA_i$
\item \textbf{End For}
\item Output \scalebox{0.85}{$\mathbf{cnt} =\{cnt_1,\cdots, cnt_{2m+1}\}$} ($\ad$ send $\mathbf{cnt}$ to \scalebox{0.85}{$\calA)$}
\end{enumerate}
\end{mybox2}

 \noindent\begin{minipage}{\columnwidth}
\captionof{figure}{Ideal Functionality: $\FBuc$}\label{fig:IF:Bucketing}
\end{minipage}
However, our setting introduces two challenges that deviate from standard $\CC$: (1) the number of dummy users $\gamma_i$ assigned to each bucket must be a non-negative integer, and (2) their \emph{partial sums} $\sum_{i=1}^j \gamma_i$ must have low error to ensure accurate prefix sums.  We achieve this by setting $\gamma_i = 2\tau + \eta_i - \eta_{i-1}$, where $\eta_i$ is generated from $\CC$, and $\tau$ is a sufficiently large integer that guarantees the difference will be non-negative. The servers then subtract a multiple of $2\tau$ from any partial sum to de-bias the estimate. We refer to this new mechanism as $\CCd$ (for \emph{consecutive differences} of $\CC$; details in App.~\ref{app:cont-count}).

As before, two copies of noise $\gamma_i, \gamma_i^\calA$ are generated, and the adversary $\calA$ may generate their noise arbitrarily. Additionally, $\calA$ is allowed to learn the total amount of noise added by the functionality across all $2m+1$ buckets, since this is the total number of dummy users added.
\\\noindent\textbf{Privacy Analysis.} Since the dataset size $n$ is known, the leaked sum $\sum_{i=1}^{2m+1} \gamma_i$ reveals no more than $\mathbf{cnt}$, so privacy follows from that of $\CCd$.
\begin{theorem} The ideal functionality $\FBuc$ is $(\epsilon,\delta)$-DP. \label{thm:privacy:bucketing}
\end{theorem}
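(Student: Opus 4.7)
The plan is to reduce the privacy of $\FBuc$ to that of the continual-counting noise distribution $\CC_\epsilon$ via Lemma~\ref{lemma:CC}, proceeding in three steps: (i) simplify the adversary's view to the output vector $\mathbf{cnt}$, (ii) show that for substitution-neighboring inputs, the underlying bucket-count vectors yield neighboring partial-sum vectors, and (iii) invoke Lemma~\ref{lemma:CC} together with a tail argument to account for the $\delta$ slack.

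First, I would unpack the adversary's view. It consists of $\calA$'s own noise $\boldsymbol{\gamma}^\calA$ (which $\calA$ produces, possibly adaptively), the leaked sum $\sum_i \gamma_i$, the output $\mathbf{cnt}$, and the abort flag. The sanity checks in Lines~11--13 depend only on $\boldsymbol{\gamma}^\calA$, so the abort event is data-independent and cannot create a data-dependent side channel. Moreover, since $n$ is public and $\sum_i \mathbf{cnt}_i = n + \sum_i \gamma_i + \sum_i \gamma_i^\calA$, the leakage $\sum_i \gamma_i$ is recoverable from the remaining view, so it contributes no additional information beyond $\mathbf{cnt}$. Thus it suffices, fixing any non-aborting realization of $\boldsymbol{\gamma}^\calA$, to show that the distribution of $\mathbf{cnt}$ is $(\epsilon,\delta)$-indistinguishable on neighboring $X, X'$.

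Second, I would translate the problem to partial sums. For substitution-neighboring $X \sim X'$, a single data point moves from one bucket $\Bu_t$ to another $\Bu_k$ (without loss of generality $k<t$); the partial-sum vectors $\mathbf{s},\mathbf{s}'$ defined by $s_i=\sum_{j\leq i}|\Bu_j|$ therefore satisfy $\|\mathbf{s}-\mathbf{s}'\|_\infty \leq 1$, i.e., they are neighboring in the sense of Lemma~\ref{lemma:CC}. By construction of $\CCd$ (App.~\ref{app:cont-count}), one has $\gamma_i = 2\tau + \eta_i - \eta_{i-1}$ with $\boldsymbol{\eta}\sim\CC_\epsilon$, so the partial sums of $\mathbf{cnt}$ equal $s_i + 2i\tau + \eta_i + \sum_{j\leq i}\gamma_j^\calA$. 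Because $\mathbf{cnt}$ and its partial sums are in bijection, and the offsets $2i\tau$ and $\sum_{j\leq i}\gamma_j^\calA$ are independent of $X$, the distribution of $\mathbf{cnt}$ is the distribution of $\mathbf{s}+\boldsymbol{\eta}$ post-processed by $X$-independent shifts.

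Third, I would invoke Lemma~\ref{lemma:CC}(1), which yields $(\epsilon,0)$-indistinguishability of $\mathbf{s}+\boldsymbol{\eta}$ and $\mathbf{s}'+\boldsymbol{\eta}$; post-processing preserves DP. The $\delta$ term arises from Lemma~\ref{lemma:CC}(2): with $\tau=\lceil \tfrac{6}{\epsilon}\log(m)\ln(2m/\delta)\rceil$ chosen in the initialization, the tail bound gives $\Pr[\|\boldsymbol{\eta}\|_\infty \geq \tau]\leq \delta$, which is exactly the failure event under which the $\CCd$ samples could violate the non-negativity structure used in the reduction; this event is absorbed into $\delta$.

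The main obstacle will be the careful treatment of $\calA$'s adaptive choice of $\boldsymbol{\gamma}^\calA$ after seeing the leakage $\sum_i \gamma_i$. Since $\sum_i \gamma_i = 2(2m+1)\tau + \eta_{2m+1}-\eta_0$ depends only on the honest noise and not on $X$, conditioning $\calA$'s response on it is adaptive post-processing that does not interact with $X$; formally, I would condition on a realization of $\sum_i \gamma_i$, apply Lemma~\ref{lemma:CC} to the conditional distribution of $\boldsymbol{\eta}$, and then average over the leakage. A second care point is showing that the halt-checks on $\boldsymbol{\gamma}^\calA$ (the range constraint and the partial-sum closeness check $|2i\tau-\sum_{j\leq i}\gamma_j^\calA|\leq\tau$) are purely functions of $\boldsymbol{\gamma}^\calA$, so aborts cannot leak information about $X$.
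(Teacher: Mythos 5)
Your proposal is correct and follows essentially the same route as the paper's proof: argue that the abort decision and the leaked sum are data-independent (the sum being recoverable from $\mathbf{cnt}$, $\boldsymbol{\gamma}^\calA$, and the public $n$), reduce the entire view to an $X$-independent post-processing of the noisy counts $\mathbf{b}+\boldsymbol{\gamma}$, and invoke the privacy of the consecutive-differences noise --- which the paper packages as Lemma~\ref{lem:cd} and you inline via the partial-sum/contiguous-shift reduction to $\CC$, with the $\delta$ coming from truncation. One small caution: rather than ``applying Lemma~\ref{lemma:CC} to the conditional distribution of $\boldsymbol{\eta}$ given the leak'' (the lemma speaks of the unconditional distribution), rely on your own step (i): since the leak is determined by the rest of the view, $\boldsymbol{\gamma}^\calA$ is simply an $X$-independent randomized post-processing of $\mathbf{b}+\boldsymbol{\gamma}$, which is exactly how the paper formalizes the adaptivity.
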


\noindent\textbf{Utility Analysis.}
By the utility guarantees of $\CCd$, $\boldsymbol{\gamma}$ satisfies a linear function: $|2j\tau-\sum_{i=1}^j \gamma^\calA_i|> \tau$ where $\tau$ is a parameter determined by the noise distribution.
Hence, the functionality aborts if the $\calA's$ contribution doesn't satisfy this. This enables us to give the following utility theorem even when $\calA$ is active.
\begin{theorem}\label{thm:bucket-util}
    If $\FBuc$ does not halt, then the returned vector of counts satisfies, for all $1 \leq j \leq m$, 
    \begin{gather*}
        \left \lvert - 4\tau j + \sum_{i=1}^j cnt_i - \sum_{i=1}^j \lvert B_i \rvert \right \rvert \leq \frac{12}{\epsilon} \log (m) \log (\tfrac{2m}{\beta}),
    \end{gather*}
    and $cnt_i - |B_i| \in [0, 8\tau]$. Further, assuming both players follow protocol, $\FBuc$ halts with probability at most $1-2\beta$.
\end{theorem}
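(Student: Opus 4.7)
The plan is to unfold $cnt_i = |B_i| + \gamma_i + \gamma_i^{\calA}$ and exploit the telescoping structure of the $\CCd$ distribution together with the non-halting range checks to control the partial-sum errors. Recall that a sample from $\CCd_\epsilon(2m+1)$ has the form $\gamma_i = 2\tau + \eta_i - \eta_{i-1}$ where $\boldsymbol{\eta} \sim \CC_\epsilon(2m+1)$ and $\eta_0 = 0$ (this is the construction described in Sec.~\ref{sec:IF:bucketing}), so partial sums telescope as $\sum_{i=1}^{j}\gamma_i = 2j\tau + \eta_j$. The honest server's contribution obeys this deterministically, while the adversary's contribution is constrained by the explicit sanity check that is enforced before any output is produced.

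First I would derive the additive-error bound. By unfolding,
\begin{equation*}
\sum_{i=1}^{j} cnt_i - \sum_{i=1}^{j} |B_i| \;=\; \sum_{i=1}^{j}\gamma_i + \sum_{i=1}^{j}\gamma_i^{\calA} \;=\; 2j\tau + \eta_j + \sum_{i=1}^{j}\gamma_i^{\calA}.
\end{equation*}
The non-halting condition forces $\bigl|\sum_{i=1}^{j}\gamma_i^{\calA} - 2j\tau\bigr| \leq \tau$, so we may write $\sum_{i=1}^{j}\gamma_i^{\calA} = 2j\tau + \Delta_j$ with $|\Delta_j|\leq\tau$. Substituting and rearranging gives
\begin{equation*}
\Bigl|\sum_{i=1}^{j} cnt_i - \sum_{i=1}^{j} |B_i| - 4j\tau\Bigr| \;\leq\; |\eta_j| + \tau.
\end{equation*}
By Lemma~\ref{lemma:CC}, $|\eta_j| \leq \tfrac{3}{\epsilon}\log(2m+1)\log(2(2m+1)/\beta)$ with probability at least $1-\beta$. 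Combined with $\tau = \lceil \tfrac{6}{\epsilon}\log(m)\ln(2m/\delta) \rceil$ and absorbing lower-order factors into the logarithms, this yields the claimed bound $\tfrac{12}{\epsilon}\log(m)\log(2m/\beta)$.

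Next I would handle the per-bucket statement $cnt_i - |B_i| \in [0, 8\tau]$. It suffices to argue that each individual noise term lies in $[0, 4\tau]$. The adversary's share is directly constrained to $[0,4\tau]$ by the explicit range check in line~11. For the honest share $\gamma_i = 2\tau + \eta_i - \eta_{i-1}$, membership in $[0,4\tau]$ is equivalent to $|\eta_i - \eta_{i-1}|\leq 2\tau$, which holds whenever $\|\boldsymbol{\eta}\|_\infty \leq \tau$; by Lemma~\ref{lemma:CC} and the choice of $\tau$, this occurs with probability at least $1-\beta$. Finally, the halting statement in the honest-adversary case follows from a union bound: the $\CC$ tail fails for either the functionality's noise or the (honestly-sampled) adversary's noise with probability at most $\beta$ each, for a total abort probability of at most $2\beta$.

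The main bookkeeping obstacle will be reconciling the two slightly different logarithmic factors that appear, namely the raw tail bound $\log(2m+1)\log(2(2m+1)/\beta)$ from Lemma~\ref{lemma:CC} and the cleaner $\log(m)\log(2m/\beta)$ asserted in the statement, as well as matching the failure parameter $\beta$ used in the theorem with the $\delta$ appearing in the initialization of $\tau$. These are constant-factor simplifications that should be handled by slightly inflating $\tau$ or the tail slack; no new technical idea is required beyond the telescoping observation and the enforced range check.
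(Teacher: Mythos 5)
Your proposal is correct and follows essentially the same route as the paper: the paper's (very terse) proof just unfolds $cnt_i=|B_i|+\gamma_i+\gamma_i^\calA$ and invokes the utility guarantee of $\CCd$ (Lemma~\ref{lem:cd}) for the honest noise together with the non-halting range check for the adversary's noise, which is exactly your telescoping computation made explicit. One detail to note: in the paper's actual construction of $\CCd$ (App.~\ref{app:cont-count}), the underlying $\CC$ noise $\eta_i$ is \emph{truncated} before taking consecutive differences, so for the honest share both $\gamma_i\in[0,4\tau]$ and the deviation of $\sum_{i\le j}\gamma_i$ from $2j\tau$ are controlled deterministically; under your untruncated reading, the containment $cnt_i-|B_i|\in[0,8\tau]$ becomes a probability-$(1-\beta)$ event rather than a consequence of non-halting alone, which is slightly weaker than the theorem's conditional phrasing (your remaining $\beta$-versus-$\delta$ and log-factor slack is the same bookkeeping the paper itself glosses over).
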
 
Thus, the noisy partial sums $\sum_{i=1}^j cnt_i$, closely match the partial sums $\sum_{i=1}^j |B_i|$ required to re-normalize the quantiles.
\subsection{Putting it all together}\label{sec:everything}
We are now ready to describe the full functionality of \name.
Recall that \name~performs quantile estimation in two phases to improve scalability with respect to the dataset size $n$. Given a set of target quantiles $Q = {q_1, \ldots, q_m}$, the goal of the first phase is to estimate a set of bounding quantiles $Q'=\{q_i \pm \alpha| q_i \in Q\}$, for some accuracy parameter $\alpha \in (0,1)$. These bounding quantiles are used only to efficiently \textit{isolate} small, individual slices of $X$—that is, data points with ranks in a neighborhood around each $q_i$—which are relevant for estimating the target quantiles in the second phase. Hence, these auxiliary estimates need not be too accurate.
To reduce computational overhead, this first phase is run on a subsample $\Xsam \subset X$ of size $k \ll n$, drawn uniformly at random from the full dataset $X$ of size $n$. The auxiliary estimates are computed via
$V' = \FSEM(\Xsam, Q', \epsilon_1)\footnote{We omit some input parameters for notational brevity.}$
using a portion $\epsilon_1$ of the total privacy budget. The parameter $\alpha$ is chosen to account for the sampling error introduced by working on $\Xsam$ instead of $X$ as well as error of $\FSEM$.


The auxiliary estimates partition the input domain into $2m + 1$ buckets. The functionality then assigns each data point in $X$ to its corresponding bucket and invokes $\FBuc$ to obtain a noisy histogram $\mathbf{cnt}$ over these buckets, consuming a privacy budget of $\epsilon_2$. In the actual secure protocol, this histogram is computed by adding dummy records $R$, which is accounted for in Step 9 of Fig.~\ref{fig:IF:name}. In the final phase, the target quantiles are estimated from the individual buckets $\Bu_{2i}=\{x\in X\cup R| \a'_{2i-1} \leq x < \a'_{2i}\}$ using the final privacy budget $\epsilon_3$. Since the ranks of elements in $\Bu_{2i}$ do not correspond to the ranks in the entire dataset, the target quantiles must be \textit{re-normalized} to ensure correctness. This is done using prefix sums of the noisy histogram $\mathbf{cnt}$, which estimate the (noisy) global rank of the bounding quantile $\a'_{2i-1}$. The re-normalized quantile is then given by:
\[\hat{q}_i =  \min\Bigg\{1, \max\bigg\{0, \bigg\lfloor \frac{q_i n +2i\tau- \sum_{j=1}^{2i-1}cnt_i}{cnt_{2i}}\bigg\rfloor\bigg\}\Bigg\} \]
where $\tau$ is an accuracy parameter determined by $\CC$. Again, since the quantity of interest is prefix sums, using $\CC$ ensures the highest accuracy for the prefix sum estimates.
\begin{mybox2}{Ideal Functionality: $\F_\name$}
\textbf{Input.} 
Private dataset $X \in \B^n$; Queried quantiles $Q = \{q_1, \cdots, q_m\} \in (0,1)^m$; Privacy budgets $\varepsilon_1$ for first quantile estimation, $\varepsilon_{2}$ for bucketing, and $\varepsilon_{3}$ for final quantile estimation; Privacy parameter $\delta$, accuracy parameter $\beta$. 
\\\textbf{Output.} Noisy quantile estimates \scalebox{0.9}{$V=\{\a_1, \cdots, \a_m\}$}\\
\textbf{Assumption.} 
Quantiles are spaced at least $\frac{24}{n \epsilon_3} \ln(\frac{d m}{\beta}) + \frac{48}{n \epsilon_3}\log(m) \ln(\frac{2m}{\delta})$ apart.\\
\textbf{Initialization.}
\begin{enumerate}[nosep, label=\arabic*.] 
\item Define $k = (nm)^{2/3} (2 \ln(\frac{2}{\beta}))^{1/3}$\footnote{The exact value of $k$ is defined in Theorem~\ref{thm:runtime}.}, $\alpha_1 = \frac{12 \ln(|D|m/\beta)}{k\epsilon_1} + \frac{24 \log(m) \ln(2m/\beta)}{k\epsilon_1}$, $\alpha_2 = \sqrt{\frac{\ln(2/\beta)}{2k}}$, $\alpha = \alpha_1 + \alpha_2$, and $\tau=\frac{6}{\epsilon_2} \log(m) \ln(\tfrac{2m}{\delta})$.
\item Merge quantiles closer than $4\alpha_1 + 2\alpha_2$ into sets; obtain quantile sets $Q_1, \ldots, Q_m$ partitioning $Q$.
\end{enumerate}
\textbf{Functionality.}
\begin{enumerate}[nosep, label=\arabic*.]
\setItemnumber{3}
\item $\Xsam \sim$ \textsf{Subsample}$(X, k)$ 
\item $Q'=\bigcup_{i=1}^m\{\min_{q \in Q_i}(q-\alpha), \max_{q \in Q_i}(q + \alpha) \}$
\item $V'=\FSEM(\Xsam, Q', \epsilon_{1}, \delta)$
\item cnt = $\FBuc(X, V', \epsilon_2, \delta)$
\item \textbf{For} $i \in [2m+1]$
\item \hspace{0.2cm} Form bucket $\Bu_i = \{x \in X| \a'_{i-1} \leq x < \a'_{i}\}$
\item \hspace{0.2cm} Add $cnt_i-|\Bu_i|$ dummy records in $\Bu_i$ with value $\a'_{i-1}$
\item \textbf{End For} 
\item \textbf{For} $i \in [m]$
\item \hspace{0.2cm} $\hat{Q}_i =  \underset{q \in Q_i}{\bigcup}\{\min\{1, \frac{\max\{0, q n + 8 \tau i - \sum_{j=1}^{2i-1}cnt_j\}}{cnt_{2i}}\}\}$
\item \hspace{0.2cm} $V_i=\FSEM(\Bu_{2i},\hat{Q}_i,\epsilon_{3}, \delta)$
\item \textbf{End For} 
\item Output $V=V_1 \cup \cdots \cup V_m$ ($\ad$ send $V$ to $\calA$)
\end{enumerate}
\end{mybox2}
 \noindent\begin{minipage}{\columnwidth}
\captionof{figure}{Ideal Functionality: $\F_{\name}$}\label{fig:IF:name}
\end{minipage}
\\\noindent\textbf{Note.} $\F_\name$~invokes $\FSEM$~twice. In the first phase, a single instance of $\FSEM$ estimates all $2m$ bounding quantiles from the subsample $\Xsam$. In the second phase, $m$ independent instances of $\FSEM$ are run—one on each bucket $\Bu_{2i}$ to estimate the corresponding target quantile $q_i$. Both phases benefit from parallel composition and operate on smaller datasets—either the subsample or individual buckets—enhancing both utility and scalability with  $n$.

To preserve privacy, the bounding quantile intervals must not overlap, i.e., intervals $(q_i-\alpha, q_i+\alpha)$ and $(q_j-\alpha, q_j+\alpha)$ cannot intersect; otherwise data points could contribute to multiple estimates. To prevent this, quantiles within roughly $\alpha$ of each other are merged into sets $Q_1, \dots, Q_m$. Each set is then processed as a batch: one pair of bounding quantiles define a shared bucket, and the final call to $\FSEM$ estimates all quantiles in that set from the same bucket.
\\\noindent\textbf{Privacy Analysis.}
Privacy of $\F_{\name}$ follows directly from adaptive composition~\cite{dwork2010differential}:
\begin{theorem}
    Assuming $\varepsilon_1,\varepsilon_2,\varepsilon_3 = O(1)$ the functionality $\F_{\name}$ satisfies $( \epsilon_1 + \epsilon_2 + \epsilon_3, O(\delta))$-DP.\label{thm:name}
\end{theorem}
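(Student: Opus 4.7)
The plan is to decompose $\F_{\name}$ into its three differentially private sub-releases and chain them via adaptive basic composition~\cite{dwork2010differential}. I would proceed in the temporal order of the functionality. First, phase 1 releases $V' = \FSEM(\Xsam, Q', \epsilon_1, \delta)$. Theorem~\ref{thm:privacy:SEM} establishes $(\epsilon_1, \delta e^{\epsilon_1})$-DP with respect to $\Xsam$; since $\Xsam$ is drawn by uniform subsampling without replacement from $X$, a single substitution in $X$ induces at most one substitution in $\Xsam$, so the same guarantee transfers to $X$ (no subsampling amplification is needed, and worst-case the views of $\Xsam$ are identical). Second, conditioned on $V'$, the release $\mathbf{cnt} = \FBuc(X, V', \epsilon_2, \delta)$ is $(\epsilon_2, \delta)$-DP with respect to $X$ directly by Theorem~\ref{thm:privacy:bucketing}.

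The third and most delicate piece is phase 3. Given the already-released $V'$ and $\mathbf{cnt}$, the bucket partition of the data domain is fixed, and the number of dummy records inserted in each bucket is determined by $\mathbf{cnt}$; hence the outputs $(V_1, \ldots, V_m)$ are produced by independent applications of $\FSEM$ to the \emph{disjoint} buckets $\Bu_{2i}$. Disjointness is crucial and is guaranteed by construction: the initialization merges any target quantiles within $4\alpha_1 + 2\alpha_2$ of each other into a common set $Q_i$, so the bounding intervals $[\a'_{2i-1}, \a'_{2i}]$ are pairwise non-overlapping. I would then invoke the parallel composition theorem together with the per-bucket guarantee of Theorem~\ref{thm:privacy:SEM} to conclude that this phase contributes $(\epsilon_3, O(\delta))$-DP.

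Finally, adaptive basic composition across the three phases yields $(\epsilon_1 + \epsilon_2 + \epsilon_3, O(\delta))$-DP for $\F_{\name}$, where the $\delta e^{\epsilon_i}$ blow-ups from each $\FSEM$ invocation are absorbed into $O(\delta)$ under the assumption $\epsilon_1, \epsilon_2, \epsilon_3 = O(1)$. The hardest part will be the parallel composition step in phase 3: under the substitution neighboring relation, swapping $x$ for $x'$ in $X$ can, if the two values fall in different intervals, simultaneously alter two buckets. The saving grace is that the dummy padding in Step 9 fixes each bucket's size to $cnt_i$ (which is public), so each affected bucket's input changes by exactly one substitution (a data point swapped with a dummy of value $\a'_{i-1}$); one must verify that parallel composition is then applied in a form that yields $\epsilon_3$ overall rather than $2\epsilon_3$, using the fact that the two changes are on disjoint inputs.
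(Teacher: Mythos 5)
Your proposal follows essentially the same route as the paper's own (very terse) proof: sequential composition of $\FSEM$ on the subsample, $\FBuc$ on $X$, and a parallel composition of $\FSEM$ over the disjoint buckets, with the $\delta e^{\epsilon}$ factors absorbed into $O(\delta)$ using $\epsilon_i = O(1)$. In fact you spell out more than the paper does, correctly flagging the substitution-across-two-buckets subtlety in phase 3 that the paper's one-line appeal to parallel composition leaves implicit.
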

As with $\FSEM$, privacy of $\F_{\name}$ holds even for a malicious server---the noise injected by one honest server is enough to ensure DP.

\noindent\textbf{Utility Analysis.} The utility of $\F_{\name}$ follows from analyzing the noise injected by $\FBuc$ and by the second call to $\FSEM$, which is responsible for the final estimates.

\begin{theorem}\label{thm:main-util}
    For any input $X \in D^n$, with probability at least $1-6\beta$, all quantile estimates $v_1, \ldots, v_m$ by $\F_{\name}$ satisfy
    \[
        \mathsf{err}_{X,q_i}(v_i) \leq O\left(\frac{\log(\frac{|D|m}\beta)}{ \epsilon_3} + \frac{\log (m) \log (\frac{m}{\beta})}{ \min(\epsilon_2, \epsilon_3)}\right).
    \]
\end{theorem}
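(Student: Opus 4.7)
My plan is to decompose the end-to-end error of $\F_{\name}$ into contributions from the three sub-routines (first $\FSEM$ on the subsample, $\FBuc$, and the second $\FSEM$ on each bucket) and combine them via a union bound. I would define four good events: (i) $E_1$: the first $\FSEM$ call on $\Xsam$ satisfies the guarantee of Thm.~\ref{thm:slice-utility} with rank error at most $\alpha_1 k$ in $\Xsam$, failing with probability at most $2\beta$; (ii) $E_2$: the subsample is representative in the sense that for each of the $2m$ boundary quantiles, ranks in $\Xsam$ translate to ranks in $X$ with additive error at most $\alpha_2 n$, failing with probability at most $\beta$ via Hoeffding plus a union bound; (iii) $E_3$: $\FBuc$ does not halt and achieves its partial-sum guarantee from Thm.~\ref{thm:bucket-util}, failing with probability at most $2\beta$; (iv) $E_4$: all $m$ independent second-phase $\FSEM$ calls simultaneously satisfy Thm.~\ref{thm:slice-utility}, obtained by invoking each call with failure parameter $\beta/(2m)$ and union-bounding, total failure probability at most $\beta$. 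The extra $\log m$ inside the logarithm can be absorbed into the existing $\log(|D|m/\beta)$ and $\log(m/\beta)$ terms. Summing yields total failure probability at most $6\beta$.

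Under $E_1 \cap E_2$ I would first establish a coverage claim: the target $x_{(r_{q_i})}$ lies strictly between $\a'_{2i-1}$ and $\a'_{2i}$, hence in bucket $\Bu_{2i}$. Since $\FSEM$ places the $\Xsam$-rank of $\a'_{2i-1}$ within $\alpha_1 k$ of $(q_i - \alpha)k$, and since $\Xsam$-ranks convert to $X$-ranks with additive error $\alpha_2 n$, the $X$-rank of $\a'_{2i-1}$ lies in $[(q_i - 2\alpha) n, q_i n]$ by the choice $\alpha = \alpha_1 + \alpha_2$; an analogous bound holds for $\a'_{2i}$. The merging of quantiles into groups $Q_1, \ldots, Q_m$ (when boundary intervals would otherwise overlap) and the explicit spacing assumption ensure coverage is consistent across all $m$ target quantiles.

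The main technical step is reconciling the re-normalization formula with the inserted dummy records and the bias of $\CCd$. After $cnt_{2i}-|\Bu_{2i}|$ dummies of value $\a'_{2i-1}$ are added, for any $v$ strictly inside the bucket the local rank satisfies $\mathrm{localrank}(v) = \mathrm{globalrank}(v) - \sum_{j=1}^{2i-1}\lvert \Bu_j\rvert + (cnt_{2i} - \lvert \Bu_{2i}\rvert)$. Using Thm.~\ref{thm:bucket-util}, namely $\bigl\lvert -4\tau(2i-1) + \sum_{j=1}^{2i-1} cnt_j - \sum_{j=1}^{2i-1}\lvert \Bu_j\rvert \bigr\rvert \leq \frac{12}{\epsilon_2}\log(m)\log(2m/\beta)$, together with $cnt_{2i}-\lvert \Bu_{2i}\rvert \in [0, 8\tau]$, I would show that the re-normalized target local rank $\hat q_i \cdot cnt_{2i}$ differs from the ``correct'' target local rank (the one whose global-rank image is $q_i n$) by $O(\tau) = O(\log(m)\log(m/\beta)/\epsilon_2)$. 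The $+8\tau i$ term in the re-normalization is precisely the offset needed to neutralize the $\CCd$ bias; carefully tracking constants both yields the $O(\tau)$ residual and verifies that the clipping to $[0,1]$ in $\hat q_i$ is inactive under the good events, so it contributes no additional distortion.

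Finally, under $E_4$, Thm.~\ref{thm:slice-utility} applied to the second $\FSEM$ call on $\Bu_{2i}$ with target $\hat Q_i$ gives a local-rank error of $O(\log(|D|m/\beta)/\epsilon_3 + \log(m)\log(m/\beta)/\epsilon_3)$. Adding the $O(\log(m)\log(m/\beta)/\epsilon_2)$ re-normalization residual and converting back to global rank via the affine shift above (which introduces no further error) yields the claimed bound $O\bigl(\log(|D|m/\beta)/\epsilon_3 + \log(m)\log(m/\beta)/\min(\epsilon_2,\epsilon_3)\bigr)$. The hardest part of the argument is the re-normalization bookkeeping: one must simultaneously track the dummy records sitting at each bucket's left boundary, the $\CCd$ bias correction, and the translation of $\FSEM$'s error guarantee between the ``local rank in $\Bu_{2i}$ after dummies'' and the ``global rank in $X$.'' It is also worth emphasizing that $\epsilon_1$ does not appear in the bound, reflecting that the first-phase estimates are used only for coverage and not for accuracy in the final estimate.
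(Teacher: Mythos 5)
Your proposal follows essentially the same route as the paper's proof: a coverage step showing each target quantile lands in its bucket (the paper's Lemma~\ref{lem:initial-buckets}, combining the first-phase $\FSEM$ guarantee of Thm.~\ref{thm:slice-utility} with subsample concentration), the re-normalization bookkeeping via the $\FBuc$ partial-sum guarantee of Thm.~\ref{thm:bucket-util} together with the $8\tau i$ bias correction and the dummy records at the left bucket boundary, the second-phase $\FSEM$ error bound, and a union bound yielding $1-6\beta$. The only real deviation is your event $E_2$: you apply Hoeffding plus a union bound at the $2m$ boundary estimates, but those points are functions of the subsample, so pointwise Hoeffding is not directly valid there—the paper instead uses the uniform DKW inequality (Lemma~\ref{lem:dkw}), which gives exactly the $\alpha_2=\sqrt{\ln(2/\beta)/(2k)}$ used in the functionality; this is a minor, easily repaired imprecision that does not change the structure or conclusion of the argument.
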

We note that $\epsilon_1$ does not impact utility. Intuitively, the initial subsampling step is simply an alternative method for forming slices in the final quantile computation, chosen to reduce the number of secure comparisons. Thus, $\epsilon_1$ influences only runtime (see Sec.\ref{sec:run}). Guidelines for setting hyperparameters, including $\epsilon_1$, are provided in App.~\ref{sec:set-hyperp}.

\noindent\textbf{Quantile Gap Assumption.} 
$\F_{\name}$ has roughly the same gap requirement as $\FSEM$; both assumptions are mild and work out to $\leq 0.01$ on datasets with $\approx 10^6$ elements. 


\section{\name: Protocols}\label{sec:protocols}
In this section, we describe the protocols that implementation the above discussed ideal functionalities under MPC. As before, we start with a bottom up approach.

\subsection{Secure SlicingEM }\label{sec:prot:slicing}
\subsubsection{Slicing}
As described in Sec.~\ref{sec:IF:slicing},  each slice is a contiguous segment of the sorted dataset of the form $S_i = \Xs[c_i, d_i]$ and they are shifted by $\Delta_i = \eta^0_i - \eta^1_i$, where $\eta_0, \eta_1 \in \mathbb{Z}^{+}_{<w}$ are noise samples.
\footnote{At least one of $\eta_0$ or $\eta_1$ is drawn from the $\CC$ mechanism, but since this detail is not essential to the MPC implementation, we omit it here.}
We first, discuss a cleartext algorithm for implementing the above shift and then outline how to translate this to an MPC protocol.  
Consider an extended slice of size $2(h+w)$, defined as $\bar{S}_i=\Xs[c_i-w,d_i+w]$, and perform the following operations:
\squishl
\item Sample $\eta_0 \in \mathbb{Z}^+_{<w}$ and add $d$ to the first $\eta_0$ elements of $\bar{S}_i$. 
\item Sample  $\eta_1 \in \mathbb{Z}^+_{<w}$ and subtracts $d$ from the last $\eta_1$ elements of $\bar{S}_i$. 
\item Sort the slice perturbed as above; let   $\hat{S}_i$ denote the resulting slice.
\squishe

\begin{mybox}{Protocol: Secure SlicingEM $\Pi_{\SEM}$}
\textbf{Input:} 
Secret-shared private dataset $\langle X \rangle$ where $X\in \B^n$; Queried quantiles $Q = \{q_1, \cdots, q_m\} \in (0,1)^m$; Privacy parameter $\epsilon, \delta$;
  \\       \textbf{Output:} Private quantile estimates \scalebox{0.9}{$V=\{v_1, \cdots, v_m\}$}
\\\textbf{Initialization:}
\begin{enumerate}[nosep, label=\arabic*.] 
\item Set slicing parameter $h = \frac{12}{\epsilon}\ln (\frac{m|D|}\beta)$
\item Set $\CC$ parameter \scalebox{0.9}{$w = \frac{24}{\epsilon}\log (m)\log(\frac{m}{\delta})$}\end{enumerate}
\textbf{Protocol:}
 Each server $\Ser_b$   \\
\textcolor{blue}{ \hspace{3.5cm}$\rhd$} Masking array generation
 \begin{enumerate}[nosep, label=\arabic*.] 
 \setItemnumber{3}
        \item  $\eta^b_1, \ldots, \eta^b_m \gets \CC(\frac{\epsilon}{2}, m)$ 
        \item Generates an array of size $mw$ such that
        \begin{gather*}  L_b[i] = \begin{cases} (-1)^b\d, & \text{ if } i \leq (\lceil \frac{i}{w} \rceil-1)w + \eta^b_{\lceil \frac{i}{w} \rceil}\\
        0, & \text{otherwise}
        \end{cases}
        \end{gather*}
        \item $\langle L_b\rangle$ =\textsf{Share}$(L_b)$
        \item \textbf{For} $j \in \{0,1\}$ 
        \item \hspace{0.07cm}\textbf{For} $i \in [mw]$

        \item \hspace{0.25cm} \textbf{If} \scalebox{0.8}{$\Rec\big(\Mult(\langle L_j[i]\rangle - (-1)^j
\d,\langle L_j[i]\rangle) \big)!=0$}, then halt
        \item  \hspace{0.07cm} \textbf{EndFor}
        \item\textbf{EndFor}
        \item $\langle \hat{L}_b\rangle = \F_{\textsf{Sort}\downarrow}(\langle L_b \rangle)$ \textcolor{blue}{$\rhd$} Descending order
\item \textbf{For} $i \in [m]$
    \item \hspace{0.5cm} $c_i = \lfloor nq_i\rfloor-h-w$,  $d_i = \lfloor nq_i\rfloor+h+w$
\item $\langle \hat{X}\rangle\gets\F_\textsf{PartialSort}(\langle X\rangle, \{[c_i, d_i]\}_{i=1}^m)$
\item \textbf{For} $i \in [m]$
\item \hspace{0.2cm}    \textbf{For }$j \in [w]$
   \item $ \hspace{0.4cm}\langle \hat{X}[c_i+j]\rangle=\langle \hat{X}[c_i+j]\rangle +\langle L_0[(i-1)m+j]\rangle $
\item $ \hspace{0.4cm}\langle \hat{X}[d_i-j]\rangle=\langle \hat{X}[d_i-j]\rangle+\langle L_1[(i-1)m+j]\rangle$
\item \hspace{0.2cm}
   \textbf{End For}
 \item \textbf{End For}
   \item \textbf{For } $i \in [m]$
   \item \hspace{0.5cm} $\langle X_i\rangle =\F_\textsf{Sort}( \langle \hat{X}[c_i:d_i]\rangle)$
   \item \hspace{0.5cm} $v_i=\Pi_\textsf{EM}(\langle X_i[w:-w]\rangle, q_i, \frac{\epsilon}{6}) $
\item \textbf{End For}
\item \textbf{Return} $V=\{v_1, \cdots, v_m\}$
\end{enumerate}
\end{mybox}

 \noindent\begin{minipage}{\columnwidth}
\captionof{figure}{Protocol  $\Pi_{\SEM}$}\label{fig:prot:SEM}
\end{minipage}

We claim that $\tilde{S}_i[w:-w]$ gives the desired shifted slice.  Given the input domain $\B=[0,\d]$, adding (or subtracting) $d$ acts as a mask that pushes certain elements outside the original input range. Sorting the perturbed slice then shifts these masked elements toward the edges of the array. We formalize this as: 
\begin{lemma}
The above transformation is equivalent to shifting $S_i$ by  $\Delta_i = \eta^0_i - \eta^1_i$, i.e., $\hat{S}_i[w:-w]=\Xs[c_i+\Delta_i:d_i +\Delta_i]$. (proof is in App. \ref{app:lemma}). \label{lemma}
\end{lemma}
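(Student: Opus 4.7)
The plan is to unpack the three-stage mask-and-sort procedure on the extended slice $\bar{S}_i := \Xs[c_i - w : \d_i + w]$ of length $N := 2(h+w)+1$, and show that the masks merely relocate the two marked groups of entries to opposite ends of the sorted output, leaving the unmodified middle block as an intact contiguous chunk of $\Xs$. The trimmed window $\hat{S}_i[w{+}1 : N{-}w]$ is then identified with the desired shifted slice by a short index computation.

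The key step is to exploit that every value in $\Xs$ lies in the domain $[\d]$, so adding $\d$ sends an entry into $[\d{+}1, 2\d]$ (strictly above every unmodified entry), while subtracting $\d$ sends it into $[1{-}\d, 0]$ (strictly below every unmodified entry). Consequently, after sorting, $\hat{S}_i$ decomposes into three \emph{contiguous} blocks: positions $1, \ldots, \eta_1$ hold the subtracted entries; positions $\eta_1 + 1, \ldots, N - \eta_0$ hold the unmodified middle entries; and positions $N - \eta_0 + 1, \ldots, N$ hold the added entries. Because $\bar{S}_i$ was already sorted and uniform shifts preserve order, the relative order inside each block matches that in $\bar{S}_i$. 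In particular, the middle block is exactly $\Xs[c_i - w + \eta_0],\, \Xs[c_i - w + \eta_0 + 1],\, \ldots,\, \Xs[\d_i + w - \eta_1]$, in order.

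Finally, I would read off the trimmed slice. Since $\eta_0, \eta_1 < w$, both positions $w{+}1$ and $N{-}w$ lie strictly inside $[\eta_1 + 1,\, N - \eta_0]$, so $\hat{S}_i[w{+}1 : N{-}w]$ is contained in the middle block. The $k$-th middle-block entry is $\Xs[c_i - w + \eta_0 + k - 1]$, so $\hat{S}_i[w{+}1]$ corresponds to $k = w + 1 - \eta_1$, giving $\Xs[c_i + \eta_0 - \eta_1] = \Xs[c_i + \Delta_i]$; symmetrically $\hat{S}_i[N - w] = \Xs[\d_i + \Delta_i]$, and contiguity yields the claimed equality. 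The only real obstacle is the \emph{boundary check} that masked entries cannot leak into the trimmed window; this reduces to the invariants $\eta_0, \eta_1 < w$, which in $\Pi_{\SEM}$ are enforced both structurally (the mask arrays $L_b$ have length $mw$, so at most $w$ positions per slice can be flagged) and by the zero-product consistency test on $L_b$ that forces every mask entry to be either $0$ or $(-1)^b \d$. Everything else is a routine index-arithmetic verification.
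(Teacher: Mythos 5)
Your proposal is correct and follows essentially the same argument as the paper's proof: the masked entries are pushed outside the data range, so sorting relocates them to the two ends, leaving the unmodified entries as a contiguous middle block $\Xs[c_i-w+\eta_0 : d_i+w-\eta_1]$, and trimming $w$ positions from each side yields $\Xs[c_i+\Delta_i : d_i+\Delta_i]$ by the same index arithmetic. Your explicit boundary check ($\eta_0,\eta_1$ bounded by $w$) is a detail the paper leaves implicit, but it does not change the route.
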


Now, let us turn to how to implement this under MPC. First both the servers sample noise $\boldsymbol{\eta}\gets \CC_{\epsilon}(m)$. Next,  for each slice $S_i$, the servers $\Ser_{b}, b \in \{0,1\}$ create a secret-shared masking array $\langle L^b_i \rangle$ of size $w$ as follows.  Server $\Ser_0$ sets the first $\eta^0_i$ elements of $L^0_i$ to $\d$ and fills the rest with zeros while server $\Ser_1$ sets the last $\eta^1_i$ elements of $L^b_i$ to $-\d$ and fills the rest with zeros. They then exchange shares of the corresponding arrays. The above shift can now be implemented by simply performing element-wise addition: $\langle \bar{S_i}[:w]\rangle + \langle L^0_i \rangle$ and $\langle \bar{S_i}[-w:]\rangle +\langle L^1_i \rangle$. Securely sorting this perturbed slice yields the desired shifted slice, as previously described.

A caveat is that, since we operate in the malicious model, we must ensure that a corrupt server doesn't construct malformed masking arrays. We do this as follows: \squishl
    \item For all $j \in [w]$, securely verify that each element satisfies $(L^b_i[j]-1^{1-b})L_i=0$. This ensures that $L^b_i[j] \in \{0,\d,-\d\}$.
    \item Securely sort the arrays in descending order - this ensures that the non-zero values are contiguous at the beginning (for $\d$) or at the end (for $-\d$) of the array, and 0 elsewhere.
\squishe

\subsubsection{Exponential Mechanism}
The $\Pi_{\EM}$ protocol (Fig. \ref{fig:prot:EM} in App. \ref{app:func}) takes as input a secret-shared array $\langle \Xs \rangle = \{\langle x_{(1)}\rangle, \ldots, \langle x_{(n)}\rangle \}$ in \textit{sorted} order and outputs a differentially private estimate of the target quantile.  For an efficient MPC implementation, we make two key observations.
First, as discussed in Sec.~\ref{sec:key ideas}, \EM~needs to assign selection probabilities to only $n+1$ intervals of the form: 
\begin{gather*} 
\I_0 = [0, x_{(1)}], \quad 
\I_1 = [x_{(1)}, x_{(2)}], \quad 
\ldots, \quad 
\I_n = [x_{(n)}, d],
\end{gather*} where the input domain is $\B=[0,d]$. Thus, we completely avoid dependence on the size of the input domain $|\B|$ which could be quite large. 

Second,  since the intervals are in sorted order, the corresponding utility score reduces to: 
\begin{align*}
\u_q(X,\I_i)& =-\big||\{x \in X|x < y\}| - \rq\big|
=-|i-\rq|
\end{align*}
where $ y \in \I_i$. Hence, as long as the target quantile (equivalently, its rank $\rq$) is public, the utility scores and the subsequent exponentiation can be computed \textit{in the clear}. 
To this end, recall that the secure shifting mechanism discussed above ensures that the queried quantiles remain public. Moreover, the re-normalization required during the second invocation of $\Pi_{\SEM}$ can also be performed using the (public) noisy histogram obtained from the preceding bucketing phase. As a result, the interval weights can be computed \textit{locally} by the servers
 without any costly secure exponentiation:
\begin{gather*}
\langle wt_i \rangle = \exp\left(-\frac{\epsilon}{2}|i - r_q|\right) \cdot \left(\langle \Xs[i+1] \rangle - \langle \Xs[i] \rangle\right).
\end{gather*}
Next, using these weights, $\Pi_{\EM}$ selects an interval via inverse transform sampling~\cite{Kerschbaum}. This involves drawing a uniform random sample $t \in [0, 1]$ and finding the smallest index $j$:
\begin{gather*}
\sum_{i=0}^{j-1} \Pr[\EM_{u}(X, q,\epsilon) = \I_i] \leq t < \sum_{i=0}^{j} \Pr[\EM_\u(X, q, \epsilon) = \I_i].
\end{gather*}
We implement this search securely using a linear scan.

To further reduce MPC complexity, we avoid secure divisions by working with unnormalized weights. Specifically, the uniform sample $t$ is scaled by the total weight:
$N = \sum_{i=0}^{n} wt_i$. The final challenge is to select an element uniformly at random from within the selected secret-shared interval $\I_j = [u_j, u_j + l_j]$. This is done by securely computing $ u_i + \lfloor Nt \cdot l_i \rfloor$ (\textsf{SecureIntervalSearch}, see Fig. \ref{fig:prot:interval})
which gives us the private estimate for $q$.
\subsection{Secure Bucketing}\label{sec:protocol:bucketing}
The output of this step is a set of $2m+1$ buckets containing secret-shared values, where the histogram over these buckets satisfies DP. The primary goal is to securely generate the required number of dummy records and populate the buckets without compromising the utility of the protocol. These correspond to Steps 4-19 in Fig.~\ref{fig:prot:name}. To achieve this, each server $\Ser_b$ samples noise independently $\boldsymbol{\gamma}^b\gets \CCd_{\epsilon_2}(2m+1)$ where $\gamma^b_i$ is the noise sampled for the $i$-th bucket. Each server then creates a number of dummy records equal to the total sampled noise and secret-shares them with the other party.

To ensure that dummy records preserve privacy without degrading utility, we enforce two constraints. First, the value of the dummy records corresponding to bucket $\Bu = [c', d']$ is set to be $c'$. This ensures that, after sorting, all dummy records for a given bucket accumulate at one side of the interval.
Second,  without appropriate checks, a malicious server could inject an arbitrary number of dummy records, thereby distorting the histogram. To prevent this, we impose a bound on the cumulative noise introduced. Specifically, the noise sampled from $\CCd_{\epsilon_2}(m)$ must satisfy a linear growth condition  $|2i\tau-\sum_{j=1}^i \gamma_i|\leq \tau$, for some utility parameter $\tau$ of the noise distribution.  

These two checks are securely enforced on each dummy record ($\Pi_\textsf{Dummy}$, Fig. \ref{fig:prot:dummy}). Once verified, the servers merge the dummy records with the real data and perform a secure shuffle, ensuring that the dummy records are indistinguishable from the real ones. Finally, using the estimates of the bounding quantiles, the servers securely assign all records into the corresponding $2m+1$ buckets.

The full protocol is given in Fig. \ref{fig:prot:name}.


\begin{mybox}{Protocol: $\Pi_{\name}$}
\textbf{Parameters:}\\
Private dataset $X \in \B^n$; queried quantiles $Q = \{q_1, \cdots, q_m\} \in (0,1)^m$; privacy budgets $\varepsilon_1, \varepsilon_{2},\varepsilon_{3}$; privacy parameter $\delta$, accuracy parameter $\beta$. 
\\\textbf{Output:} Private quantile estimates \scalebox{0.9}{$V=\{v_1, \cdots, v_m\}$}
\\
\textbf{Initialization.}
\begin{enumerate}[nosep, label=\arabic*.] 
\item Define $k = (nm)^{2/3} (2 \ln(\frac{2}{\beta}))^{1/3}$, $\alpha_1 = \frac{12 \ln(|D|m/\beta)}{k\epsilon_1} + \frac{24 \log(m) \ln(2m/\beta)}{k\epsilon_1}$, $\alpha_2 = \sqrt{\frac{\ln(2/\beta)}{2k}}$, $\alpha = \alpha_1 + \alpha_2$, and $\tau=\frac{6}{\epsilon_2} \log(m) \ln(\tfrac{2m}{\delta})$.
\item Merge quantiles closer than $4\alpha_1 + 2\alpha_2$ into sets; obtain quantile sets $Q_1, \ldots, Q_m$ partitioning $Q$.
\end{enumerate}
Each server $\Ser_b$
\begin{enumerate}[nosep, label=\arabic*.] 
\setItemnumber{3}
\item $\langle \Xsam \rangle =\Pi_{\textsf{Sample}}(\langle X \rangle, k)$
\item  $Q'=\{\min_{q \in Q_i} q-\alpha, \max_{q \in Q_i} q + \alpha : 1 \leq i \leq m\}$
\item $V'=\Pi_{\SEM}(\langle \Xsam\rangle , Q', \epsilon_1, \delta)$
\\ \textcolor{blue}{\hspace{2.5cm}$\rhd$} Creating dummy records
\item Sample $\boldsymbol{\gamma} \gets \CCd_{\epsilon_2}(2m+1)$
\item $v'_0=0$
\item $R_b=\{\}$
\item \textbf{For } $i \in [2m+1]$
\item \hspace{0.2cm} $R_b.\text{Append}(\gamma_i,v'_{i-1})$ \\\textcolor{blue}{\hspace{1cm}$\rhd$} Appending $\gamma_i$ records with value $v'_{i-1}$
\item \textbf{End For}
\item \textsf{Share}$(R_b)$ 
\item $\bar{V}=\{v'_0, \cdots, v'_{i-1}\}$
\item \textbf{If} $(!\Pi_\textsf{Dummy}(\langle R_0\rangle, \bar{V}, \tau ) \wedge \Pi_\textsf{Dummy}( \langle R_1\rangle, \bar{V}, \tau)$
\item \hspace{0.5cm} \textsc{ABORT}
\item \textbf{End If}
 \\\textcolor{blue}{\hspace{3.5cm}$\rhd$} Populating buckets
\item $\langle \X_{\textsf{Shuffle}}\rangle = \Fshuffle(\langle \X \rangle \cup \langle R_0\rangle  \cup \langle R_1 \rangle )$
\item \textbf{For } $i \in [n+|R_1|+|R_0|]$  
\item \hspace{0.1cm} Put \scalebox{0.75}{$\langle \X_{\textsf{Shuffle}}[i] \rangle $} in bucket \scalebox{0.75}{$\Bu_j$} s.t. \scalebox{0.75}{$ v'_{i-1}\leq \X_{\textsf{Shuffle}}[i] < v'_{i}$}
\item \textbf{End For}
\item \textbf{For} $i \in [m]$
\item \hspace{0.2cm} \scalebox{0.9}{$\hat{Q}_i =  \{\min\{1, \max\{0, \frac{q n + 8i\tau - \sum_{j=1}^{2i-1}|B_j|}{|B_{2i}|}\}\} : q \in Q_i\}$}
\item \hspace{0.2cm} $V_i=\Pi_\SEM(\langle \Bu_{2i}\rangle ,\hat{Q}_i,\epsilon_{3}, \delta)$
\item \textbf{End For} 
\item Output $V=V_1 \cup \cdots \cup V_m$ 

\end{enumerate}
\end{mybox}

 \noindent\begin{minipage}{\columnwidth}
\captionof{figure}{Protocol  $\Pi_{\name}$}\label{fig:prot:name}
\end{minipage}
\subsection{Security Analysis}
We use the simulation paradigm~\cite{Oded} to prove \name's security guarantees. 
As discussed in Sec. \ref{sec:background}, proving security under the simulation paradigm requires
defining two worlds: the real world,  where the actual protocol is executed by real parties, and an ideal world where an ideal functionality $\mathcal{F}$ receives inputs from all parties and directly computes and returns the output to the designated parties. A simulator $\mathsf{Sim}$ interacts with $\mathcal{A}$ and $\mathcal{F}$ to simulate the view for $\mathcal{A}$.  If $\mathcal{A}$ cannot distinguish whether it is interacting with real parties or with $\mathsf{Sim}$ in the ideal world, then we conclude that the protocol securely realizes $\mathcal{F}$. This guarantees that the protocol leaks no more information than what is explicitly allowed by the ideal functionality. This is formalized as follows.

\begin{theorem}
$\Pi_{\name}$ achieves the ideal functionality in Fig. \ref{fig:IF:name} in the $\F_\textsf{Sort}$ and $\F_\textsf{Shuffle}$-hybrid model  in the presence of an active adversary
corrupting one server and an arbitrary number of clients (proof is in App. \ref{app:security}). \label{thm:security} 
\end{theorem}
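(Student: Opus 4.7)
The plan is to prove Theorem~\ref{thm:security} by the standard simulation paradigm, constructing a PPT simulator $\Sim$ that, given access to $\F_{\name}$, produces a view for the adversary $\calA$ (who corrupts one server, WLOG $\Ser_1$, and an arbitrary subset of clients) that is computationally indistinguishable from $\calA$'s real-world view. The argument proceeds by a sequence of hybrids that peel off each cryptographic building block in turn, replacing it with its ideal functionality, until we reach the ideal world of $\F_{\name}$.

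First I would sit in the $\F_\textsf{Sort}, \F_\textsf{Shuffle}$-hybrid model and additionally invoke the security of SPDZ-style ASS: every local linear operation is information-theoretically hiding given only one share, and every interactive ASS operation ($\Mult, \Cmp, \Equal, \Abs, \Trunc, \Frand$) comes with its own simulator guaranteeing that honest-party shares look uniform and that corrupt-party deviations either go undetected with probability $\leq 2^{-\lambda}$ or are caught by MAC verification (forcing \textsc{ABORT}). Composing these gives two further sub-lemmas that I would prove first: (i) $\Pi_\textsf{Dummy}$ securely realises an ideal ``check noise is in $[0,4\tau]$ and has bounded cumulative sum'' functionality---this is immediate since it is a straight-line circuit of ASS primitives; and (ii) $\Pi_{\SEM}$ securely realises $\FSEM$. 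For (ii) the key observation is that the masking-array well-formedness is enforced by the secure check $\Rec(\Mult(\langle L_j[i]\rangle-(-1)^j d,\langle L_j[i]\rangle))=0$ and by the descending sort via $\F_\textsf{Sort}$; together these ensure $\calA$'s effective noise lies in $\mathbb{Z}^+_{\leq w}$, which is exactly the abort condition of $\FSEM$. The inverse-transform sampling inside $\Pi_\EM$ is simulated by exploiting that the weights $wt_i$ are computed \emph{in the clear} by both servers from public ranks, so only the interval-search step requires simulation of ASS comparisons.

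With these in hand, $\Sim$ for $\Pi_{\name}$ proceeds as follows. On setup, $\Sim$ receives the corrupt clients' shares, reconstructs their inputs, and forwards them to $\F_{\name}$ on their behalf (honest clients' shares are simulated as uniform field elements, indistinguishable by ASS hiding). $\Sim$ then runs the honest server $\Ser_0$ internally against $\calA$ using dummy inputs for the honest clients. At each invocation of a sub-protocol, $\Sim$ invokes the corresponding sub-simulator to (a) extract $\calA$'s effective contribution---the noise vectors $\boldsymbol{\eta}^\calA$ from $\Pi_\SEM$ and $\boldsymbol{\gamma}^\calA$ from the dummy-record step, together with any \textsc{ABORT} messages---and (b) forward them to $\F_{\name}$ in the matching steps of Figure~\ref{fig:IF:name}. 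Finally, $\Sim$ receives the output $V$ from $\F_{\name}$ and programs the simulated last round so that reconstruction yields $V$; this is possible because the ASS reconstruction step lets the simulator choose $\Ser_0$'s share to open to any desired value, again by ASS's one-share hiding.

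The main obstacle, and where the bulk of the formal work lies, is handling malicious deviations consistently across the hybrids: $\calA$ may submit shares whose MAC opens successfully but whose reconstructed values do not correspond to any valid noise sample, may abort partway through a sub-protocol, or may attempt to make dummy records deviate from the prescribed value $v'_{i-1}$. I must show that each such deviation is either (i) caught with overwhelming probability by a MAC check, an equality test in $\Pi_\textsf{Dummy}$, or the masking-array validation inside $\Pi_\SEM$, in which case both worlds abort identically; or (ii) equivalent to submitting some legitimate alternative noise vector in the range allowed by $\F_{\name}$, in which case $\Sim$ forwards exactly that vector. Once this ``extract-or-abort'' property is established for every sub-protocol, indistinguishability of the final hybrid from the real execution follows from a standard union bound over the $\mathrm{poly}(n,m,\lambda)$ MAC checks, each failing with probability at most $2^{-\lambda}$, together with the semantic security of ASS and the assumed security of $\F_\textsf{Sort}$ and $\F_\textsf{Shuffle}$.
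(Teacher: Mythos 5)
Your proposal is correct and follows essentially the same route as the paper: a simulation-based argument in the $(\F_\textsf{Sort},\F_\textsf{Shuffle})$-hybrid model, with the simulator sending uniformly random shares for honest parties, extracting the adversary's effective noise contributions (masking arrays and dummy records) and forwarding them or an \textsc{ABORT} to $\F_{\name}$, and using the functionality's outputs and leakage (e.g.\ the dummy-record totals and $\mathbf{cnt}$) to make revealed intermediate values consistent. Your treatment is somewhat more modular (separate sub-lemmas for $\Pi_\textsf{Dummy}$ and $\Pi_{\SEM}$, explicit MAC-based extract-or-abort, and input extraction for corrupt clients, which the paper only sketches for honest clients), but the core argument is the same.
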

Following \cite{HPI}, we next argue that \name~achieves computational
DP against an active adversary corrupting one
of the servers and any number of clients. Thm. \ref{thm:name} shows that a centralized version of the \name~satisfies $(\epsilon_1+\epsilon_2+\epsilon_3,O(\delta))$-DP. The following theorem then follows directly from Thm. \ref{thm:security} and Def. \ref{def:compDP}.
\begin{corollary} 
$\Pi_\name$~satisfies $(\epsilon_1+\epsilon_2+\epsilon_3,O(\delta))$-computational DP against an active adversary corrupting one server and an arbitrary number of clients.
\end{corollary}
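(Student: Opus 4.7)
The plan is to derive the corollary as a direct consequence of two earlier results: Theorem~\ref{thm:name}, which bounds the DP of the ideal functionality $\F_{\name}$, and Theorem~\ref{thm:security}, which establishes that $\Pi_{\name}$ securely realizes $\F_{\name}$ in the $\F_{\textsf{Sort}}$/$\F_{\textsf{Shuffle}}$-hybrid model against an active adversary corrupting one server and arbitrarily many clients. The definition to invoke is Def.~\ref{def:compDP}: a protocol $\Pi$ is $(\epsilon,\delta)$-computationally DP provided there exists an $(\epsilon,\delta)$-DP ideal functionality $\F$ that $\Pi$ securely computes.

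First, I would observe that Theorem~\ref{thm:name} supplies exactly the first ingredient: $\F_{\name}$ is $(\epsilon_1+\epsilon_2+\epsilon_3, O(\delta))$-DP, where this DP guarantee holds against all adversarial behaviors of $\calA$ in the ideal world (both passive and active choices of noise contribution, or \textsc{ABORT}) since the functionality explicitly samples its own copy of the $\CC$ and $\CCd$ noise regardless of $\calA$'s contribution. Second, Theorem~\ref{thm:security} furnishes the second ingredient: for every real-world adversary $\calA$ corrupting one server (and any subset of clients) there is an ideal-world simulator $\calA'$ making $\texttt{REAL}_{\Pi_{\name},\calA}$ and $\texttt{IDEAL}_{\F_{\name},\calA'}$ computationally indistinguishable.

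To finish, I would chain these. Fix neighboring inputs $X \sim X'$. By the DP guarantee on $\F_{\name}$,
\begin{equation*}
\texttt{IDEAL}_{\F_{\name},\calA'}(1^\lambda, X) \approx_{\epsilon_1+\epsilon_2+\epsilon_3, O(\delta)} \texttt{IDEAL}_{\F_{\name},\calA'}(1^\lambda, X').
\end{equation*}
By computational indistinguishability from Theorem~\ref{thm:security}, the real-world views on $X$ and $X'$ are each computationally indistinguishable from the corresponding ideal-world views. A standard hybrid argument then transfers the DP inequality from the ideal world to the real world up to a negligible additive term, which can be absorbed into $O(\delta)$. This is precisely the shape of the $\text{SIM}^+$-CDP reduction underlying Def.~\ref{def:compDP}, so $\Pi_{\name}$ satisfies $(\epsilon_1+\epsilon_2+\epsilon_3, O(\delta))$-computational DP.

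The only subtlety---and what I expect to be the main thing worth writing out carefully rather than an obstacle per se---is that Theorem~\ref{thm:security} is stated in the $\F_{\textsf{Sort}}$/$\F_{\textsf{Shuffle}}$-hybrid model. I would note that instantiating these calls with actively secure MPC protocols for sorting and shuffling preserves the simulation via standard composition of secure protocols, so the hybrid-model result upgrades to a real-world guarantee without altering the DP parameters. The treatment of \textsc{ABORT} also needs a brief remark: the functionality halts deterministically on invalid adversarial noise in a data-independent way, so aborts contribute no extra DP loss, matching the behavior in Theorem~\ref{thm:name}.
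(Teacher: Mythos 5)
Your proposal is correct and follows essentially the same route as the paper, which obtains the corollary directly from Theorem~\ref{thm:security} (secure realization of $\F_{\name}$) together with Theorem~\ref{thm:name} (DP of $\F_{\name}$) via Definition~\ref{def:compDP}. Note only that under the $\text{SIM}^+$-CDP definition the conclusion is immediate once those two ingredients are in place, so the explicit hybrid-chaining step you sketch is supporting intuition rather than a needed extra argument.
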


\subsection{Implementation Optimizations}\label{sec:opt}
\noindent\textbf{Offline Computation.}
We observe that certain computations can be offloaded to an offline phase, independent of the input dataset. Specifically, the formation and verification of masking arrays can be performed ahead of time since they do not depend on the actual data. Similarly, all calls to the randomness generation function 
$\Frand$ can also be moved offline, reducing the online overhead.

\noindent\textbf{Sorting Implementation.}
Standard secure sorting typically relies on data-oblivious sorting networks. Although sorting networks like AKS achieve optimal asymptotic complexity of 
$O(n\log n)$, their high constant overhead renders them computationally expensive in practice. To improve efficiency, we adopt the shuffle-then-comparison paradigm~\cite{graph1, graph2}. The core idea is to first perform a secure shuffle of the dataset, which randomizes the input order and eliminates any input-dependent access patterns. A standard comparison-based algorithm, such as Quicksort, is then applied. 
Because the input order has been randomized, the outcomes of secure comparisons can be safely revealed, allowing subsequent swaps to be executed in the clear—substantially improving performance.

This paradigm integrates naturally with \name’s design. Since our protocol already requires shuffling to mix dummy and real records for secure bucketing, this enables us to apply comparison sort directly to the inputs for the second invocation of $\Pi_\SEM$. Moreover, this strategy allows us to exploit the performance benefits of partial sorting within \SEM, significantly reducing the number of secure comparisons required. This improvement is formalized as follows:

\begin{theorem}[\cite{Aanders}]\label{coro:partial-sort} The number of comparisons required to partially sort $m$ disjoint intervals, each of length $l$, is $O(n \log(m) + m l \log(n))$ (see App. \ref{app:partial sort} for proof). 
\end{theorem}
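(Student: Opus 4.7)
The plan is to decompose the work into two stages: a \emph{multi-selection} phase that identifies the $2m$ endpoint positions of the desired intervals, followed by a \emph{local sorting} phase that sorts each interval internally.

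For the first stage, I would use a generalization of Quickselect that handles multiple target ranks simultaneously. Pick a pivot, partition the array in $O(n)$ comparisons, and then recurse into each side carrying along only the subset of target ranks that fall in that side. With a balanced pivot choice (for instance, randomized or via median-of-medians), the recursion tree has depth $O(\log m)$ because once a subproblem contains only one target rank the recurrence matches ordinary Quickselect and contributes $O(\text{subarray size})$ work at that leaf. Charging $O(n)$ comparisons per level across $O(\log m)$ levels gives the $O(n \log m)$ term. After this stage, the $m$ intervals of length $l$ sit as contiguous blocks with known boundaries.

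For the second stage, apply any $O(l \log l)$ comparison sort (e.g.\ mergesort or heapsort) to each of the $m$ blocks independently. Since $l \leq n$, this costs $O(m l \log l) = O(m l \log n)$ comparisons in total. Summing the two stages yields the claimed $O(n \log m + m l \log n)$ bound.

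The main obstacle is the analysis of the multi-selection stage, specifically justifying the $\log m$ rather than $\log n$ factor. The subtlety is that the target ranks are split between the two recursive calls according to the pivot, so one must argue that the total per-level work sums to $O(n)$ and that the effective recursion depth is governed by $m$, not the subarray size. This can be handled either by a careful amortized analysis of randomized multi-Quickselect, or by invoking a known tight bound on the multi-selection problem; once that bound is in hand, the rest of the proof is routine.
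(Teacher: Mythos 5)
Your proposal is correct in outline, but it takes a genuinely different route from the paper. You reduce the problem to two classical subroutines: a multi-selection phase that locates the $2m$ interval endpoints and partitions the array around them in $O(n\log m)$ comparisons, followed by independent $O(l\log l)$ sorts of the $m$ target blocks; this modular reduction is sound (multi-selection within $O(n\log m)$ comparisons is a known result, and your second phase in fact gives the slightly stronger term $O(ml\log l)$). The paper instead analyzes a \emph{single} modified Quicksort that recurses as usual but stops as soon as a subinstance lies entirely inside a ``don't care'' gap, and bounds the expected comparisons by computing, for each pair of ranks, the exact probability that they are ever compared---refining the usual $2/(j-i+1)$ bound for pairs lying in the same gap---to obtain the explicit bound $\alpha(5+\ln(nk/\alpha))+4\beta\ln n$ of Theorem~\ref{thm:partialsort}. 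The paper's one-pass randomized algorithm is what the implementation actually runs (it meshes with the shuffle-then-comparison-sort paradigm, where random pivots are natural and a deterministic median-of-medians selection would be comparatively awkward), whereas your decomposition buys a shorter argument by citation. One technical caveat in your sketch: with pivots chosen as (approximate) medians of the \emph{subarray}, the recursion tree has depth $O(\log n)$, not $O(\log m)$; the clean fixes are either to pivot on the element realizing the median \emph{target rank} (found by linear-time selection, giving depth $O(\log m)$ in target ranks with $O(n)$ work per level), or to keep subarray-median pivots and observe that the total work in subproblems containing at least two target ranks beyond depth $\log m$ decays geometrically. You flag this gap yourself, and it is fixable, so I would classify your proposal as a correct alternative proof strategy rather than a reconstruction of the paper's argument.
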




\subsection{Runtime Analysis}\label{sec:run}
The concrete performance of \name~is primarily determined by the number of secure comparisons it performs. We therefore analyze the protocol’s comparison complexity below.

 \begin{theorem}\label{thm:runtime}
    By setting the size of the subsample $k$ as 
    \begin{equation*}
        k = (nm)^{2/3}  \max\Big\{\ln (\tfrac{1}{\beta})^{1/3}, \sqrt{\tfrac{\ln(|D|m/\beta) + \log(m)\log(m/\beta)}{\epsilon_1 (nm)^{1/3}}} \Big\},
    \end{equation*}
    the number of secure comparisons in 
    $\F_{\name}$ is bounded by
    $n \log (2m) + O (k \log(m))$.
    \label{thm:secure comp}
\end{theorem}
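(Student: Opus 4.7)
The plan is to walk through $\Pi_{\name}$ step by step, attribute each secure comparison to one of four sources, invoke Theorem~\ref{coro:partial-sort} to bound the partial-sort costs, and then verify that the stated choice of $k$ suppresses all non-leading contributions into the $n\log(2m) + O(k\log m)$ bound.

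First, I identify where secure comparisons arise: (i) the partial sort inside the first $\Pi_{\SEM}$ on the size-$k$ subsample $\Xsam$ for the $2m$ bounding quantiles; (ii) the bucket-assignment loop of steps~17--19, where each of $n + O(m\tau)$ records (real plus dummies) is placed into one of $2m+1$ buckets via binary search over the public boundaries $V'$; (iii) the partial sort inside each of the $m$ second-phase $\Pi_{\SEM}$ calls on buckets $\Bu_{2i}$; and (iv) lower-order work such as the masking-array well-formedness checks, $\Pi_\textsf{Dummy}$, and the inverse-transform linear scan inside $\Pi_{\EM}$.

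Second, I apply Theorem~\ref{coro:partial-sort} with $h+w = O(\tfrac{\log(|D|m/\beta) + \log m\,\log(m/\delta)}{\epsilon_3})$. Source (i) costs $O(k\log(2m) + m(h+w)\log k)$. Source (ii) contributes exactly $(n + O(m\tau))\log(2m+1)$ secure comparisons. Source (iii), since each $\Bu_{2i}$ holds a single merged quantile set with slice length $O(h+w)$, costs $O(|\Bu_{2i}| + (h+w)\log|\Bu_{2i}|)$ per bucket; summing and using $\sum_i |\Bu_{2i}| \leq n + O(m\tau)$ gives $O(n + m\tau + m(h+w)\log n)$. Source (iv) is $O(m(h+w))$. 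Collecting:
\[
n\log(2m) + O\!\left(k\log m + m(h+w)\log(kn) + m\tau\log m\right).
\]

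Third, I verify absorption of the non-dominant terms. The $m\tau\log m = O(\tfrac{m\log^2 m\log(m/\delta)}{\epsilon_2})$ term is absorbed into $n\log(2m)$ under the mild regime $n \gtrsim m\tau$ (already implied by the spacing assumption on $Q$). The critical inequality is $m(h+w)\log(kn) = O(k\log m)$, which up to logs reduces to $k \gtrsim m(h+w)$. I check this in each branch of the max defining $k$: in branch one, $k = (nm)^{2/3}\log(1/\beta)^{1/3}$, so $k \gtrsim m(h+w)$ holds whenever $n^2 \gtrsim m(h+w)^3/\log(1/\beta)$; in branch two, $k^2 \sim nm(h+w)/\epsilon_1$, so $k^2 \gtrsim m^2(h+w)^2$ reduces to $n \gtrsim \epsilon_1 m(h+w)$. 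Both are satisfied under the standard regime $n \gg m(h+w)$.

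The main obstacle is the balancing argument: $k$ is chosen in $\F_{\name}$ to satisfy the \emph{accuracy} requirements (so that $\alpha = \alpha_1 + \alpha_2$ is small enough to isolate the correct slices with high probability, cf.\ Thm.~\ref{thm:main-util}), not to directly minimize runtime. The work is to show that this accuracy-optimal $k$ is simultaneously runtime-optimal — specifically, that under both branches of the max, the inequality $k \gtrsim m(h+w)$ holds so that the subsample-sort cost dominates the final-bucket-sort cost, leaving $n\log(2m) + O(k\log m)$ as the tight bound.
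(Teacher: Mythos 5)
There is a genuine gap: your accounting of the second-phase $\FSEM$ calls uses the trivial bound $\sum_i |\Bu_{2i}| \leq n + O(m\tau)$, which discards exactly the mechanism the theorem is about. The paper's proof instead bounds each bucket by $|B_{2i}| \leq 2 n m_i \alpha$ (the bucket is delimited by the bounding-quantile estimates, whose quantile-space width is $O(m_i\alpha)$ with $\alpha = \alpha_1+\alpha_2$ a function of $k$ and $\epsilon_1$), so the second-phase cost is $O(nm\alpha(k)\log m)$, i.e., roughly $nm\log(m)\bigl(\sqrt{\ln(1/\beta)/k} + \tfrac{\ln(|D|m/\beta)+\log(m)\log(m/\beta)}{k\epsilon_1}\bigr)$. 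The stated value of $k$ is precisely the choice that balances this against the first-phase cost $O(k\log m)$, which is why both branches of the $\max$ appear: $k \asymp nm\sqrt{\ln(1/\beta)/k}$ gives $(nm)^{2/3}\ln(1/\beta)^{1/3}$, and $k \asymp nm L/(k\epsilon_1)$ with $L=\ln(|D|m/\beta)+\log(m)\log(m/\beta)$ gives the second branch. Your version never uses the dependence of bucket size on $k$, so the specific formula for $k$ plays no role in your argument, and your bound carries an extra $\Theta\bigl(\sum_i|\Bu_{2i}|\bigr)=\Theta(n)$ term (each bucket's partial sort is at least linear in its size) that cannot be absorbed: the leading term $n\log(2m)$ has coefficient exactly one (it accounts only for the bucketing binary search), and $k\log m = o(n)$ whenever $m=o(\sqrt n)$, so $O(n)$ does not fit inside $O(k\log m)$ either.

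Relatedly, your closing paragraph has the role of $k$ backwards. In the paper, the utility guarantee (Thm.~\ref{thm:main-util}) is independent of $k$ and $\epsilon_1$ because $\alpha$ is defined adaptively as a function of them; $k$ is a free parameter chosen \emph{to minimize the number of secure comparisons}, which is what Theorem~\ref{thm:runtime} asserts. The inequality you flag as critical, $k \gtrsim m(h+w)$, only controls the lower-order term $m(w+h)\log(\cdot)$ from the partial sorts; the real crux is the bucket-size bound $|B_{2i}|=O(nm_i\alpha)$ and the balancing of $k\log m$ against $nm\alpha(k)\log m$, neither of which your proposal supplies.
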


The $n \log(2m)$ term arises from comparisons in the bucketing step, while the remaining $O(k)$ comparisons come from calls to $\FSEM$. The privacy parameter $\epsilon_1$ affects runtime by influencing bucket sizes for the second $\FSEM$ calls. Since $\epsilon_1$ may be amplified by subsampling (see App. \ref{sec:set-hyperp}), we typically set $k = (nm)^{2/3}\ln(1/\beta)^{1/3}$. The total comparisons by $\FSEM$ remain sublinear until $m = o(\sqrt{n})$, beyond which the analysis of Thm. \ref{thm:runtime} becomes loose. In practice, we found this transition around $m \approx 20$ for $n=10^6$. For larger $m$, a better strategy is to skip the initial $\FSEM$ and $\FBuc$ calls and instead run a single $\FSEM$ over the full dataset with all quantiles.

\section{Experimental Evaluation}\label{sec:eval}

\noindent\textbf{Datasets and Setup.} We consider equally-spaced quantiles
in $[0, 1]$ (for instance, $0.2, 0.4, 0.6, 0.8$ for $m=4$) and do not expect
that uneven spacing would affect runtime or utility. We generated datasets from three distributions over $D=[0,10^9]$: Uniform, a mixture of $\ell=13$ Gaussians, and Zipf—the latter two reflecting common real-world data. The choice of $\ell=13$ makes quantile placement more challenging, since quantiles in $[0,1]$ cut across different Gaussian regions. We also evaluated on two real datasets: credit card balances from a fraud dataset~\cite{ieee-fraud-detection} ($|D|\approx10^5$) and weekly sales from a Walmart dataset~\cite{walmart-recruiting-store-sales-forecasting} ($|D|\approx10^7$), sampling $n$ times with replacement. We use privacy parameter $\delta=10^{-9}$ and dataset size $n=10^6$.

\subsection{Performance Evaluation}

\noindent\textbf{MPC Setup.}
Benchmarks were run on AWS c8g.24xlarge instances using MP-SPDZ~\cite{CCS:Keller20}, with shuffling from~\cite{NDSS:SYBDC24}, RTT $0.01$ ms, $75$ GBit/s bandwidth  and domain size $|\B|=10^9$. Each point is the average of 5 runs. Since MPC reveals nothing beyond outputs, our benchmarks are dataset-agnostic.

\noindent\textbf{Baselines.} We implement two baselines in MPC: (1) SlicingQuantiles ($\SliceEM$)~\cite{imola2025differentiallyprivatequantilessmaller}, and (2) ApproxQuantiles ($\AQ$)~\cite{kaplan2022differentially}, the two state-of-the-art central DP algorithms for estimating multiple quantiles. We omit~\cite{Kerschbaum} because, (1) it assumes the semi-honest threat model, (2) supports only a single quantile (both accuracy and performance degrade linearly with $m$), $(3)$ requires $O(\log |\B|)$ rounds of interaction  while our focus is on non-interactive protocols.

\begin{figure}
	\centering
            \includegraphics[width=0.8\linewidth]{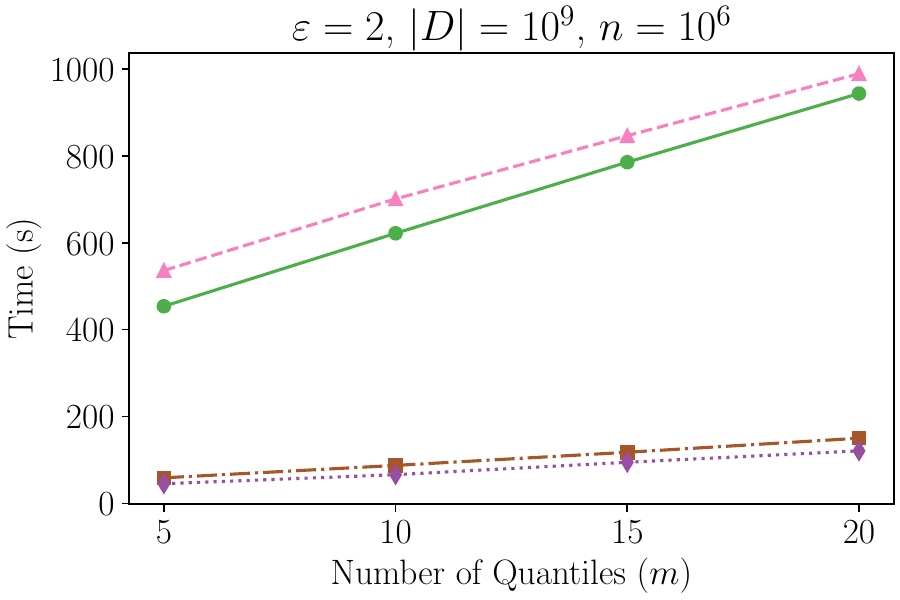}
            \includegraphics[width=0.8\linewidth]{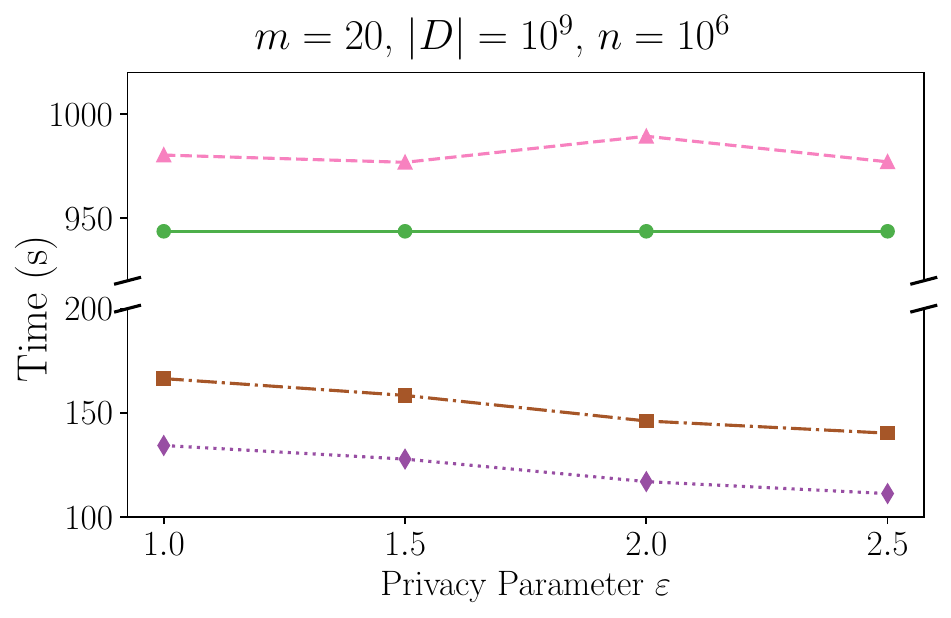}
            \includegraphics[width=0.6\linewidth]{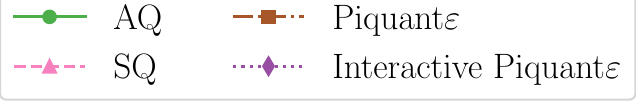}%
    \caption{Runtime plots against varying  $m$ and $\epsilon$ on a dataset of size 1 million.}
    \label{fig:time}
\end{figure}

\begin{figure}
            \centering
            \includegraphics[width=0.8\linewidth]{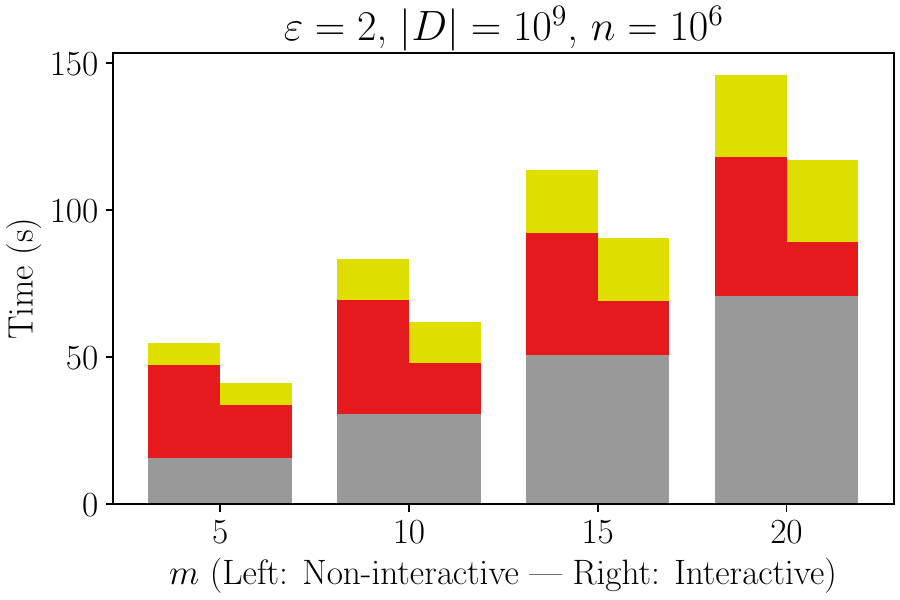}
            \hspace{0.03\textwidth}
            \includegraphics[width=0.8\linewidth]{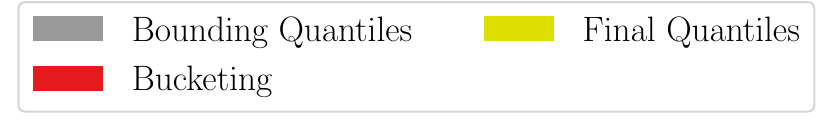}
    \caption{Piquant$\varepsilon$ runtime break down}
    \label{fig:breakdown}
\end{figure}

\noindent\textbf{Results.} On the left, Fig.~\ref{fig:time} reports the performance of \name~with varying $m$, $\epsilon=2$, and dataset size $n=10^6$. \name~is highly efficient: it estimates 5 quantiles in under a minute. It is up to $8.3\times$ faster than $\SliceEM$ and $\sim10\times$ faster than $\AQ$. Runtime grows much more sharply for the baselines. For $\AQ$, this is because the exponential mechanism must be run on the entire dataset $m$ times. For $\SliceEM$, the main cost comes from the $2m$ comparisons per record required to assign values to slices. In contrast, \name~runs the exponential mechanism only on small slices and requires just $\log m$ comparisons per record to determine slice membership, yielding substantial gains. 

Except for $\AQ$, the performance of all protocols improves slightly with increasing $\varepsilon$, since larger $\varepsilon$ reduces slice sizes (right side of Fig.~\ref{fig:time}). We provide the corresponding plots of total communication in App.~\ref{app-plots}.

Fig. \ref{fig:breakdown} provides the break down of the costs of the different stages of \name. Since the comparison operations necessary for bucketing scale with $\log m$, while the comparisons for $\SliceEM$ scale linearly with $m$, the proportion of the time spent on those comparisons increases from 42\% on average to 67\% on average as $m$ grows from 5 to 20. 

\noindent\textbf{Interactive \name.}  Allowing one additional round of interaction in \name~provides a performance gain. In the non-interactive setting, bucketing requires $\Omega(n'\log m)$ secure comparisons, where $n'$ is the total number of real and dummy records. With an extra round of communication, clients can directly supply bucket assignments, and the servers need only $2n'$ comparisons to verify correctness. This makes the bucketing cost independent of $m$ (Fig. \ref{fig:breakdown}), yielding up to a $1.33\times$ efficiency improvement (Fig.~\ref{fig:time}). 

\begin{figure*}
	\centering
        \includegraphics[width=0.37\linewidth]{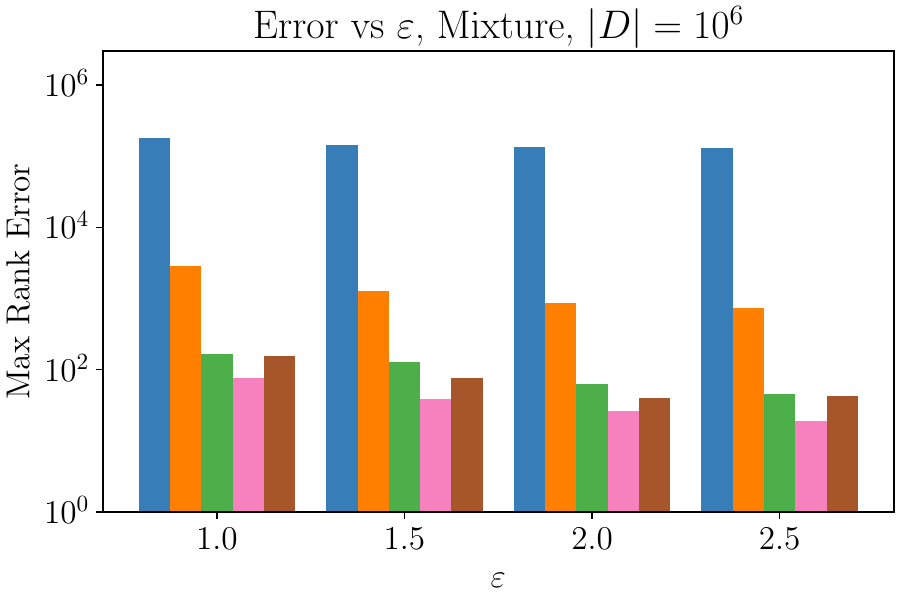}
        \includegraphics[width=0.37\linewidth]{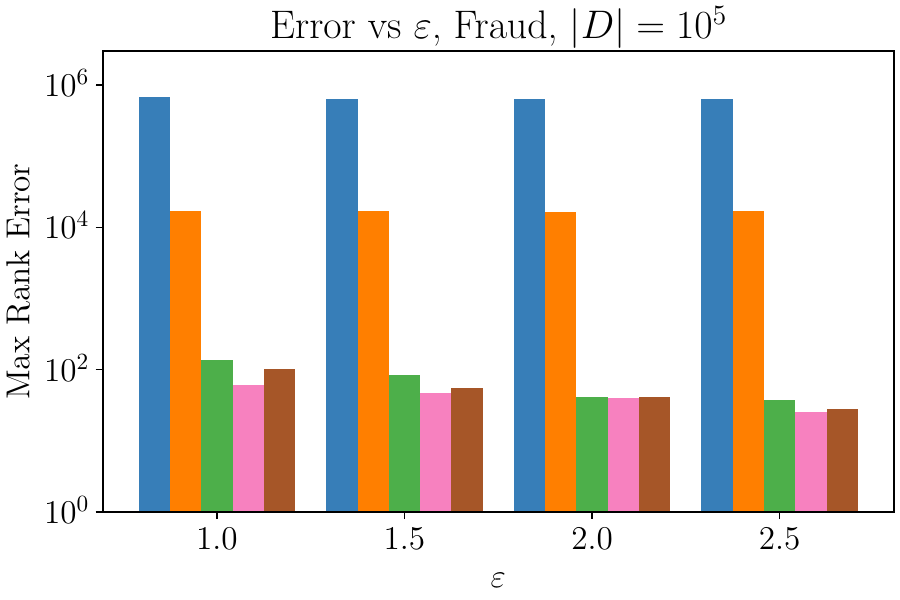}
        \includegraphics[width=0.24\linewidth]{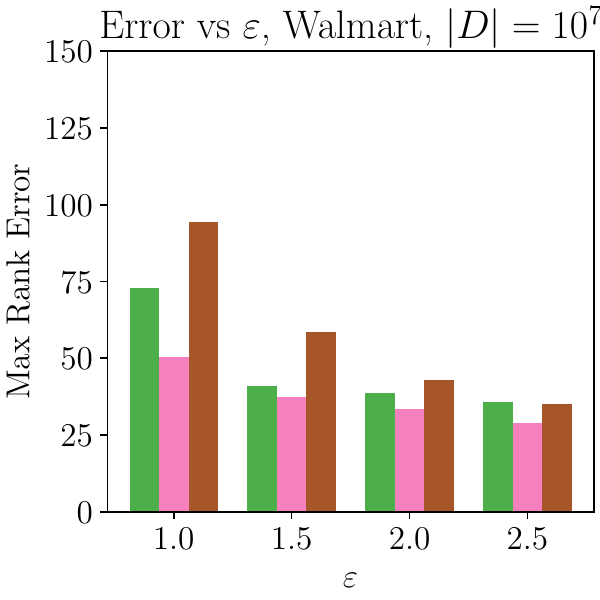}
    \includegraphics[width=0.37\linewidth]{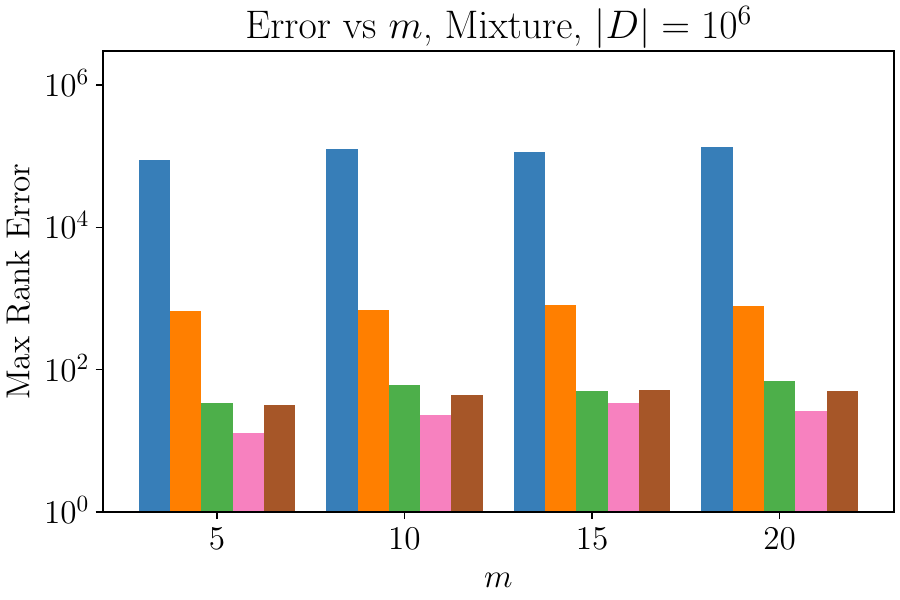}
        \includegraphics[width=0.24\linewidth]{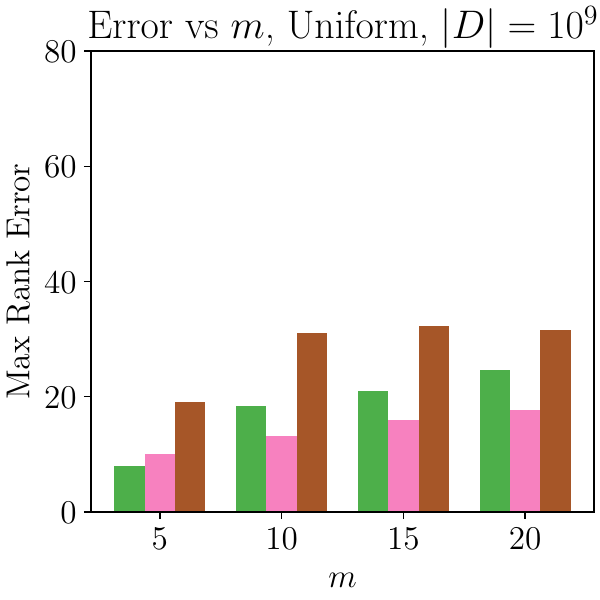}
        \includegraphics[width=0.24\linewidth]{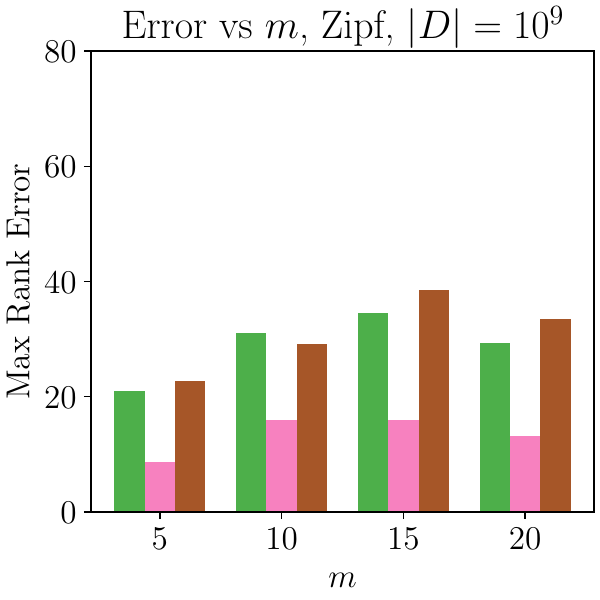}
        \includegraphics[width=0.6\linewidth]{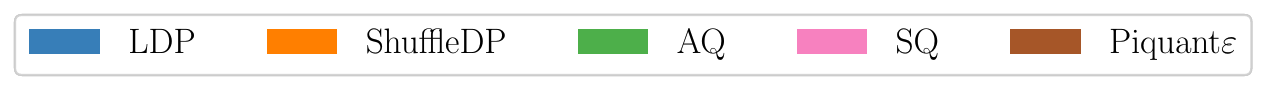}
\caption{Utility plots of quantile estimation methods against $\epsilon$ (first three plots) and $m$ (last three plots) for dataset size $n=10^6$. The domain $D$ depends on the dataset. For datasets with larger $D$, $\LDP$ and $\Shuffle$ are not plotted due to intractable running times.}
  \label{fig:uti-eps}
\end{figure*}
\subsection{Utility Evaluation} 
\noindent\textbf{Baseline.} We compare the error due to privacy of $\name$
against the aforementioned $\SliceEM$ and $\AQ$ algorithms. 
Additionally, we compared against an LDP and shuffle DP baseline (abbreviated as  $\LDP$ and $\Shuffle$, respectively). In both cases, we used the Hierarchical Mechanism~\cite{cormode2021frequency}, a c.d.f. estimation algorithm
based on decomposing the domain $\B$ into a tree. Note that the shuffle model requires a trusted shuffler and as such requires
more trust than the two-server model. Nevertheless we include the comparison for completeness.
The server running time for the Hierarchical Mechanism scales with $O(|\B| \log |\B|)$---this proved to be practically intractable for domain sizes larger than of $10^{6}$.

\noindent\textbf{Results.} Fig.\ref{fig:uti-eps} shows the quantile error versus $\epsilon$ for $\name$ and the four baselines on the Mixture, Fraud, and Walmart datasets (additional datasets in App. \ref{app-plots}). Errors are averaged over 10 runs. First, $\LDP$ and $\Shuffle$ incur errors orders of magnitude higher than the other methods. The error of $\LDP$ is consistently $10^4$-fold worse, corresponding to a normalized error over $10\%$, and reaches nearly $80\%$ for the skewed Fraud dataset. The $\Shuffle$ baseline has $18.5\times$ and $372\times$ higher error on the Mixture and Fraud datasets compared to $\name$. This is because the error of these methods scales with $\log(|\B|)^2$.

Next, we compare \name~with the central DP algorithms.  $\SliceEM$ generally achieves the lowest rank error. Compared to $\SliceEM$, $\AQ$ has up to $2.6\times$ higher error, consistent with~\cite{imola2025differentiallyprivatequantilessmaller} and supporting the choice of $\SliceEM$ as the inspiration for \name. While asymptotically $\name$ matches the accuracy of $\SliceEM$ (Thm.~\ref{thm:main-util}), in practice its error is slightly higher due  to the addition of two copies of noise for the malicious model and privacy budget splitting, though the increase remains modest. Specifically, \name~incurs an average of $1.94, 1.23, 1.48\times$ higher rank error on the Mixture, Fraud, and Walmart datasets, respectively. Across all datasets, the utility penalty of \name~relative to $\SliceEM$ is $2.41, 1.73, 1.43, 1.48\times$ for $\epsilon = 1.0, 1.5, 2.0, 2.5$, showing that the extra noise has a greater impact at smaller $\epsilon$. Overall, all three methods achieve modest rank errors; for $\epsilon = 1$, the average rank errors across all datasets correspond to a normalized error below $0.011\%$.

The last three plots in Fig.~\ref{fig:uti-eps} show the effect of $m$ on error. The plots demonstrate that the relative performance of the algorithms is roughly preserved for all tested values of $m$: $\SliceEM$ tends to perform the best, while $\AQ$ performed $2.33\times$ worse on average, $\name$ performed $2.61\times$ worse on average, and $\Shuffle$ and $\LDP$ performed several orders of magnitude worse. The algorithm errors generally increase sublinearly with more quantiles, though because the queried quantiles are different for each value of $m$, the pattern is somewhat noisy. 

\section{Related Work}

\noindent\textbf{Private quantile estimation.} 
State-of-the-art LDP quantile estimation uses the hierarchical mechanism~\cite{cormode2021frequency}, which estimates the full cumulative distribution function but incurs $O\big(\frac{\log^2 |D|}{\epsilon} \sqrt{n}\big)$ error. In the shuffle model, this can be improved to $O(\frac{\log^2 |D|}{\epsilon})$\cite{shuffle-hiding}. For a single quantile,\cite{aamand2025lightweight} removes a $\log|D|$ factor, but for multiple quantiles, error grows linearly with $m$. In the central model, prior works~\cite{tzamos2020optimal, alabi2022bounded, asi2020near, gillenwater2021differentially},  require distributional assumptions or lack formal guarantees. The closest work to ours is~\cite{kaplan2022differentially} and~\cite{imola2025differentiallyprivatequantilessmaller}; the former uses recursive estimation with error $O(\log|D| \log^2(m)/\epsilon)$, while the latter achieves $O(\log |D|/\epsilon + \log^2(m)/\epsilon)$, forming the basis for our $\FSEM$ design.

\noindent\textbf{Distributed DP}. 
Recent work has explored intermediate trust models for DP between the local and central settings~\cite{bittau_prochlo_2017,erlingsson_amplification_2019,cheu_distributed_2019,steinke_multi-central_2020,CryptE}. The shuffle model~\cite{bittau_prochlo_2017,cheu_distributed_2019} assumes a trusted shuffler, while distributed DP~\cite{EC:DKMMN06,steinke_multi-central_2020,cheu_et_al:LIPIcs.ITCS.2023.36} uses MPC to share computation across servers. A common example is secure aggregation, where only sums are computed~\cite{goryczka_comprehensive_2017,CCS:BIKMMP17,mugunthan2019smpai,apple_exposure_2021}.

Prior work on more complex functions typically samples noise inside MPC~\cite{EC:DKMMN06,Alhadidi,Eigner,Bhler2020SecureST,boehlher}, a communication-heavy technique we avoid via local noise sampling. To our knowledge, no prior work addresses distributed computation of multiple DP quantiles. The closest is~\cite{BohlerK20}, which supports only a single quantile, assumes a semi-honest model, and requires client interaction.



\section{Conclusion}
We have demonstrated that maliciously secure protocols for differentially private quantile estimation can scale to one million clients with practical running times. Our approach builds on advances in central-model quantile estimation while addressing MPC-specific challenges: using sampling to cut costs without harming accuracy, private splitting to reduce multi-quantile estimation to single-quantile tasks, a partial sorting technique to minimize comparisons, and optimizations that avoid secure exponentiation in the exponential mechanism.
\section*{Acknowledgments}

Jacob Imola and Rasmus Pagh were supported by a Data Science Distinguished Investigator grant
from Novo Nordisk Fonden, and are part of BARC, supported by the VILLUM Foundation grant 54451

Hannah Keller was supported by the Danish Independent Research Council under Grant-ID DFF-2064-00016B (YOSO).

Fabrizio Boninsegna was supported in part by the MUR PRIN 20174LF3T8 AHeAD project, and by MUR PNRR CN00000013 National Center for HPC, Big Data and Quantum Computing.



\bibliographystyle{plain}
\bibliography{cite, crypto}

\appendix 
\section{Background Cntd.}\label{app:back}
\begin{theorem}
    [Parallel Composition]
Let $\X\in \B^n$ be a dataset. For each $i \in [k]$, let $\X_i$
subset of $\X$, let $\mathcal{M}_i$ be a mechanism that takes $\X\cap \X_i$
as input, and suppose $\mathcal{M}_i$ is $\epsilon_i$ differentially private. If
$ \X_i \cap \X_j= \varnothing$ whenever $i \neq j$, then the composition of the
sequence $\mathcal{M}_1, \cdots, \mathcal{M}_k$ is max $\{\epsilon_i\}$-DP.

    \label{thm:parrallel}
\end{theorem}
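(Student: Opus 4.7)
The plan is to exploit the disjointness of the partition $\{X_i\}$ to reduce the analysis of the composition to a single invocation of one of the $\calM_i$. I would start by fixing two neighboring datasets $X \sim X'$ (differing in a single user $u$) and considering any event $E = E_1 \times \cdots \times E_k$ in the product output space. The key observation is that because the sets $X_i$ are pairwise disjoint, the altered record $u$ lies in at most one of them, say $X_{i^\star}$. Consequently, for every $j \neq i^\star$ we have $X \cap X_j = X' \cap X_j$, and hence the distributions of $\calM_j(X \cap X_j)$ and $\calM_j(X' \cap X_j)$ are identical.

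Next, I would use independence of the mechanisms' internal randomness to factor the joint probability:
\begin{equation*}
\Pr[(\calM_1,\ldots,\calM_k)(X) \in E] = \prod_{j=1}^{k} \Pr[\calM_j(X \cap X_j) \in E_j].
\end{equation*}
All factors with $j \neq i^\star$ cancel between the numerator evaluated at $X$ and at $X'$, leaving only the ratio for index $i^\star$. Applying the $\epsilon_{i^\star}$-DP guarantee of $\calM_{i^\star}$ bounds this ratio by $e^{\epsilon_{i^\star}} \leq e^{\max_i \epsilon_i}$. Extending from product events to arbitrary measurable events in the joint output space is routine (a standard $\pi$-$\lambda$ or outer-measure argument, or simply noting that rectangles generate the product $\sigma$-algebra).

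The only subtle point to watch is the definition of neighboring datasets used in the paper: here two datasets are neighbors if one is obtained from the other by \emph{substituting} a single user. Under substitution, the changed record could move from $X_{i^\star}$ to $X_{j^\star}$, which might naively suggest two mechanisms see different inputs. However, this is still compatible with the theorem provided we interpret $\calM_j$ as taking $X \cap X_j$ (a possibly smaller multiset) as input; substitution then modifies at most two such restrictions but each by exactly one record (a removal in $X_{i^\star}$ and an addition in $X_{j^\star}$). The standard resolution is either to (i) assume the partition $\{X_i\}$ is determined a priori by the domain (so substitution affects exactly one $\calM_i$), or (ii) relax to bounding the composition by $\max_i \epsilon_i$ under the add/remove neighboring notion and invoke group privacy for substitution. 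I expect the cleanest write-up to adopt interpretation (i), which is the standard setting in which parallel composition is stated, and to note this assumption explicitly. This is the main conceptual obstacle; the probabilistic calculation itself is a one-line cancellation.
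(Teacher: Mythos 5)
The paper states this theorem as a background fact in App.~\ref{app:back} and gives \emph{no proof} of it, so there is nothing to compare against line by line; your proposal must stand on its own. The core of your argument --- factor the joint distribution over product events by independence, observe that disjointness localizes the change to one block, cancel the unaffected factors, and bound the surviving ratio by $e^{\epsilon_{i^\star}}$ --- is the standard and correct proof of parallel composition, and the extension from rectangles to general events is indeed routine. You also correctly spot the one genuine subtlety: this paper uses the \emph{substitution} notion of neighboring datasets, under which the modified record can leave one block and enter another, so two mechanisms see changed inputs.

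Where I would push back is on your proposed resolution (i). Assuming the partition is determined a priori by the domain does \emph{not} ensure that a substitution affects only one $\calM_i$: if $X_i = X \cap D_i$ for fixed domain cells $D_i$, replacing a record $x \in D_{i^\star}$ by $x' \in D_{j^\star}$ with $j^\star \neq i^\star$ still changes both restrictions (a removal in one, an insertion in the other), and your cancellation argument then yields $e^{\epsilon_{i^\star} + \epsilon_{j^\star}} \leq e^{2\max_i \epsilon_i}$, not $e^{\max_i \epsilon_i}$. The clean statements are either (a) parallel composition under add/remove neighboring, where the bound $\max_i \epsilon_i$ holds exactly as you argue, or (b) parallel composition under substitution with the factor-of-two loss (equivalently, group privacy applied to your add/remove version, which is your option (ii)). So your option (ii) is the sound route; option (i) only works in the special case where the substituted value is guaranteed to remain in the same cell. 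This is a real but minor imprecision --- the probabilistic core of your proof is fine, and the constant-factor discrepancy does not affect how the theorem is used elsewhere in the paper.
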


When applied multiple times, the DP guarantee degrades gracefully as follows.

\begin{theorem}[Sequential Composition] \label{theorem:seq}
If $\mathcal{M}_1$ and
$\mathcal{M}_2$ are $\epsilon_1$-DP and $\epsilon_2$-DP algorithms that use independent randomness, then releasing the outputs $(\mathcal{M}_1(\X),\mathcal{M}_2(\X))$ on database $\X$ satisfies $(\epsilon_1+\epsilon_2)$-DP.\label{thm:sequence}\end{theorem}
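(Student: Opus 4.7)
The plan is to prove Theorem~\ref{theorem:seq} by a short direct calculation that peels off the two mechanisms one at a time, using only (i) the independence of the randomness of $\mathcal{M}_1$ and $\mathcal{M}_2$ and (ii) the pure $\varepsilon$-DP guarantee of each. Fix neighboring datasets $\X\sim \X'$, let $\mathcal{Y}_i$ denote the output space of $\mathcal{M}_i$, and let $S\subseteq \mathcal{Y}_1\times \mathcal{Y}_2$ be an arbitrary (measurable) event. For each $y_1\in\mathcal{Y}_1$, define the slice $S_{y_1}=\{y_2\in\mathcal{Y}_2:(y_1,y_2)\in S\}$.

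First I would exploit independence to decompose
\begin{equation*}
\Pr[(\mathcal{M}_1(\X),\mathcal{M}_2(\X))\in S]=\sum_{y_1\in\mathcal{Y}_1}\Pr[\mathcal{M}_1(\X)=y_1]\,\Pr[\mathcal{M}_2(\X)\in S_{y_1}],
\end{equation*}
(replacing the sum by an integral against the density of $\mathcal{M}_1(\X)$ in the continuous case). This is the key structural step: since the coins of $\mathcal{M}_2$ are independent of those of $\mathcal{M}_1$, the conditional distribution of $\mathcal{M}_2(\X)$ given $\mathcal{M}_1(\X)=y_1$ is just the marginal $\mathcal{M}_2(\X)$.

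Next I would apply the two DP guarantees. Since $\mathcal{M}_2$ is $\varepsilon_2$-DP, for every fixed $y_1$ the event $S_{y_1}\subseteq\mathcal{Y}_2$ satisfies $\Pr[\mathcal{M}_2(\X)\in S_{y_1}]\leq e^{\varepsilon_2}\Pr[\mathcal{M}_2(\X')\in S_{y_1}]$. Since $\mathcal{M}_1$ is $\varepsilon_1$-DP, applied to the singleton $\{y_1\}$ (or to densities in the continuous case), $\Pr[\mathcal{M}_1(\X)=y_1]\leq e^{\varepsilon_1}\Pr[\mathcal{M}_1(\X')=y_1]$. Substituting both bounds into the decomposition and factoring out $e^{\varepsilon_1+\varepsilon_2}$ gives
\begin{equation*}
\Pr[(\mathcal{M}_1(\X),\mathcal{M}_2(\X))\in S]\leq e^{\varepsilon_1+\varepsilon_2}\Pr[(\mathcal{M}_1(\X'),\mathcal{M}_2(\X'))\in S],
\end{equation*}
which is exactly $(\varepsilon_1+\varepsilon_2)$-DP.

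There is essentially no obstacle. The only subtlety is purely technical and arises for continuous output spaces, where the pointwise inequality on probabilities has to be replaced by the analogous inequality on Radon--Nikodym densities; this follows from pure DP applied to shrinking measurable neighborhoods of each point. Because the statement is for pure $\varepsilon$-DP (no $\delta$ term), no advanced composition, group-privacy, or privacy-loss-distribution machinery is needed, and the argument is self-contained.
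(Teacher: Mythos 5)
Your proof is correct: the decomposition via independence, followed by applying the $\varepsilon_2$-DP bound to each slice $S_{y_1}$ and the $\varepsilon_1$-DP bound pointwise to $\mathcal{M}_1$, is the standard argument for sequential composition of pure DP, and your remark about handling continuous outputs via densities is the right caveat. Note that the paper states this theorem only as background (citing the composition literature) and gives no proof of its own, so there is nothing to compare against beyond confirming that your argument is the canonical one and is sound.
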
 
Note that there are more advanced versions of composition as well~\cite{dwork2010differential}.

\section{Omitted Functionalities and Protocols}\label{app:func}

\begin{mybox2}{Functionality: $\Fshuffle$}
\textbf{Input:} Secret shared dataset $\langle \X \rangle_b$ from $\Ser_b$\\
\begin{enumerate}[nosep, label=\arabic*.]
\item Generate a random array $R \leftarrow \F^{|\X|}$
\item Re-randomize the input shares as $\langle \X \rangle + \langle t \rangle$
\item Select a random permutation from the symmetric group of size $\pi \in S_||$
\item Return $\langle \pi(X)\rangle_b$ to $\Ser_b$
\end{enumerate}
\end{mybox2}
\noindent\begin{minipage}{\columnwidth}
\captionof{figure}{Ideal Functionality  $\Fshuffle$}\label{fig:IF:shuffle}
\end{minipage}
\newpage

\begin{mybox}{Protocol: \textsf{SecureLinearSearch} $\Pi_{\textsf{LinearSearch}}$}
\textbf{Input:} Secret shared list $\langle L \rangle $; Secret shared value $\langle s \rangle $;\\
\textbf{Output:} Secret shared index $\langle ind\rangle$ such that $L[ind] \leq  s \leq L[ind+1] $
\\\textbf{Initialization:}
\begin{enumerate}[nosep, label=\arabic*.] 
\item $ind=0$
\item \textsf{Share}($ind$)
\end{enumerate}
Each server $\Ser_b$
\begin{enumerate}[nosep, label=\arabic*.] 
 \setItemnumber{3}
\item \textbf{For} $i \in [|L|]$
\item \hspace{0.5cm} $\langle t_1 \rangle = \Cmp(\langle L[i]\rangle, \langle s \rangle)$
\item \hspace{0.5cm} $\langle t_2 \rangle = \Cmp(\langle s \rangle, \langle L[i+1] \rangle)$
\item \hspace{0.5cm} $\langle t_3 \rangle = \Mult(  \langle t_1 \rangle,  \langle t_2 \rangle)$
\item \hspace{0.5cm} $\langle ind \rangle = \langle ind \rangle + \Mult( \langle t_3 \rangle, i)$
\item \textbf{EndFor}

\end{enumerate}
\textbf{Output} $\langle ind \rangle$
\end{mybox}
\noindent\begin{minipage}{\columnwidth}
\captionof{figure}{Protocol $\Pi_{\textsf{LinearSearch}}$}\label{fig:prot:LinearSearch}
\end{minipage}

\begin{mybox}{Protocol: Secure Exponential Mechanism $\Pi_{\EM}$}
\textbf{Input.} Secret-shared dataset in sorted order $\langle \Xs \rangle$ where $X \in \B^n $ with $\B=[0,\d]$; Queried quantile $q \in (0,1)$; Privacy parameter $\epsilon$; 
\\\textbf{Output.} Noisy quantile estimate $\a$
\begin{enumerate}[nosep, label=\arabic*.]
\item $\rq=\lfloor qn \rfloor$\\
\textcolor{blue}{ \hspace{3cm}$\rhd$} Weight computation
\item $\langle T_1[1]\rangle = \langle \Xs[1] \rangle$
\item $\langle T_1[n+1]\rangle = (\d-\langle \Xs[n] \rangle)$ 
\item $\langle T_2[1]\rangle$ = $\langle 0 \rangle$
\item $\langle T_2[n+1]\rangle = \langle \Xs[n] \rangle $
\item $\langle wt[1]\rangle = \exp(-\frac{\epsilon}{2}|1- r_q|)\cdot\langle T_1[1] \rangle$
\item $\langle wt[n+1]\rangle = \exp(-\epsilon |n+1-\rq|)\cdot(\langle T_{n+1}\rangle)$
\item \textbf{For } $i \in [1, n-1]$
\item \hspace{0.2cm} $\langle T_1[i+1]\rangle = (\langle \Xs[i+1] \rangle -\langle \Xs[i]\rangle)$
\item \hspace{0.2cm} $\langle T_2[i+1]\rangle = \langle \Xs[i]\rangle$
\item \hspace{0.2cm} $\langle wt[i+1] \rangle = \exp(-\frac{\epsilon}{2}|i-\rq|)\cdot \langle T_1[i+1]\rangle $
\item \textbf{End For}
\item $\langle z \rangle = \sum_{i=1}^{n+1}\langle  p[i] \rangle$
\item $\langle t\rangle = \Frand(l)$
\item $\langle u \rangle = \Mult(\langle z\rangle, \langle t\rangle)$
\item $\langle ind \rangle$ = $\Pi_\textsf{LinearSearch}(\langle wt\rangle,\langle u\rangle )$
\item $\langle \a \rangle= \Pi_\textsf{IntervalSearch}(\langle T_1\rangle, \langle T_2\rangle, \langle ind \rangle)$
\item \textbf{Return} $\Rec(\langle \a \rangle)$
\end{enumerate}

\end{mybox}
 \noindent\begin{minipage}{\columnwidth}
\captionof{figure}{Protocol  $\Pi_{\EM}$}\label{fig:prot:EM}
\end{minipage}

\newpage

\begin{mybox}{Protocol: Secure Sampling $\Pi_{\textsf{Sample}}$}
\textbf{Parameters:} $k$ - Number of i.i.d samples to be generated; $l$ - Bit length
\\
\textbf{Input:} Secret-shared dataset $\langle \X \rangle$\\
Each server $\Ser_b$
\begin{enumerate}[nosep, label=\arabic*.]
\item $t = 0$
\item $G=\varnothing$
\item \textbf{While} $t < k$
\item \hspace{0.5cm} $\langle r \rangle = \Frand(l)$
\item \hspace{0.5cm} $s=\lfloor \Rec(\langle r \rangle ) \cdot n\rfloor$
\item \hspace{0.5cm}\textbf{If} $s \not \in G$
\item \hspace{0.7cm} $G=G\cup s$
\item \hspace{0.7cm} $t=t+1$
\item \hspace{0.5cm}\textbf{EndIf}
\item \textbf{EndWhile}
\end{enumerate}
\textbf{Output} $\bigcup_{s \in G}\langle \X_{s}\rangle$
\end{mybox}
\noindent\begin{minipage}{0.5\textwidth}
\captionof{figure}{Protocol $\Pi_{\textsf{Sample}}$}\label{fig:prot:sampling}
\end{minipage}

\begin{mybox}{Protocol: Secure Dummy Record Check $\Pi_{\text{Dummy}}$}
\textbf{Input:}  List of secret shared dummy records $\langle R \rangle$;  Values of dummy records $A$; Number of bins $k$; Parameter $c$\\
\textbf{Output:} 1/0 \\
\textbf{Initialize:}
\begin{enumerate}[nosep, label=\arabic*.]
\item $\Cnt=[0, 
\cdots, 0]$ \\\hfill \textcolor{blue}{$\rhd$} Initialize an array of zeroes of size $k$ 
\item $\textsf{Sum}=0$
\item $\Flag = [0, \cdots, 0]$ \\\hfill \textcolor{blue}{$\rhd$} Initialize an array of zeroes  of size $|R|$ 
\item \textsf{Share}( $\Cnt$)
\item \textsf{Share}($\Flag$)
\item \textsf{Share}($\textsf{Sum}$)
\end{enumerate}
Each server $\Ser_b$:
\begin{enumerate}[nosep, label=\arabic*.]
 \setItemnumber{5}
\item \textbf{For} $i \in [|R|]$
\item \hspace{0.5cm}\textbf{For} $ j \in [k]$
\item \hspace{1cm} $\langle t \rangle = \Equal(\langle R[i] \rangle, A[j])$
\item \hspace{1cm} $\langle \Flag[i] \rangle = \langle \Flag[i] \rangle + \langle t \rangle$
\item \hspace{1cm} $\langle \Cnt[j] \rangle = \langle \Cnt[j] \rangle +\langle t\rangle$
\item \hspace{0.5cm} \textbf{EndFor}
\item \hspace{0.5cm} \textbf{If} $(\Rec(\langle \Flag[i]  \rangle)==0)$ \\\hfill \textcolor{blue}{$\rhd$} Checking the value of the dummy record 
\item \hspace{1cm} \textbf{Return} 0
\item \hspace{0.5cm} \textbf{EndIf}
\item \textbf{EndFor} 
\item \textbf{For} $i \in [1,k]$
\item \hspace{0.5cm} $\langle \textsf{Sum} \rangle =\langle \textsf{Sum} \rangle +\langle\textsf{Count}[i]\rangle$
\item \hspace{0.5cm} \textbf{If} $\Big(!(\Rec(\Cmp(\Abs(\langle \textsf{Sum}\rangle - 2i\cdot c, c)) \text{ OR } \Rec(\Cmp(\Cnt [i], 4c)))\Big )$
\item \hspace{1cm} \textbf{Return} 0
\item \hspace{0.5cm} \textbf{EndIf}
\item \textbf{EndFor}
\item \textbf{Return} 1
\end{enumerate}
\end{mybox}
\noindent\begin{minipage}{\columnwidth}
\captionof{figure}{Protocol $\Pi_{\text{Dummy}}$}\label{fig:prot:dummy}
\end{minipage}

\begin{mybox}{Protocol: \textsf{SecureIntervalSearch} $\Pi_{\textsf{IntervalSearch}}$}
\textbf{Parameters:}  Bit length $l$
\\
\textbf{Input:} Secret shared list $T_1$ of interval lengths; Secret shared list $T_2$ of the lowest value of the intervals; Secret shared index $\langle  ind \rangle$ ;\\
\textbf{Output:}
\\\textbf{Initialization:}
\begin{enumerate}[nosep, label=\arabic*.] 
\item Set $val_1=0, val_2=0$
\item \textsf{Share}($val_1$)
\item \textsf{Share}($val_2$)
\end{enumerate}
\textbf{Protocol:}\\
Each server $\Ser_b$
\begin{enumerate}[nosep, label=\arabic*.] 
 \setItemnumber{4}
\item $\langle r \rangle = \Frand(l)$
\item \textbf{For} $i \in [|L_1|]$
\item \hspace{0.5cm} $\langle t \rangle = \Equal(\langle ind \rangle, i)$
\item \hspace{0.5cm} $\langle val_1 \rangle = \langle val_1 \rangle + \Mult( \langle t \rangle, \langle L_1[i] \rangle)$
\item \hspace{0.5cm} $\langle val_2 \rangle = \langle val_2 \rangle + \Mult( \langle t \rangle, \langle L_2[i] \rangle)$
\item \textbf{EndFor}
\item $\langle ans \rangle = \langle val_1 \rangle  + \Trunc(\FLMult(\langle val_2 \rangle, \langle r \rangle, l )$

\end{enumerate}
\textbf{Output} $\langle ans \rangle$
\end{mybox}
\noindent\begin{minipage}{\columnwidth}
\captionof{figure}{Protocol $\Pi_{IntervalSearch}$}\label{fig:prot:interval}
\end{minipage}

\subsection{Authenticated Secret Shares}\label{app:ass}
Here we describe the \textsf{Share}(x) protocol with which a client can send authenticated secret shares of its private values to the servers. 
The third-party dealer selects the global MAC $\alpha$ and a random value $r$. Let $\alpha_0+\alpha_1=\alpha$ and $r_0+r_1=r$. The dealer distributes $z_b=\alpha_b\cdot r+\delta_b$ to server $\Ser_b$ s.t. $\delta_0+\delta_1=0$. To obtain the shares of $x$ from a client, server  $\Ser_b$ sends $r_b$ to the client. The client then reveals $y=x-(r_1+r_0)$. Note that in order to prevent the dealer from seeing this, the client may encrypt $y$ using shared secret keys established with each server and send the ciphertexts to them.
Server $\Ser_0$ sets the shares as $[x]_0=r_0, [\alpha x]_0=\alpha_0.y+z_0$ while $\Ser_0$ sets the shares as $[x]_0=y+r_1, [\alpha x]_1=\alpha_1.y+z_1$. It's easy to see that the shares are now in the correct form.

\section{Omitted Details from Continual Counting}\label{app:cont-count}
\def \CCd{\mathsf{CD}}
In this section, we show how we run the continual counting mechanisms  $\CC$ and $\CCd$. 

The first continual counting mechanism, which we denote by $\CC_\epsilon$, is the the binary tree mechanism~\cite{dwork2010differential, chan2011private}, which works by constructing a segment tree whose leaves are the intervals $[0,1), \ldots, [m-1, m)$, and sampling a Laplace random variable $\nu_I$ for each node $I$ of the tree. For $T=\lceil\log_2(m+1)\rceil$, let $I_1,\ldots, I_{T}$ denote the interval decomposition of $[0,i)$. 
Then, the total noise $\eta_i$ at index $i$ is given by $\sum_{j = 1}^T \eta_{I_j}$. In~\cite{imola2025differentiallyprivatequantilessmaller}, it is shown that rounding the binary tree mechanism makes it integer-valued, while preserving its privacy and utility properties. We are able to easily adapt that result to our setting. In the following, let $\mathsf{Cont}(m) \subseteq \{0,1\}^m$ denote the set of vectors $v$ such that there exist indices $1 \leq s \leq t \leq n$ such that $v_i = \mathbf{1}[s \leq i \leq t]$.

\begin{lemma}\label{lem:co} 
    There exists a noise distribution $\CC_\varepsilon(m)$ (written $\CC$ for short) supported on $\mathbb{Z}^m$ such that 
    \begin{itemize}
        \item For all vectors $\boldsymbol{v} \in \Z^m$ and $\mathbf{e} \in \mathsf{Cont}(m)$, we have 
    \begin{gather*}
        \Pr_{\boldsymbol{\eta} \sim \CC}[\boldsymbol{\eta}=\boldsymbol{v}] \leq e^{\epsilon} \Pr_{\boldsymbol{\eta} \sim \CC} [ \boldsymbol{e} + \boldsymbol{\eta} = \boldsymbol{v}] \\
        \Pr_{\boldsymbol{\eta} \sim \CC}[\boldsymbol{e}+\boldsymbol{\eta}=\boldsymbol{v}
        ] \leq e^{\epsilon} \Pr_{\boldsymbol{\eta} \sim \CC} [ \boldsymbol{\eta} = \boldsymbol{v} ]
   \end{gather*}
    \item For all $\beta > 0$, $\Pr_{\boldsymbol{\eta} \sim \CC}[\|\boldsymbol{\eta}\|_\infty \geq \frac{6}{\epsilon}\log(m) \log(\frac{2m}{\beta})] \leq \beta$.
    \end{itemize}
\end{lemma}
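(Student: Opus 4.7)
The plan is to construct $\CC_\varepsilon(m)$ as an integer-valued instantiation of the classical binary tree mechanism of Dwork et al.\ and Chan et al., following the rounding trick used in \cite{imola2025differentiallyprivatequantilessmaller}. Concretely, build a complete binary segment tree of depth $T=\lceil\log_2(m+1)\rceil$ whose leaves are the singleton intervals $[0,1),\dots,[m-1,m)$, and for every node $I$ sample an independent integer-valued noise $\nu_I$ from a discrete (two-sided geometric / rounded-Laplace) distribution of scale $b=O(T/\varepsilon)$. For each index $i\in[m]$, decompose $[0,i)$ into at most $T$ dyadic intervals $I_1,\dots,I_T$ along the tree and set $\eta_i=\sum_{j=1}^{T}\nu_{I_j}$. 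The distribution of $\boldsymbol{\eta}=(\eta_1,\dots,\eta_m)$ is by definition $\CC_\varepsilon(m)$, and it is supported on $\Z^m$ because the $\nu_I$ are integers.

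\textbf{Privacy.} The key structural observation is that each $\mathbf{e}\in\mathsf{Cont}(m)$ is the vector of prefix-sum changes induced by a single substitution: moving one unit from input coordinate $s$ to input coordinate $t{+}1$ changes the partial sums by exactly $\mathbf{1}[s\le i\le t]$. Under the binary tree mechanism, each such unit change affects only the $T$ nodes on the root-to-leaf path of the modified coordinate, so shifting $\boldsymbol{\eta}$ by $\mathbf{e}$ corresponds to adding $\pm 1$ to at most $2T$ node noises $\nu_I$. I would then invoke the discrete Laplace property that its density at $x$ and at $x{+}1$ differ by a factor at most $e^{1/b}=e^{\varepsilon/(2T)}$, and conclude by composition over the $\le 2T$ affected nodes that the densities of $\boldsymbol{\eta}$ and $\boldsymbol{\eta}+\mathbf{e}$ at any point $\mathbf{v}\in\Z^m$ differ by at most $e^{\varepsilon}$, yielding both inequalities in the statement.

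\textbf{Utility.} Each $\eta_i$ is a sum of at most $T$ independent centered discrete Laplace variables of scale $O(T/\varepsilon)$. Such sums are sub-exponential, so by a standard Bernstein-type tail bound, $|\eta_i|\le O\bigl(\tfrac{T}{\varepsilon}\log(m/\beta)\bigr)$ with probability at least $1-\beta/m$; a union bound over the $m$ prefix sums gives $\|\boldsymbol{\eta}\|_\infty\le \tfrac{6}{\varepsilon}\log(m)\log(\tfrac{2m}{\beta})$ with probability $\ge 1-\beta$, with the constant $6$ absorbing the sub-exponential constants after substituting $T=\lceil\log_2(m+1)\rceil$.

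\textbf{Main obstacles.} The most delicate step is the rounding / discretization: a naive deterministic rounding of continuous Laplace noise breaks exact $\varepsilon$-DP at the boundaries, so I would explicitly use the discrete Laplace (two-sided geometric) distribution, whose density ratio $e^{1/b}$ gives the required per-node guarantee without loss. A secondary subtlety is bookkeeping the factor of $2T$: because $\mathbf{e}\in\mathsf{Cont}(m)$ corresponds to two endpoint changes rather than one, the per-node scale must be set to $2T/\varepsilon$ (not $T/\varepsilon$) to guarantee the final $e^{\varepsilon}$ ratio, which is what fixes the constants in the utility bound.
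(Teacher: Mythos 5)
Your proposal is correct in substance but proves the lemma by a different route than the paper. The paper's proof is a two-line reduction: it cites Lemma~3.1 of \cite{imola2025differentiallyprivatequantilessmaller} (the \emph{rounded} binary tree mechanism, stated for add/remove neighbors, i.e.\ suffix-indicator shifts), and handles the interval vectors $\mathbf{e}\in\mathsf{Cont}(m)$ by writing them as the difference of two suffix indicators, applying that lemma twice with budget $\epsilon/2$ and invoking group privacy; the constant $\frac{6}{\epsilon}$ in the tail bound is then inherited directly from the cited statement under the halved budget. You instead rebuild the mechanism from scratch with per-node discrete Laplace noise of scale $2T/\epsilon$ and argue substitution directly: $\mathbf{e}$ changes at most $2T$ node counts by $\pm 1$, so a change of variables on the node noises plus the per-node ratio $e^{1/b}$ gives the $e^{\epsilon}$ bound, and a sub-exponential (Bernstein-type) tail bound plus a union bound gives the $\ell_\infty$ guarantee. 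Your direct argument is self-contained and avoids the group-privacy step, at the cost of having to re-verify concentration and chase constants yourself; the paper's reduction buys exact constants and a shorter proof but leans on the external lemma. Two caveats on your write-up: first, your claim that deterministically rounding continuous Laplace noise ``breaks exact $\varepsilon$-DP at the boundaries'' is not correct --- rounding to the nearest integer commutes with integer shifts, since $\Pr[\lfloor \nu \rceil = v] = \Pr[\nu \in [v-\tfrac12, v+\tfrac12)]$ and the Laplace density ratio integrates over the shifted interval; this is precisely what the rounding result the paper cites exploits, so discrete Laplace is a valid but not necessary fix. Second, the constant $6$ in the utility bound is asserted (``absorbing the sub-exponential constants'') rather than derived; with scale $2T/\epsilon$ and $T=\lceil\log_2(m+1)\rceil$ a Chernoff bound for the sum of $T$ such variables does land within the stated bound, but in a full proof you would need to carry this computation out explicitly rather than appeal to a generic Bernstein inequality.
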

\begin{proof}
    This is immediate from~\cite[Lemma 3.1]{imola2025differentiallyprivatequantilessmaller}. That result applies to add/remove privacy; since we are using substitute privacy (with fixed $n$), we need to apply the lemma twice, with budget $\frac{\epsilon}{2}$, and then use group privacy.
\end{proof}

Our second continual counting algorithm, which we denote by $\CCd$ for \emph{continual differences}, will be generated instead by first generating $\boldsymbol{\eta} \sim \CC_\epsilon(m)$, truncating each $\eta_i$ to $[-6\log(m) \log(\frac{2m}{\delta})/\epsilon, 6\log(4m) \log(\frac{2m}{\delta})/\epsilon]$, and then by defining $\nu_i =12 \log(m) \log(\frac{2m}{\delta})/\epsilon +  \eta_i-\eta_{i-1}$ (taking $\eta_0 = 0$). This simple transformation satisfies the following property:

\begin{lemma}\label{lem:cd}
    There exists a noise distribution $\CCd_\varepsilon(m)$ (written $\CCd$ for short) supported on $\Z^m$ such that, for all $\delta, \beta > 0$,
    \begin{itemize}
    \item For subsets $V \subseteq \Z^m$ and $\mathbf{e} \in \{-1,0,1\}^n$ such that there exist two indices $1 \leq s \leq t \leq n$ such that $e_s = -1$, $e_t = 1$, and $e_i = 0$ otherwise, we have 
    \begin{gather*}
        \Pr_{\boldsymbol{\nu} \sim \CCd}[\boldsymbol{\nu}\in V] \leq e^{\epsilon} \Pr_{\boldsymbol{\nu} \sim \CCd} [\boldsymbol{e} + \boldsymbol{\nu} \in V]+\delta \\
        \Pr_{\boldsymbol{\nu} \sim \CCd}[\boldsymbol{e}+\boldsymbol{\nu}\in V] \leq e^{\epsilon} \Pr_{\boldsymbol{\nu} \sim \CCd} [ \boldsymbol{\nu} \in V ] + \delta
    \end{gather*}
    \item For all indices $1 \leq i \leq m$, with probability at least $1-\beta$, \begin{gather*}
    \left\lvert\textstyle{\sum_{j=1}^i} \nu_j - \frac{12i}{\epsilon}\log (m)\log(\frac{2m}{\delta})\right\rvert \leq \tfrac{6}{\epsilon} \log(m) \log(\tfrac{2m}{\beta}), \\
    \text{ and }\nu_i \in [0, \tfrac{24}{\epsilon} \log(m) \log(\tfrac{2m}{\delta})].
    \end{gather*}
    \end{itemize}
\end{lemma}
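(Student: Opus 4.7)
\textbf{Plan for the proof of Lemma~\ref{lem:cd}.} The core idea is that $\CCd$ is obtained from $\CC$ by an affine post-processing (an additive constant plus taking consecutive differences) followed by truncation, so both privacy and utility should reduce to the corresponding properties of $\CC$ in Lemma~\ref{lem:co}. Let $c = \tfrac{12}{\epsilon}\log(m)\log(\tfrac{2m}{\delta})$ and $B = \tfrac{6}{\epsilon}\log(m)\log(\tfrac{2m}{\delta})$, and write $\Phi(\boldsymbol{\eta})_i := c + \eta_i - \eta_{i-1}$ for the map from the untruncated $\CC$-noise to $\boldsymbol{\nu}$.

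\textbf{Step 1 (Privacy, untruncated).} Given $\mathbf{e}$ with $e_s=-1$, $e_t=+1$, and zeros elsewhere, I would define $\mathbf{e}'' \in \mathsf{Cont}(m)$ by $e''_i = \mathbf{1}[s \leq i \leq t-1]$. A direct telescoping calculation shows $\Phi(\boldsymbol{\eta} - \mathbf{e}'') = \Phi(\boldsymbol{\eta}) + \mathbf{e}$, since $\mathbf{e}''$ contributes only at the two endpoints of the interval when we take consecutive differences. Hence shifting the output of $\CCd$ by $\mathbf{e}$ corresponds exactly to shifting the underlying $\CC$-noise by $-\mathbf{e}''$. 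Applying the two inequalities of Lemma~\ref{lem:co} (once directly and once with $\boldsymbol{v}$ replaced by $\boldsymbol{v}+\mathbf{e}''$) transfers pure $\epsilon$-DP to $\Phi(\boldsymbol{\eta})$ for any additive shift by such $\mathbf{e}$.

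\textbf{Step 2 (Handling truncation).} Let $\mathcal{E}$ denote the event that $\|\boldsymbol{\eta}\|_\infty \leq B$, which by the utility statement of Lemma~\ref{lem:co} (applied with failure probability $\delta$) occurs with probability at least $1-\delta$. On $\mathcal{E}$, truncation is the identity, and the map $\boldsymbol{\eta}\mapsto\boldsymbol{\nu}$ is exactly $\Phi$. For any measurable $V \subseteq \Z^m$,
\begin{align*}
\Pr[\boldsymbol{\nu}\in V] &\leq \Pr[\Phi(\boldsymbol{\eta})\in V,\, \mathcal{E}] + \delta \\
&\leq e^{\epsilon}\Pr[\Phi(\boldsymbol{\eta}-\mathbf{e}'')\in V] + \delta \\
&= e^{\epsilon}\Pr[\mathbf{e}+\boldsymbol{\nu}\in V] + \delta,
\end{align*}
using Step 1 in the middle inequality; a symmetric argument handles the reverse direction. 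This is the $(\epsilon,\delta)$-DP claim. The main subtlety here, which I would work out carefully, is that the bijection $\boldsymbol{\eta}\leftrightarrow\boldsymbol{\eta}-\mathbf{e}''$ preserves membership in $\mathcal{E}$ up to a shift of $1$ in the relevant coordinates; taking $B$ slightly larger than the utility threshold (as done in the lemma statement, with $\log(4m)$ replacing $\log(m)$ on one side) absorbs this off-by-one without changing any constants.

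\textbf{Step 3 (Utility).} The prefix-sum bound is immediate from telescoping: $\sum_{j=1}^{i}\nu_j = ci + \eta_i - \eta_0 = ci + \eta_i$ (post-truncation), so the required bound
\[
\left|\textstyle\sum_{j=1}^i\nu_j - ci\right| = |\eta_i| \leq \tfrac{6}{\epsilon}\log(m)\log(\tfrac{2m}{\beta})
\]
follows from the tail bound of Lemma~\ref{lem:co} on $\|\boldsymbol{\eta}\|_\infty$ with failure probability $\beta$ (noting that whenever the untruncated bound holds with threshold $\leq B$, truncation has no effect; when $\beta < \delta$ the stated threshold already exceeds $B$ so the bound is trivial post-truncation). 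Finally, $\nu_i = c + \eta_i - \eta_{i-1} \in [c - 2B,\, c + 2B] = [0,\, \tfrac{24}{\epsilon}\log(m)\log(\tfrac{2m}{\delta})]$ by construction, since each $\eta_j$ lies in $[-B, B]$ after truncation. I expect Step 2 to be the only non-routine part, because one must verify that replacing untruncated $\boldsymbol{\eta}$ with its truncation is consistent on both sides of the shifted-distribution inequality.
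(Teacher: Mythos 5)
Your proposal is correct and follows essentially the same route as the paper's proof: privacy is reduced to Lemma~\ref{lem:co} by observing that the substitution shift $\mathbf{e}$ corresponds, under consecutive differences, to a contiguous-interval shift $\mathbf{e}''\in\mathsf{Cont}(m)$ of the underlying $\CC$ noise, truncation is absorbed as a $\delta$-mass perturbation, and utility follows by telescoping $\sum_{j\le i}\nu_j = ci+\eta_i$ plus the $\|\boldsymbol{\eta}\|_\infty$ tail bound---indeed you spell out details (the $\mathbf{e}\leftrightarrow\mathbf{e}''$ correspondence and the $\beta$ versus $\delta$ case split) that the paper leaves implicit. The only caveat is that the final displayed ``equality'' in your Step 2 compares an untruncated quantity with the truncated $\boldsymbol{\nu}$ and so really costs an additional $e^{\epsilon}\delta$ (or the on-event argument you sketch), but this is exactly the slack the paper's own proof glosses over and only affects the constant in the $\delta$ term.
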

\begin{proof}

    To show the first property, suppose there is \emph{no} truncation step. By Lemma~\ref{lem:co}, a pure DP holds guarantee of the form 
    
        \begin{gather*}
        \Pr_{\boldsymbol{\nu} \sim \CCd}[\boldsymbol{\nu} = \boldsymbol{v}] \leq e^{\epsilon} \Pr_{\boldsymbol{\nu} \sim \CCd} [\boldsymbol{e} + \boldsymbol{\nu} = \boldsymbol{v}] \\
        \Pr_{\boldsymbol{\nu} \sim \CCd}[\boldsymbol{e}+\boldsymbol{\nu}= \boldsymbol{v}] \leq e^{\epsilon} \Pr_{\boldsymbol{\nu} \sim \CCd} [ \boldsymbol{\nu} = \boldsymbol{v} ].
    \end{gather*}
    By the utility guarantee of Lemma~\ref{lem:co}, truncation only shifts the at most $\delta$ mass in the distributions of $\boldsymbol{\nu}, \boldsymbol{e} + \boldsymbol{\nu}$. Thus $(\epsilon, \delta)$-DP holds.

    The second property is immediate from applying the utility guarantee of Lemma~\ref{lem:co} with the parameter $\beta$.
\end{proof}
\section{Additional Plots}\label{app-plots}

\begin{figure}
	\centering
            \includegraphics[width=0.8\linewidth]{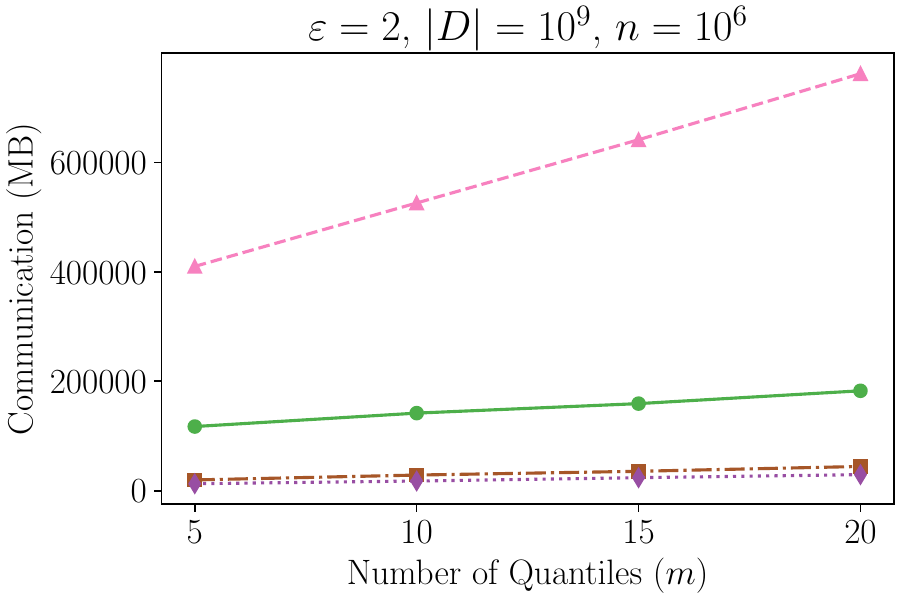}
            \includegraphics[width=0.8\linewidth]{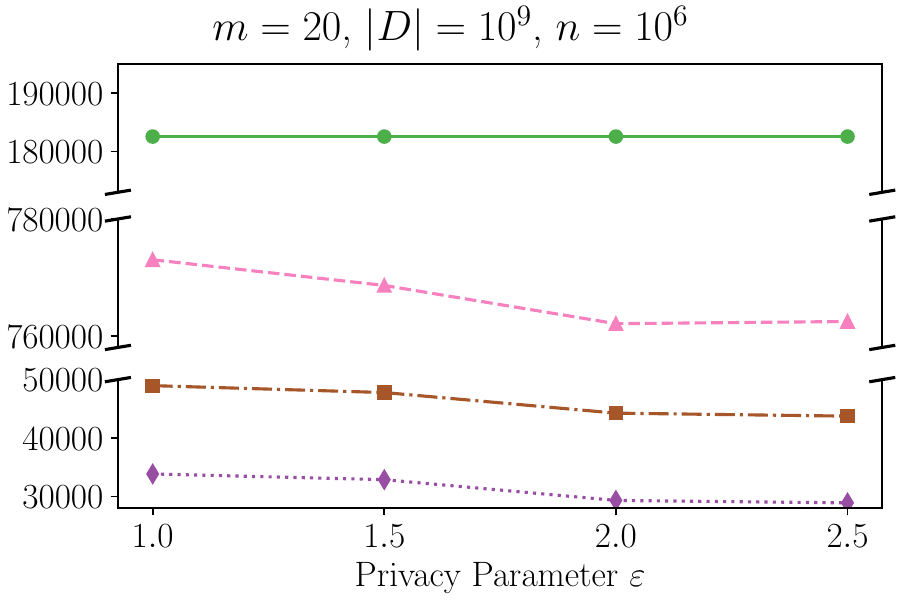}
            \includegraphics[width=0.6\linewidth]{figures/CCS/legend_algos.pdf}%
    \caption{Communication cost benchmarks.}
    \label{fig:communication}
\end{figure}
\name~also offers substantial improvement in communication overhead over the baselines - up to over $\sim 21 \times$ and $\sim 6\times$ over SQ and AQ, respectively (Fig. \ref{fig:communication}). 

In Figure~\ref{fig:uti-eps-extra}, we show all utility data collected for the different algorithms. The same trends found in Section~\ref{sec:eval} hold. Some of the tested datasets had many repeated values, and we used an extended notion of rank error to evaluate the algorithms in the experiments. For example, if the $20\%$ to $25\%$ quantiles of a dataset are all the same value, and the algorithm returns this value, its empirical error is measured to be $0$. For this reason, on some datasets, including Mixture and Fraud datasets, particularly with a small number of quantiles, the error for some algorithms in Figure~\ref{fig:uti-eps-extra} evaluate to $0$.

Finally, when $m$ increases, the queried quantiles are not the same---for example, when $m=2$, the queried quantiles are $\{33\%, 67\%\}$, while for $m=3$, the quantiles are $\{25\%, 50\%, 75\%\}$, a completely disjoint set. As the datasets are non-uniform, different quantiles present different levels of difficulty for each algorithm. For this reason, the trend of quantile error increasing with $m$ is noisy, and there are spikes in the data such as on the Walmart dataset.

\begin{figure*}
	\centering
        \includegraphics[width=0.21\linewidth]{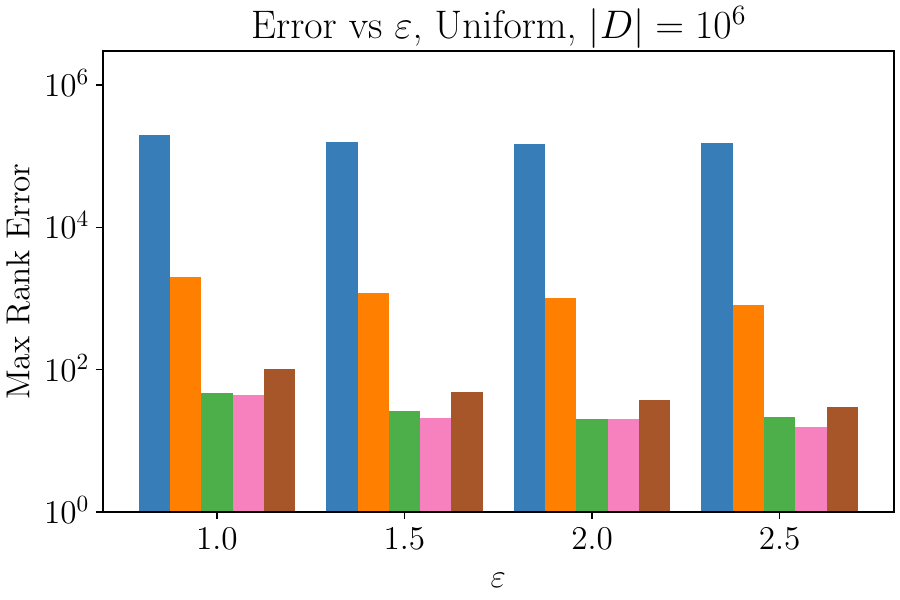}
        \includegraphics[width=0.21\linewidth]{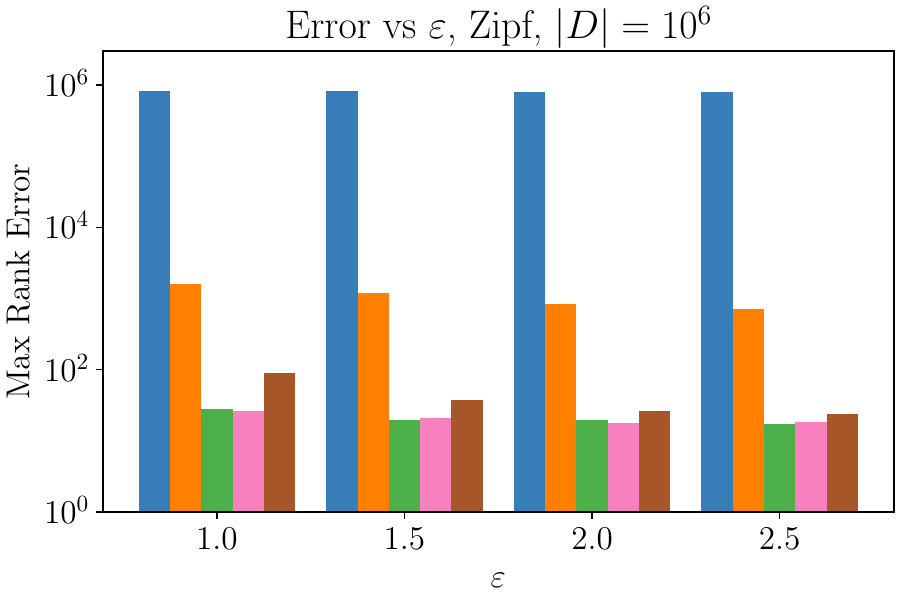}
        \includegraphics[width=0.16\linewidth]{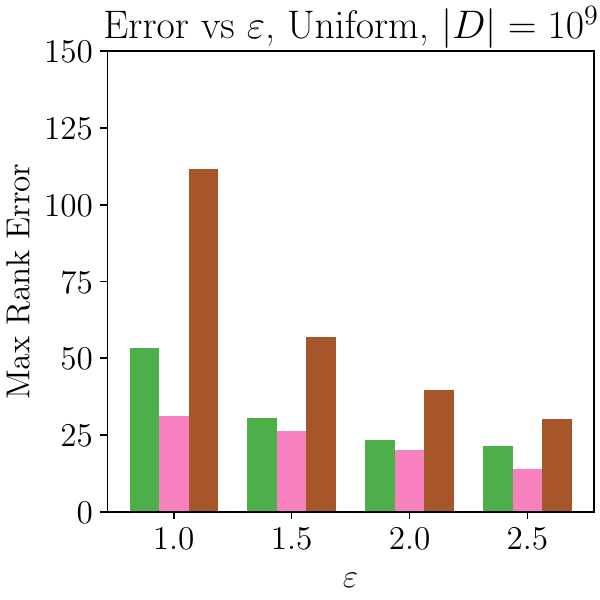}
        \includegraphics[width=0.16\linewidth]{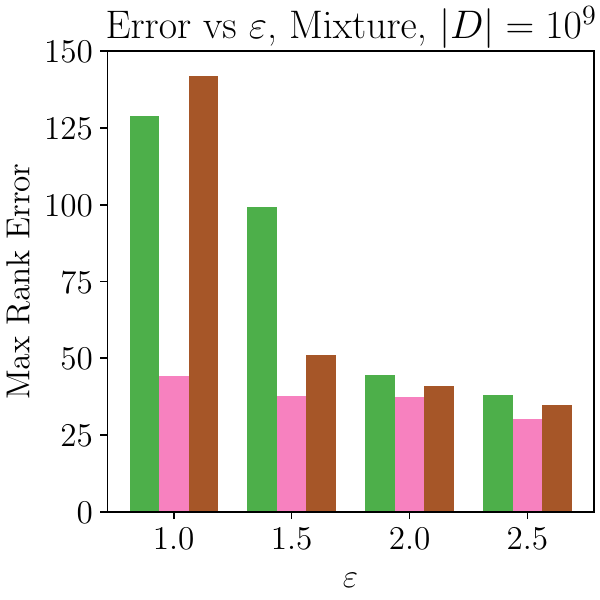}
        \includegraphics[width=0.16\linewidth]{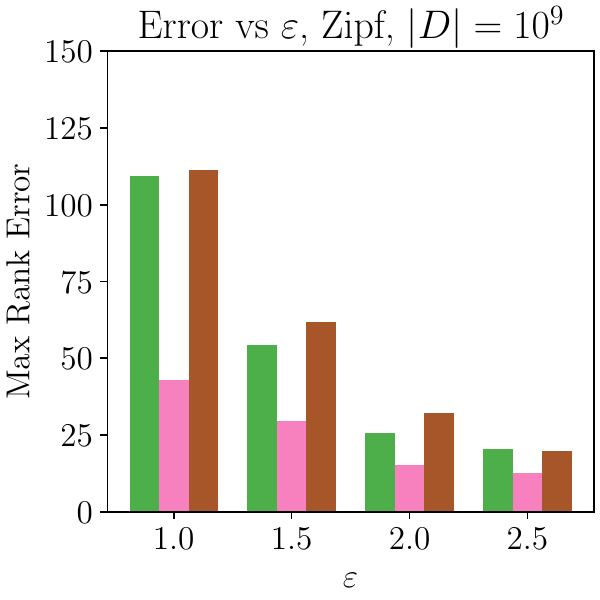}
        \includegraphics[width=0.21\linewidth]{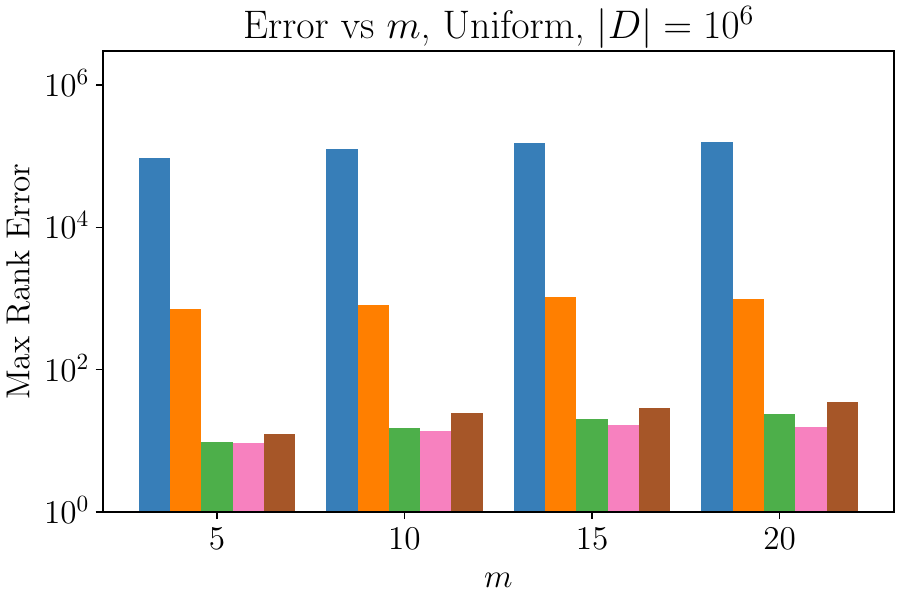}
        \includegraphics[width=0.21\linewidth]{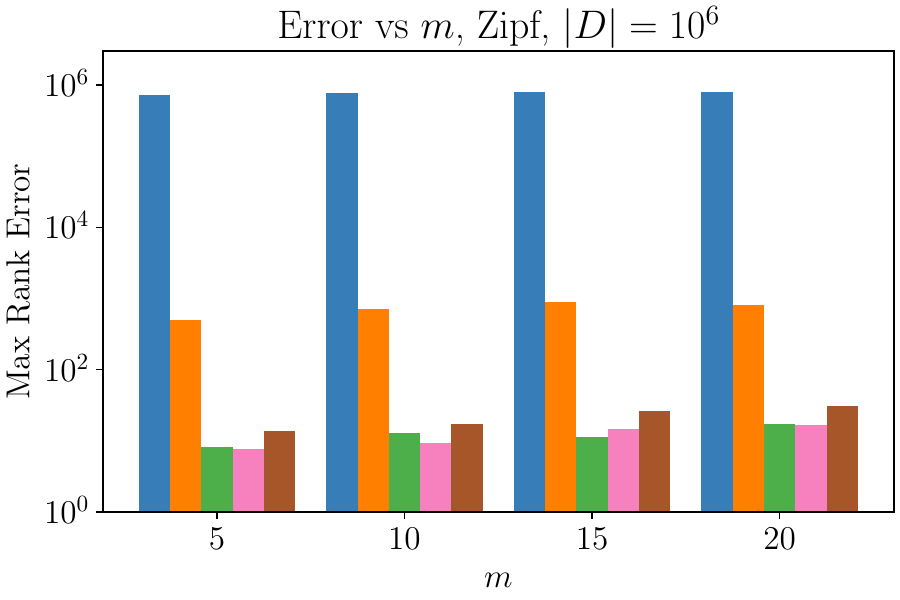}
        \includegraphics[width=0.21\linewidth]{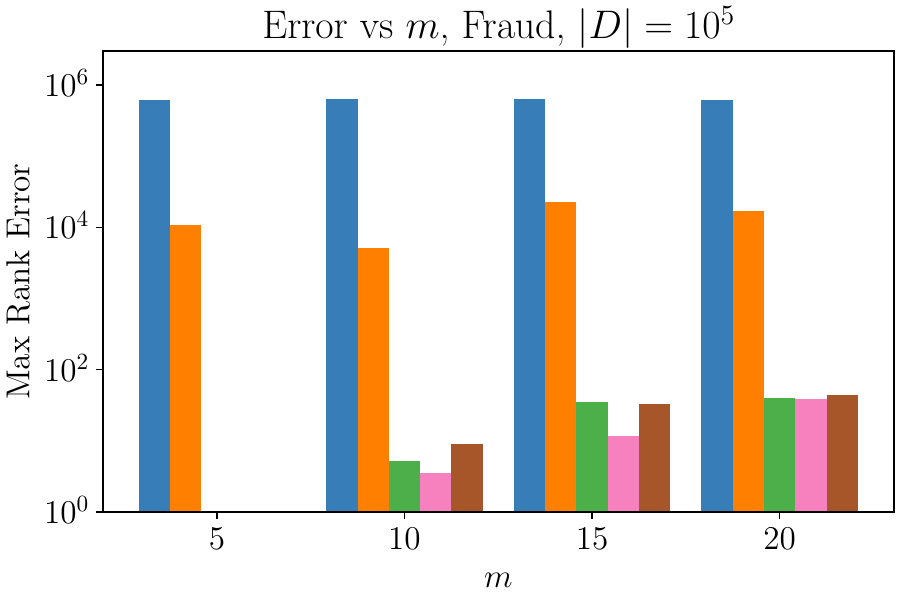}
        \includegraphics[width=0.16\linewidth]{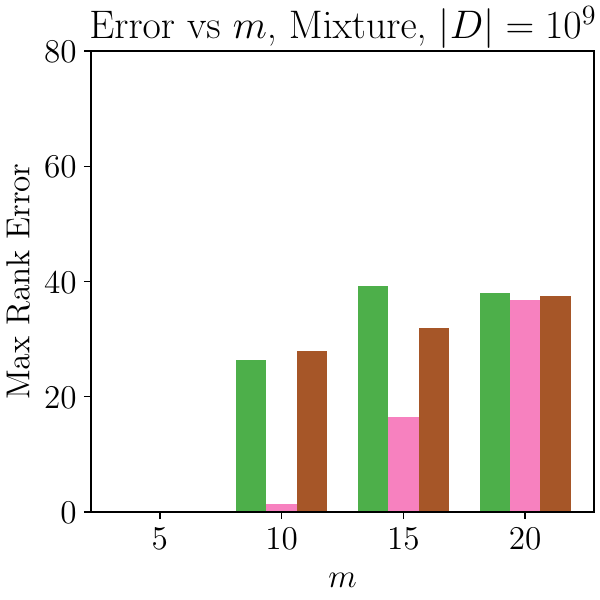}
        \includegraphics[width=0.16\linewidth]{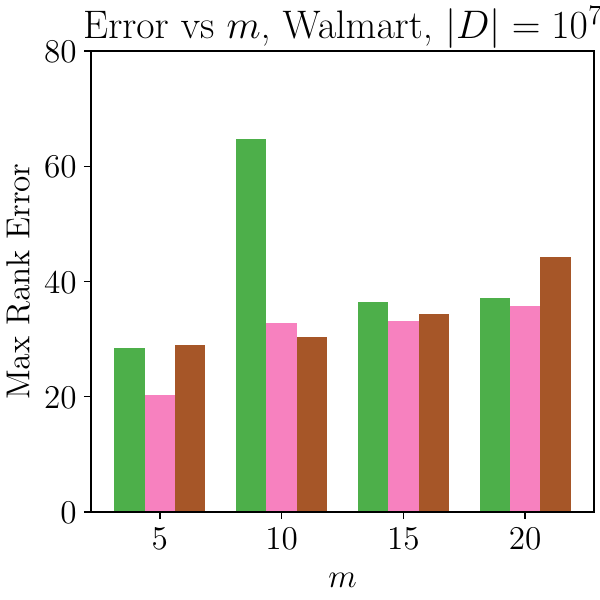}
        \includegraphics[width=0.5\linewidth]{figures/Utility/legend-small.pdf}
\caption{Additional utility plots of quantile estimation methods against $\epsilon$. The domain $D$ depends on the dataset. For datasets with larger $D$, LDP and Shuffle are not plotted due to infeasible running times.}
  \label{fig:uti-eps-extra}
\end{figure*}
\section{Guidelines for Setting Hyperparameters in ${\name}$}\label{sec:set-hyperp}

Here, we show how to fine-tune the parameters $\epsilon_1, \epsilon_2, \epsilon_3$, and $k$, given a total privacy budget $\epsilon$, as well as the parameters $n,m,\delta,\beta$ which are part of the input.

First, $\epsilon_1$ is subject to privacy amplification by subsampling~\cite[Theorem 10]{balle2018privacy}, and thus it can be amplified to the smaller value 
\[
f(\epsilon) = \ln\left(1 + \left(1-\left(1-\tfrac{1}{n}\right)^k\right)(e^{\epsilon_1}-1)\right).
\]

Because $\epsilon_1$ only affects the running time, and only once it is set sufficiently low, one can binary search for the value $\epsilon_1$ such that $\epsilon^* = f^{-1}(\epsilon_1)$ satisfies
\[
    \ln(\tfrac{1}{\beta})^{1/3} = \sqrt{\frac{\ln(|D|m/\beta) + \log(m)\log(m/\beta)}{\epsilon^*(nm)^{1/3}}}.
\]
According to Theorem~\ref{thm:runtime}, this is the minimum value of $\epsilon_1$ that will result in a running time that grows with $(nm)^{1/3} \ln(\frac{1}{\beta})^{1/3}$, and for most applications, due to the privacy amplification, it will use a very small portion of the privacy budget. Based on this way of choosing $\epsilon_1$, one should set $k$ roughly equal to $(mn)^{2/3}\ln(\frac{1}{\beta})$, and can fine-tune this value experimentally.

Having set $k$ and $\epsilon_1$, we found that the remaining privacy budget $\epsilon - \epsilon_1$ should be split roughly equally between $\epsilon_2, \epsilon_3$. In our experiments, we found that splits between $0.3$ and $0.7$ performed best.

\section{Privacy Proofs}
The first three proofs deal with the privacy of the three functionalities; the last deals with the security of $\Pi_{\name}$.
\subsection{Proof of Theorem \ref{thm:privacy:SEM}}
\begin{proof}
    We will show that for any quantiles satisfying the assumption gap of $\FSEM$, and any fixed values of $\eta_1^\calA, \ldots, \eta_m^{\calA}$ which do not cause $\FSEM$ to abort, $\FSEM$ has the same behavior as the $\mathsf{SliceQuantiles}$ algorithm run on the shifted quantiles $q_1 +\frac{w}{2n} - \frac{\eta_1^\calA}{n}, \ldots, q_m +\frac{w}{2n} - \frac{\eta_m^\calA}{n}$, except in total variation distance at most $\delta$. To show this, we note the following implementation differences in the two algorithms:

    \squishl
        \item The continual counting noise $\CC$ is truncated to be in the interval $[-\frac{w}{2}, \frac{w}{2}]$ and shifted up by $\frac{w}{2}$ in $\FSEM$, whereas it is not truncated in $\mathsf{SliceQuantiles}$, and the algorithm instead aborts (returns a random value) if the noise values are not observed to be in this range.
        \item $\mathsf{SliceQuantiles}$ requires the gap between quantiles to be $\frac{2}{n} (\frac{w}{2} + h + 1)$, whereas $\FSEM$ requires them to be at least $\frac{2}{n} (w + h + 1)$. 
    \squishe
    This justifies the quantile shifts of $\frac{w}{2n} - \frac{\eta^\calA}{n}$, as the $\frac{w}{2n}$ undoes the $\frac{w}{2}$ term added by $\FSEM$. Furthermore, $\mathsf{SliceQuantiles}$ will not reject the shifted quantiles as they satisfy its separation gap since each $\eta_i^\calA \in [0,w]$. Finally, the truncation step only affects $\mathsf{SliceQuantiles}$ in TVD $\frac{\delta}{2}$ by the choice of $w$.

    We can then directly apply   \cite[Theorem 4.2]{imola2025differentiallyprivatequantilessmaller} with $\epsilon_1 = \frac{\epsilon}{4}$ (since our implementation of $\CC$ satisfies substitute privacy already) and $\epsilon_2 = \frac \epsilon 6$ to establish that $\mathsf{SliceQuantiles}$ satisfies $(\epsilon, \frac{\delta}{2} e^\epsilon)$-DP. As $\FSEM$ only differs in TVD at most $\frac{\delta}{2}$, the resulting privacy guarantee is $(\epsilon, \delta e^\epsilon)$.
\end{proof}
\subsection{Proof of Theorem \ref{thm:privacy:bucketing}}
\begin{proof} 
Since the adversary has no information about $X$, the probability that the adversary makes the algorithm halt is the same whether $X,X'$ is used. Thus, it suffices to show privacy conditioned on the algorithm not aborting.

The view of the adversary is given by $\textsc{View}^\calA_{\FBuc}=(\sum_{i=1}^{2m
+1}\gamma_i, \boldsymbol{\gamma}^\calA, \mathbf{cnt})$. Let $X,X'$ denote two adjacent datasets, and define $b_i = \lvert B_i\rvert$ when $X$ is used, and $b_i' = \lvert B_i' \rvert$ when $X'$ is used. Observe that the view may be reconstructed from the vector $(\sum_{i=1}^{2m+1} \gamma_i, b_1+\gamma_1, \ldots, b_{2m+1} + \gamma_{2m+1} )$ by a post-processing function that depends on $\boldsymbol{\gamma}^\calA$. This itself is a post-processing of $(-\sum_{i=1}^{2m+1} b_i, b_1 + \gamma_1, \ldots, b_{2m+1}+ \gamma_{2m+1})$, which is again a post-processing of $\boldsymbol{b}+\boldsymbol{\gamma}$ (since the first coordinate may be dropped as it is simply $-n$). The vector $\boldsymbol{b}'$ is equal to $\boldsymbol{b}$ except for in exactly two coordinates $s,t$, where it holds that $b_s' =b_s+1$ and $b_t'=b_t-1$. Thus, $(\epsilon, \delta)$-DP between $\boldsymbol{b} + \boldsymbol{\gamma}$ and $\boldsymbol{b}' + \boldsymbol{\gamma}$ then follows from Lemma~\ref{lem:cd}. 
\end{proof}
\subsection{Proof of Theorem \ref{thm:name}}
\begin{proof}
    The algorithm may be viewed as a composition of running $\FSEM$ on the subsampled dataset with privacy parameters $(\epsilon_1, \delta)$, running $\FBuc$ on $X$ with parameters $(\epsilon_2, \delta)$, and a parallel composition of $\FSEM$ on the partitions $B_1, \ldots, B_{2m}$ of $X$ with parameters $(\epsilon_3, \delta)$. Using the simple composition theorem for approximate DP and the assumption on that $\varepsilon_1,\varepsilon_2,\varepsilon_3 = O(1)$ we get the stated bound.

\end{proof}

\subsection{Proof of Theorem \ref{thm:security}}\label{app:security}
\begin{proof}
We begin by first providing a construction for our
simulator $\textsf{Sim}$ for the ideal world. We work in the hybrid
model, where invocations of sub-protocols can be replaced
with that of the corresponding functionalities, as long as the
sub-protocol is proven to be secure. We will operate in the
$(\F_{\textsf{Sort}},\F_{\textsf{Shuffle}})$-hybrid model, where we assume the existence of a
secure protocol 
which realizes the desired ideal functionalities  $\F_{\textsf{Sort}}$ and $\F_{\textsf{Shuffle}}$.
For the ease of exposition, we start with the case of honest clients.
 WLOG let us assume $\Ser_1$ is the corrupt server. The simulator sends  random values as the shares of the honest clients to the $\Ser_1$. All the messages from $\Ser_0$ to $\Ser_1$ can be simulated using dummy values. If any intermediate values are revealed (e.g., comparison outcomes), the simulator uses access to the ideal functionality  to construct consistent values. In particular, $\Sim$ samples the noise for creating the masking array $L^0$ from the correct distribution, and sends the shares to $\calA$. $\Sim$ receives the corresponding shares of the $\calA$'s masking array.  If the shares are not valid, $\Sim$ sends abort to $\mathcal{F}$.  Next, $\Sim$ receives the total number of dummy records from $\F_\name$ and creates the dummy records accordingly, that passes the check.   It, in turn, receives the shares of $\calA$'s dummy records. In case, the shares are invalid, $\Sim$ sends abort to $\F_{\name}$. Else, $\Sim$ sends a vector $\boldsymbol{\eta^\calA}$ such that $\sum_{i=1}^{2m+1}\eta^\calA_i = t_\calA$ which is the total number of dummy records received from $\calA$. Using $\sum_{i=1}^{2m} \eta_i$  and $cnt$, from $\F_{\name}$, the simulator can distribute the noise it generated over the buckets to be consistent with $cnt$.
\end{proof}

\section{Utility and Runtime Proofs}
In this section, for a dataset $X$ and an element $z \in D$, we let $\rank_X(z) = \sum_{x \in X} \mathbf{1}[x \leq z]$.
\subsection{Proof of Theorem~\ref{thm:slice-utility}}
By the utility guarantee of $\CC$ (Lemma~\ref{lem:co}), the $\Delta_i$'s in $\FSEM$ satisfy $|\Delta_i| \leq \frac{24}{\epsilon} \log(m) \ln(\frac{2m}{\delta})$, and thus they contribute this much to the error. By the utility guarantee of the exponential mechanism~\cite{mcsherry2007mechanism}, all quantile estimates will have error at most $\frac{12}{\epsilon} \ln (\frac{m|D|}{\beta})$. The total error is the sum of the two values.

\subsection{Proof of Theorem~\ref{thm:bucket-util}}
\begin{proof}
    By the utility guarantee of $\CCd$ (Lemma~\ref{lem:cd}), we know, for all $1 \leq j \leq m$, $|2j\tau-\sum_{i=1}^j \gamma^\calA_i|> \tau$, with a similar guarantee holds for $\gamma_i$. Utility then follows from the definition $cnt_i=|B_i|+\gamma_i + \gamma_i^\calA$.
\end{proof}
\subsection{Proof of Theorem~\ref{thm:main-util}}\label{app:utility}
For utility, we make use of the following convergence inequality.
\begin{lemma}\label{lem:dkw}
    (Dvoretzky–Kiefer–Wolfowitz inequality, e.g.~\cite{kosorok2008introduction}): Suppose $\{x_1, \ldots, x_{k}\}$ are i.i.d. samples from a distribution $X$ on a domain $\calX$. Define $F_k(x) = \frac{1}{k} \sum_{i=1}^k \mathbf{1}[x_i \leq x]$ and $F_X(x) = \Pr[X \leq x]$. Then, for all $t \geq 0$, we have
    \[
        \Pr[\sup_{x \in \calX}|F_k(x) - F_X(x)| \geq t] \leq 2e^{-2kt^2}.
    \]
\end{lemma}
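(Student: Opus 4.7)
The plan is to follow the classical argument of Massart (1990). The first step is to reduce to the case of the uniform distribution on $[0,1]$ via the probability integral transform: setting $U_i = F_X(X_i)$, the $U_i$ are i.i.d.\ $\mathrm{Uniform}(0,1)$ when $F_X$ is continuous (and coupled to such variables via a right-continuous quantile inverse when $F_X$ has jumps or the domain is discrete). Under this coupling one checks that $\sup_{x \in \calX} |F_k(x) - F_X(x)| = \sup_{u \in [0,1]} |G_k(u) - u|$, where $G_k$ denotes the empirical CDF of the $U_i$'s. Thus it suffices to prove the inequality for the uniform empirical process.

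Next, I would split the two-sided event as
\[
\Pr\!\left[\sup_u |G_k(u)-u| \geq t\right] \leq \Pr\!\left[\sup_u (G_k(u)-u) \geq t\right] + \Pr\!\left[\sup_u (u - G_k(u)) \geq t\right],
\]
and observe that the two summands are equal in distribution by the symmetry $U_i \mapsto 1 - U_i$. Hence it suffices to prove the one-sided bound $\Pr[\sup_u (G_k(u) - u) \geq t] \leq e^{-2kt^2}$; the factor $2$ in the theorem then comes from doubling.

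For the one-sided bound, I would first note that $\sup_u (G_k(u) - u) = \max_{1 \leq i \leq k} (i/k - U_{(i)})$, where $U_{(1)} \leq \cdots \leq U_{(k)}$ are the order statistics, since the supremum of the piecewise-linear process $G_k(u) - u$ is attained immediately before a jump. For a single fixed $u$, Hoeffding's inequality applied to the Bernoulli sum $k G_k(u)$ already gives $\Pr[G_k(u) - u \geq t] \leq e^{-2kt^2}$, which is exactly the exponent we want. The main obstacle is upgrading this pointwise bound to the supremum over $k$ random thresholds without losing the factor of $k$ that a naive union bound would introduce. Massart's argument accomplishes this by introducing the first-hitting time $\tau = \min\{i : i/k - U_{(i)} \geq t\}$ and exploiting the martingale/exchangeable structure of the empirical process: one analyzes an exponential martingale tied to the binomial representation of $k G_k(u)$ and applies optional stopping at $\tau$, using the fact that $U_{(\tau)}$ is Beta-distributed together with sharp moment-generating-function estimates to cancel the combinatorial overcount. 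The output is precisely the Hoeffding exponent $2kt^2$ without any polynomial prefactor, which combined with the symmetry above yields the claimed bound $2e^{-2kt^2}$. A weaker variant of the argument (a union bound over the at most $k+1$ jumps of $G_k - u$) gives $2(k+1)e^{-2kt^2}$ with much less effort, but does not recover the sharp constant stated in the lemma.
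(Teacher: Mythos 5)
The paper does not prove this lemma at all: it is imported verbatim as the classical Dvoretzky--Kiefer--Wolfowitz inequality with Massart's sharp constant, and the citation to Kosorok's book is the entire ``proof.'' So there is nothing in the paper to compare your argument against step by step; the relevant question is whether your sketch would stand on its own. Your outline correctly reproduces the standard route (probability integral transform to reduce to the uniform case, symmetrization to reduce to a one-sided bound, and Massart's refinement of the Hoeffding pointwise estimate to the supremum), and your remark that a naive union bound over the $k+1$ jump points only yields $2(k+1)e^{-2kt^2}$ correctly identifies where the difficulty lies.

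However, as written this is a proof plan rather than a proof: the only genuinely hard step --- removing the polynomial prefactor via the stopping time $\tau$, the exponential martingale, and the Beta distribution of $U_{(\tau)}$ --- is described in one sentence but never carried out, and that computation is precisely where all the work in Massart's paper lives. Two smaller points: (i) the one-sided inequality $\Pr[\sup_u (G_k(u)-u)\geq t]\leq e^{-2kt^2}$ with constant $1$ is only valid when $e^{-2kt^2}\leq 1/2$; your doubling argument still delivers the two-sided statement for all $t\geq 0$ because the claimed bound exceeds $1$ otherwise, but this caveat should be stated. (ii) The reduction to the uniform case needs a word of care on the discrete domain $\calX$ used in the paper, since $F_X$ is then a step function and the coupling $U_i=F_X(X_i)$ must be replaced by the standard quantile-inverse coupling you allude to; under that coupling one gets $\sup_{x}|F_k(x)-F_X(x)|\leq\sup_{u}|G_k(u)-u|$ (an inequality, which suffices) rather than equality. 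Given that the lemma is a textbook result the paper deliberately treats as a black box, citing Massart (1990) for the sharp one-sided bound --- as you effectively do --- is an acceptable resolution, but you should then present it as such rather than as a proof.
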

This allows us to prove the following result important for utility:
\begin{lemma}\label{lem:initial-buckets}
    For a quantile $q \in Q$, let its associated quantile set be $Q_{j_q}$ in $\F_{\name}$. With probability $1-2\beta$, we have for all quantiles $q \in Q$ that 
    \begin{multline*}
        -\frac{12}{\epsilon_2}\log(m)\log(\tfrac{2m}{\beta}) \leq qn - 8\tau j_q + \sum_{i=1}^{2j_q-1}cnt_i \\ \leq cnt_{2j_q} + \frac{12}{\epsilon_2}\log(m)\log(\tfrac{2m}{\beta}),
    \end{multline*}
    where $\tau = \frac{6}{\epsilon_2} \log (m) \log(\frac{2m}{\delta})$.
\end{lemma}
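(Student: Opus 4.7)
The plan has two stages. \textbf{Stage 1 (reduce counts to bucket sizes).} I would apply Theorem~\ref{thm:bucket-util} at $j = 2j_q - 1$, which (conditioned on $\FBuc$ not halting) gives
\[
\sum_{i=1}^{2j_q-1} cnt_i \;=\; \sum_{i=1}^{2j_q-1} |B_i| \;+\; 4\tau(2j_q-1) \;+\; \Delta_1, \qquad |\Delta_1| \leq \tfrac{12}{\epsilon_2}\log(m)\log(\tfrac{2m}{\beta}),
\]
together with $cnt_{2j_q} - |B_{2j_q}| \in [0, 8\tau]$. Substituting this identity into the signed expression $qn - 8\tau j_q + \sum_{i=1}^{2j_q-1} cnt_i$ and collecting the $8\tau j_q$ terms reduces the stated two-sided bound to the deterministic rank-containment property $\sum_{i=1}^{2j_q-1}|B_i| \leq qn \leq \sum_{i=1}^{2j_q}|B_i|$, i.e., the true rank $qn$ lies within the range covered by bucket $B_{2j_q}$. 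The additive $\pm\frac{12}{\epsilon_2}\log(m)\log(2m/\beta)$ slack in the lemma is exactly $|\Delta_1|$, and the $cnt_{2j_q}$ term on the right side absorbs the $cnt_{2j_q} - |B_{2j_q}|$ gap plus the $\pm 4\tau$ constant left over from recentering.

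\textbf{Stage 2 (rank containment via subsample estimates).} By construction, $v'_{2j_q-1}$ and $v'_{2j_q}$ are two of the coordinates output by a single invocation of $\FSEM$ on the subsample $\Xsam \subseteq X$ of size $k$, with target quantiles $q^{-}_{j_q}=\min_{q'\in Q_{j_q}}(q'-\alpha)$ and $q^{+}_{j_q}=\max_{q'\in Q_{j_q}}(q'+\alpha)$, where $\alpha=\alpha_1+\alpha_2$. Theorem~\ref{thm:slice-utility} applied with budget $\epsilon_1$ and sample size $k$ gives, with probability at least $1-\beta$, that every returned estimate has rank-error on $\Xsam$ at most $\alpha_1 k$. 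The DKW inequality (Lemma~\ref{lem:dkw}) with $t=\alpha_2=\sqrt{\log(2/\beta)/(2k)}$ gives, with probability at least $1-\beta$, $|\rank_{\Xsam}(z)/k - \rank_X(z)/n| \leq \alpha_2$ uniformly in $z$. Chaining these two bounds and using $\alpha_1+\alpha_2=\alpha$, I obtain
\[
\tfrac{\rank_X(v'_{2j_q-1})}{n} \;\leq\; \min_{q'\in Q_{j_q}}(q'-\alpha)+\alpha \;\leq\; q \;\leq\; \max_{q'\in Q_{j_q}}(q'+\alpha)-\alpha \;\leq\; \tfrac{\rank_X(v'_{2j_q})}{n}
\]
for every $q\in Q_{j_q}$, which is exactly the required containment by the definition of the buckets $B_i = \{x\in X : v'_{i-1}\leq x < v'_i\}$.

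\textbf{Union bound and expected obstacles.} A single union bound over the $\FSEM$-failure event and the DKW event yields overall failure probability at most $2\beta$, matching the lemma. Crucially, the containment is one event per quantile \emph{set} $Q_{j_q}$, so one $\FSEM$ call and one DKW invocation suffice for the entire family $Q$. The main obstacle I anticipate is careful bookkeeping rather than a genuinely hard step: (i) tracking boundary conventions ($<$ versus $\leq$ at the endpoints $v'_i$) to justify the identification of $\rank_X(v'_{2j_q-1})$ with $\sum_{i=1}^{2j_q-1}|B_i|$; (ii) checking that the merged-set spacing $\geq 4\alpha_1+2\alpha_2$, together with $\FSEM$'s own gap assumption on $Q'$, forces $v'_{2j_q-1}<v'_{2j_q}$ deterministically so that the buckets are non-overlapping; and (iii) reconciling the $\pm 4\tau$ and $\pm\frac{12}{\epsilon_2}\log(m)\log(2m/\beta)$ slacks from Stage~1 against the two-sided form stated in the lemma.
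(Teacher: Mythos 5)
Your proposal is correct and follows essentially the same route as the paper's proof: rank containment $\rank_X(v'_{2j_q-1})\leq qn\leq \rank_X(v'_{2j_q})$ via Theorem~\ref{thm:slice-utility} on the subsample chained with DKW (using $\alpha=\alpha_1+\alpha_2$ and the quantile merging), combined with Theorem~\ref{thm:bucket-util} to pass from $\sum|B_i|$ to $\sum cnt_i$, and a union bound over the two events. The only bookkeeping difference is that the paper applies the prefix-sum bound of Theorem~\ref{thm:bucket-util} at both $j=2j_q-1$ (for the lower end) and $j=2j_q$ (for the upper end), which cleanly eliminates the leftover $4\tau$ you note in item (iii) rather than having to absorb it into $cnt_{2j_q}$.
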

\begin{proof}
    Let $Y = \Xsam$ and $\rank_X(w) = \sum_{x \in X} \mathbf{1}[w \leq x]$, and $k = n^{2/3}$ (the size of the subsample).
    We will first show that $\rank_X(v_{2j_q-1}') \leq qn \leq \rank_X(v_{2j_q}')$. By Lemma~\ref{lem:dkw}, we know that for any $w \in D$,
    \[
        \left \lvert \frac{1}{k} \rank_Y(w) - \frac{1}{n} \rank_X(w) \right \rvert \leq \sqrt{ \frac{\ln(2/\beta)}{2k} }
    \]
    with probability at least $1-\beta$. Furthermore, Theorem~\ref{thm:slice-utility} guarantees that the quantile estimates $V' = \FSEM(Y, Q')$ have an error guarantee of $c = \frac{12\log({|D|m/\beta})}{\epsilon_1} + \frac{24\log m \log(\frac{2m}{\beta})}{\epsilon_1}$ with probability at least $1-\beta$. We can write rewrite this error guarantee as, for all $q \in Q$,
    \begin{gather*}
        \frac{1}{k}\rank_Y(v_{2j_q-1}') \leq q'_{2j_q-1} + \frac{c_2}{k} \leq q - \alpha + \frac{c}{k} \\
        \frac{1}{k}\rank_Y(v_{2j_q}') \geq q_{2j_q}' - \frac{c_2}{k} \geq q + \alpha - \frac{c}{k},
    \end{gather*}
    where the second inequalities hold because $q'_{2j_q-1} \leq q-\alpha$ and $q'_{2j_q} \geq q+\alpha$ due to the initial quantile merging process. By the triangle inequality, we have
    \begin{gather*}
         \frac{1}{n} \rank_X(v_{2j_q-1}') \leq q - \alpha + \frac{c}{k} + \sqrt{\frac{\ln(2/\beta)}{2k}} \leq q
         \\
        \frac{1}{n}\rank_X(v_{2j_q}') \geq q +\alpha - \frac{c}{k} - \sqrt{ \frac{\ln(2/\beta)}{2k} } \geq q,
    \end{gather*}
    where the final inequality comes from the choice $\alpha = \frac{c}{k} + \sqrt{\frac{\ln(2/\beta)}{2k}}$.
    
    To finish the proof, let $B_i = \{x \in X \vert v_{i-1}' \leq x < v_{i}'\}$ denote the data partition defined in $\F_{\name}$ before dummy users are added. We have the following:
    \begin{align*}
        \sum_{i=1}^{2j_q-1} cnt_i
        &\leq 4\tau(2j_q-1) + \sum_{i=1}^{2j_q-1} |B_i| + \frac{12}{\epsilon_2}\log(m)\log(\tfrac{2m}{\beta}) \\
        &\leq \rank_{X}(v_{2j_q-1}') + 4\tau(2j_q-1) + \frac{12}{\epsilon_2}\log(m)\log(\tfrac{2m}{\beta}) \\
        &\leq qn + 8j_q\tau + \frac{12}{\epsilon_2}\log(m)\log(\tfrac{2m}{\beta})
    \end{align*}
    where the first step follows from Theorem~\ref{thm:bucket-util}, and the second step follows from the fact that the $B_i$ partition $X$ and $v'_{2j_q-1} \in B_{2j_q}$. Similarly,
    \begin{align*}
        \sum_{i=1}^{2j_q} cnt_i &\geq 4\tau(2j_q) + \sum_{i=1}^{2j_q} |B_i| - \frac{12}{\epsilon_2}\log(m)\log(\tfrac{2m}{\beta}) \\
        &\geq \rank_X(v_{2j_q}') + 8\tau j_q-\frac{12}{\epsilon_2}\log(m)\log(\tfrac{2m}{\beta}) \\
        &\geq qn + 8j_q\tau - \frac{12}{\epsilon_2}\log(m)\log(\tfrac{2m}{\beta}).
    \end{align*}
    The result follows by rearranging.
\end{proof}
We are then able to prove Theorem~\ref{thm:main-util}.

\begin{proof}
    Let $v$ be an arbitrary returned quantile, and suppose it corresponds to rescaled quantile $q' \in Q_{j}'$ and quantile $q$ in quantile partition $Q_{j}$. For $1 \leq i \leq 2m$, let $B_{i} = \{x \in X \vert v_{i-1}' \leq x \leq v_{i}'\}$, and $\tilde{B}_i$ denote $B_i$ after dummy users are added. Observe $|\tilde{B}_i| = cnt_i$.

    Because the bins $B_1, \ldots, B_{2m}$ partition $X$, and $v$ is guaranteed to lie in $B_{2j}$, we have
    \begin{align*}
        \rank_{X}(v) &= \sum_{i=1}^{2j-1} |B_i| + \rank_{B_{2j}}(v) \\
        &= \sum_{i=1}^{2j} |B_i| + \rank_{\tilde{B}_{2j}}(v) - cnt_{2j},
    \end{align*}
    where the second step follows because $cnt_{2j} - |B_{2j}|$ dummy elements were added to obtain $\tilde{B}_{2j}$. This implies that
    \begin{align*}
        \err_{X,q}(v) &= \lvert \rank_X(v) - qn \rvert \\
        &= \left \lvert \rank_{\tilde{B}_{2j}}(v) +  \sum_{i=1}^{2j} |B_i| - cnt_{2j} - qn \right \rvert \\ 
        &\leq  \lvert \rank_{\tilde{B}_{2j}}(v) - q' cnt_{2j}\rvert \\ 
        &\qquad + \left \lvert q'cnt_{2j} + \sum_{i=1}^{2j} |B_i| - cnt_{2j} - qn \right \rvert
    \end{align*}
    By Theorem~\ref{thm:slice-utility}, we know that with probability at least $1 - 2\beta$ the first term is at most
    \[
    \mathsf{err}_{\tilde{B}_{2j}, q'}(v) \leq \frac{12\log(\frac{|D|m}{\beta})}{\epsilon_3} + \frac{24\log (m) \log (\frac{2m}{\beta})}{\epsilon_3}.
    \]
    To control the second term, we may plug in the definition of $q'$:
    \[
        q' \approx \frac{qn + 8 \tau j - 2\tau - \sum_{i=1}^{2j-1} cnt_i}{cnt_{2j}},
    \]
    where by Lemma~\ref{lem:initial-buckets}, the approximation holds up to an additive factor of $\frac{12 \log(m) \log(2m/\beta)}{\epsilon_2 cnt_{2j}}$ with probability $1-2\beta$. Plugging in, we see that
   \begin{align*}
   &\left \lvert q'cnt_{2j} + \sum_{i=1}^{2j} |B_i| - cnt_{2j} - qn \right \rvert \\
       &\leq \left\lvert \sum_{i=1}^{2j}|B_i| - \sum_{i=1}^{2j}cnt_i + 8 \tau j \right\rvert + \frac{12}{\epsilon_2} \log m \log(\tfrac{2m}{\beta})\\
       &\leq \frac{24}{\epsilon_2} \log m \log(\tfrac{2m}{\beta}),
   \end{align*}
   where the final line follows from Theorem~\ref{thm:bucket-util}. Thus, we have shown the desired error guarantee, and a success probability of $1-6\beta$.
\end{proof}
\subsection{Proof of Lemma \ref{lemma}}\label{app:lemma}

\begin{proof}
 $\eta_i^1$ elements from the very end of $\bar{S}_i$ are shifted to the very beginning after the sort. This is because since our domain is $[0,d]$, all these values are less than the inputs. Similarly,  $\eta_i^0$ elements from the beginning of $\bar{S}_i$ are shifted to the very end after the sort. As a result,  $\hat{S}_i$ can be written as 
    \[
        A + \Xs[a_i-w+\eta^0_i: b_i +w- \eta^1_i] + B, 
    \]
    where $+$ denotes concatenation, $A$ is a dataset of length $\eta^{1}_i$, and $B$ is a dataset of length $\eta^0_{i}$. Hence, by removing $w$ elements from both the left and right of the above list, we obtain 
    \begin{align*}
    \Xs[a_i+\eta^0_i-\eta^1_i, b_i \eta^0_i-\eta^1_i]
     \end{align*}
     which concludes the proof.
    
\end{proof}

\subsection{Proof of Theorem \ref{thm:secure comp}}
\begin{proof}
    The comparisons are introduced by the initial call to $\FSEM$, the call to $\FBuc$, and finally each of up to $m$ calls to $\FSEM$ on the quantile partitions $Q_i$. By Theorem~\ref{coro:partial-sort}, the number of comparisons used by the the first call to $\FSEM$ is $O(k \log (m) + m(w+h) \log n)$. For nearly all parameter regimes, the lefthand term will dominate. Let $m_i = |\tilde{Q}_i|$; we have that $|B_i| \leq 2nm_i\alpha$. By the same theorem, the running time of the later calls to $\FSEM$ is $O(|B_i| \log (m_i) + m_i(w+h)\log(|B_i|))$, which will again be dominated by $O(|B_i| \log (m_i))$. Thus, the total running time will be at most
    \begin{align*}
        & k\log(m) + \sum_{i=1}^m {2nm_i}\log(m) \alpha \\
        &\leq k \log(m) + {2nm} \log(m) \alpha \\
        &= k \log(m) + 2nm \Bigg(\sqrt{\frac{\ln(1/\beta)}{k}} \\ &\qquad+ \frac{\ln(Dm/\beta)}{k\epsilon_1} + \frac{ \log(m) \log(m/\beta)}{k \epsilon_1}\Bigg)\log(m).
    \end{align*}
    This expression is minimized by taking 
    \begin{multline*}
    k = (nm)^{2/3}  \max\Bigg\{\ln (\tfrac{1}{\beta})^{1/3}, \\ \sqrt{\frac{\ln(Dm/\beta) + \log(m)\log(m/\beta)}{\epsilon_1 (nm)^{1/3}}} \Bigg\};
    \end{multline*}
    the total number of comparisons is then bounded by $O(k)$.
    Finally, it is easy to see that the call to $\FBuc$ requires at most $n \log (2m)$ due to the binary search.
\end{proof}
\section{Details from Partial Sorting}\label{app:partial sort}
In this section, we prove general self-contained result about partial sorting, which implies Theorem \ref{coro:partial-sort}. This result was provided to us by~\cite{Aanders}.
\begin{theorem}\label{thm:partial-sort} Let $A\subset [n]$ be a disjoint union of intervals $A=I_1\cup\cdots\cup I_k$  such that no two of these can be combined into a larger interval. 
Denote by $\alpha=|A|$ and $\beta=|[n]\setminus A|$. The expected number of comparisons required to sort  a dataset of length $n$, allowing elements with ranks in $A$
to remain unordered, is at  most $\alpha(5+\ln(nk/\alpha))+4\beta \ln n$. 
\label{thm:partialsort}\end{theorem}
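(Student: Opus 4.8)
\textbf{Proof plan for Theorem~\ref{thm:partial-sort}.}

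The plan is to analyze the modified Quicksort recursion described just before the statement: at each node we pick a uniformly random pivot, split into the ``left'' and ``right'' subinstances $(X_1, A_1, \prec)$ and $(X_2, A_2, \prec)$, and recurse — except that we stop recursing (performing no further comparisons) as soon as $|X| = |A|$, i.e.\ once the entire current block is allowed to be unordered. The total number of comparisons equals the sum over all internal recursion nodes of $|X|-1$, so it suffices to bound the expected total size of the recursion subproblems. I would separate the accounting into two contributions: the cost attributable to the ``unordered'' positions $A$ and the cost attributable to the ``ordered'' positions $[n]\setminus A$.

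The cleanest way I know to carry this out is the standard pairwise-indicator argument for Quicksort, adapted to track when a pair is compared. For each pair of elements with (final sorted) ranks $i<j$, let $X_{ij}$ be the indicator that $i$ and $j$ are ever compared; as usual $X_{ij}=1$ iff the first pivot chosen among the contiguous rank-block $\{i,i+1,\dots,j\}$ is $i$ or $j$, giving $\Pr[X_{ij}]=2/(j-i+1)$ — \emph{but} with the crucial modification that a pair inside a fully-unordered block is never compared, so we only need to sum over pairs $\{i,j\}$ that are \emph{not} both contained in a single interval $I_t$ of $A$. Thus the expected number of comparisons is $\sum_{\{i,j\}\ \text{not co-interval}} 2/(j-i+1)$. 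I would bound this by splitting each such pair according to whether it contains an ordered position: (a) pairs with at least one endpoint in $[n]\setminus A$, and (b) pairs with both endpoints in $A$ but in \emph{different} intervals of $A$. For (a), fixing an ordered endpoint and summing $2/(j-i+1)$ over the $\le n$ other elements gives the harmonic-type bound $2H_n \le 2\ln n + O(1)$ per ordered element, contributing $O(\beta\ln n)$ — this is where the $4\beta\ln n$ term comes from (the factor accounts for counting each pair from both possible ordered endpoints / the constant slack). For (b), fix an unordered element $x$ in interval $I_s$; the elements it can be compared with that lie in a \emph{different} interval of $A$ are at distance at least the size of the ``gap block'' separating them, and the number of intervals is $k$; summing $2/(j-i+1)$ over these at-most-$k$ groups, each group contributing at most a harmonic sum, yields roughly $2(1+\ln(nk/\alpha))$ per unordered element after using convexity / the AM bound that the average gap between consecutive intervals is at least $(n-\alpha)/k$ or similar — this is the source of the $\alpha(5+\ln(nk/\alpha))$ term.

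The main obstacle I anticipate is part (b): getting the clean $\ln(nk/\alpha)$ bound rather than a crude $\ln n$ requires exploiting that the $\alpha$ unordered positions are packed into only $k$ intervals, so most pairs of unordered elements are actually co-interval and hence free; one must carefully count, for a fixed unordered element, how its comparison partners in other intervals are distributed in rank-distance, and then apply Jensen/convexity of $t\mapsto \ln t$ (or a direct concavity-of-harmonic-sums estimate) to turn a sum of $\ln(\text{gap}_i)$ terms into $\ln$ of the average gap, which is $\gtrsim n/k$ when $\alpha$ is a constant fraction of $n$, and more generally $(n-\alpha)/k$. Bookkeeping the off-by-one's and the additive constants (the $5$ and the $4$) is routine but fiddly. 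Finally, to derive Theorem~\ref{coro:partial-sort} from Theorem~\ref{thm:partial-sort}, I would instantiate $k \mapsto m$, $\alpha = ml$ (the $m$ intervals of length $l$), and $\beta = n - ml \le n$, so that $\alpha(5+\ln(nk/\alpha)) = O(ml\log(n/l)) = O(ml\log n)$ and $4\beta\ln n = O(n\log n)$; wait — to get the sharper $O(n\log m)$ claimed there one instead bounds the cost of the ordered positions by noting they form only $m+1$ ordered intervals whose internal order \emph{is} required but whose inter-interval comparisons against a fixed ordered element number $O(\log m)$ in expectation by the same co-interval trick applied symmetrically, giving the stated $O(n\log m + ml\log n)$.
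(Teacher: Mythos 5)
Your proposal contains a genuine gap at its central step. You claim that in the modified Quicksort ``a pair inside a fully-unordered block is never compared,'' and consequently you sum $2/(j-i+1)$ only over pairs not contained in a single interval $I_t$ of $A$. This is false for the algorithm being analyzed: the recursion stops only when the \emph{entire current subinstance} lies inside $A$ (i.e.\ $|X|=|A|$), so while the subinstance still contains ordered positions, a pivot may well be chosen from inside $I_t$ and is then compared against the other elements of $I_t$. Handling exactly these co-interval pairs is the delicate heart of the paper's proof: for $i<j$ in the same interval $I_t$ it derives the exact comparison probability
\begin{equation*}
\Pr[B_{i,j}]=\frac{i-a_t}{\ell_t}\cdot\frac{2}{a_t+\ell_t-i}+\frac{a_t+\ell_t-j-1}{\ell_t}\cdot\frac{2}{j-a_t+1}+\frac{2}{\ell_t},
\end{equation*}
which is strictly positive (it accounts for the first pivot in $I_t$ landing below $i$, above $j$, or being $i$ or $j$ itself), and shows the within-interval comparisons total at most $3\ell_t$ per interval, i.e.\ $3\alpha$ overall. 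Your analysis omits this contribution entirely, so it bounds a strictly smaller quantity than the algorithm's expected comparison count; moreover, since your case~(b) already spends roughly $2\alpha\ln(nk/\alpha)$ rather than $\alpha\ln(nk/\alpha)$, it is not clear the missing $3\alpha$ (or the factor of $2$) can be absorbed into the stated constant $\alpha(5+\ln(nk/\alpha))$ without redoing the accounting the way the paper does: it lumps all pairs with exactly one endpoint in $I_t$ into a single cross-term bounded by $\ell_t(H_n-H_{\ell_t})+\ell_t-H_{\ell_t}$ per side, and then uses $\sum_t\ell_t H_{\ell_t}\geq\alpha\ln(\alpha/k)$ (your Jensen-type step, which is fine in spirit) to reach $\alpha(5+\ln(nk/\alpha))$, with the ordered elements contributing $4\beta\ln n$ by the standard Quicksort argument, as you say.

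A secondary issue: in your last paragraph the instantiation for the corollary swaps the roles of the two sets. In the application, $A$ (the part allowed to stay unordered) is the \emph{complement} of the $m$ slices, so one should take $\beta=ml$, $\alpha=n-ml$, and $k\leq m+1$, which directly yields $O(n\log m+ml\log n)$ when $\alpha=\Omega(n)$; no separate ``symmetric co-interval trick'' for the ordered positions is needed.
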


Consider the following comparison-based sorting problem. Let $X=\{x_1,\dots, x_n\}$ be a set of size $n$ and $\prec$ a linear order on $X$. Let $\pi:[n]\to[n]$ be the sorted order of $X$, i.e., $x_{\pi(1)}\prec \cdots\prec x_{\pi(n)}$. Let $A\subset [n]$ be some set and write $A=I_1\cup\cdots\cup I_k$ as a disjoint union of intervals such that no two of these can be combined into a larger interval.
Our goal is to find a \emph{partial sorting} $\rho:[n]\to [n]$  satisfying that $\rho(i)=\pi(i)$ for all $i\in [n]\setminus A$, and for each $t=1,\dots, k$, $\{\pi(i):i\in I_t\}=\{\rho(i):i\in I_t\}$. In other words, restricted to any interval $I_t$, $\rho$ is only required to get the elements right, not their actual sorted order. We assume access to comparisons of the form "is $x_i\prec x_j$?", and are interested in the number of comparisons needed to find such a $\rho$. Note that $A=\emptyset$ corresponds to standard comparison-based sorting. We would like to show that when $A$ is large, we can get away with using significantly fewer comparisons. 

Our algorithm is similar to Quicksort but with a slight modification. Let $(X,A,\prec)$ be some problem instance, with $|X|=n$. If $|X|=|A|$, we perform no more comparisons and let $\rho: [n]\to [n]$ be arbitrary. Otherwise, we pick a random pivot $x\in X$, form the two subsets $X_1=\{y\in X:y\prec x\}$ and $X_2=\{y\in X: x\prec y\}$, define $n_1=|X_1|$, $n_2=|X_2|$, $A_1=A\cap [n_1]$, and $A_2=A\cap ([n]\setminus [n_1+1])$, and recurse on the two subinstances $(X_1,A_1,\prec)$ combining the solutions similar to Quicksort.

\begin{theorem}
Denote by $\alpha=|I|$ and $\beta=|[n]\setminus I|$. The expected number of comparisons performed by the modified Quicksort algorithm is at most $\alpha(5+\ln(nk/\alpha))+4\beta \ln n$.
\end{theorem}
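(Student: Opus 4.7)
The plan is to use the standard priority-based analysis of randomized Quicksort. Assign i.i.d.\ uniform random priorities $U_1, \ldots, U_n$; then standard Quicksort is equivalent to processing indices in ascending priority order and pivoting each in its current subinstance, and our modified variant simply skips pivoting whenever the current subinstance is already contained in some $I_t$. Consequently $E[\text{comparisons}] = \sum_{i<j} \Pr[(i,j) \text{ compared}]$, which I will bound pair by pair. With the convention $U_0 = U_{n+1} = -\infty$, the subinstance $[l+1, r-1]$ containing $i$ and $j$ arises in the recursion iff $\max(U_l, U_r)$ is less than every priority in $[l+1, r-1]$, and it is pruned iff additionally $l \geq a-1$ and $r \leq b+1$ for some interval $I_t = [a,b]$ containing both.

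First I would derive the per-pair comparison probabilities. For pairs $(i,j)$ in different intervals of $A$, or with at least one endpoint in $[n]\setminus A$, the subinstance containing them always includes an index outside every $I_t$ and is therefore never pruned, giving the standard $\Pr[(i,j) \text{ compared}] = 2/(j-i+1)$. For pairs in the same interval $I_t=[a,b]$, the event ``$i$ is the pivot causing the comparison'' decomposes, via inclusion-exclusion over the two closure events, into ``$U_i$ is the minimum of $[a-1, j]$'' or ``$U_i$ is the minimum of $[i, b+1]$''; combining with the symmetric contribution from $j$ yields
\[
\Pr[(i,j) \text{ compared}] \;=\; 2\Bigl(\tfrac{1}{j-a+2} + \tfrac{1}{b-i+2} - \tfrac{1}{b-a+3}\Bigr),
\]
with the appropriate boundary terms vanishing when $a=1$ or $b=n$ (since $U_0, U_{n+1} = -\infty$ can never be exceeded).

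Next I would sum these probabilities in three groups. Pairs with at least one endpoint in $[n]\setminus A$ are bounded by $\sum_{i \in [n]\setminus A}\sum_{j \neq i} 2/(|j-i|+1) \leq 4\beta H_n \leq 4\beta \ln n$, accounting for the $4\beta \ln n$ term of the target. Same-interval pairs in $I_t$ of size $m_t$ sum by direct evaluation of the closed form to $O(m_t)$, totalling $O(\alpha)$ over all intervals. For different-interval pairs with both endpoints in $A$, I would use the savings decomposition: modified cost $= Q(n) - \sum_t \text{Savings}_t$, where $\text{Savings}_t = \sum_{[l+1,r-1]\subseteq I_t} \Pr[(l,r)\text{ subinstance}]\cdot (r-l-2)$ evaluates to roughly $2 m_t \ln m_t - O(m_t)$. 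Applying Jensen's inequality to the concave function $m \mapsto m\ln(n/m)$ yields $\sum_t m_t \ln(n/m_t) \leq \alpha \ln(nk/\alpha)$, producing the $\alpha \ln(nk/\alpha)$ term.

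The main obstacle will be tightening the constants to land exactly on $\alpha(5 + \ln(nk/\alpha)) + 4\beta \ln n$. The comparison-probability formula has two edge cases ($a=1$ and $b=n$) and a subtraction term that complicate direct telescoping, and the combination of the Savings bound and the Jensen step introduces $O(\alpha+k)$ of slack that must be absorbed into the additive $5\alpha$. A secondary point is that the bound must be uniform in the arrangement of the intervals, which is why the concavity-based Jensen argument is needed in place of simply assuming equal-sized intervals.
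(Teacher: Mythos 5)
Your per-pair analysis is correct and in fact sharper than the paper's. The priority/treap characterization of pruning is right, and your inclusion--exclusion formula $2\bigl(\tfrac{1}{j-a+2}+\tfrac{1}{b-i+2}-\tfrac{1}{b-a+3}\bigr)$ is the \emph{exact} comparison probability for a same-interval pair: e.g.\ for $n=4$, a single interval $[2,3]$ and the pair $(2,3)$, it gives $5/6$, which matches direct enumeration, whereas the paper's displayed expression for $\Pr[B_{i,j}]$ evaluates to $1$ there, so the paper's formula is only a (heuristic) overestimate. The claim that every pair not contained in one interval keeps its standard probability $2/(j-i+1)$ is also right, and the $4\beta\ln n$ group is fine (modulo the slip that $4\beta H_n\le 4\beta\ln n$ is false as written; you need $4\beta(H_n-1)\le 4\beta\ln n$). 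So strategically you are on the same route as the paper: per-pair probabilities, harmonic sums, and a concavity/log-sum step.

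The genuine gap is the final bookkeeping, and it is not the ``$O(\alpha+k)$ of slack'' you hope to absorb into $5\alpha$. Your own savings identity gives modified cost $\approx 2n\ln n-2\sum_t m_t\ln m_t+O(n)=2\sum_t m_t\ln(n/m_t)+2\beta\ln n+O(n)$, so after Jensen the leading term is $2\alpha\ln(nk/\alpha)$ --- twice the claimed coefficient --- and this factor of two cannot be argued away: pairs lying in \emph{different} intervals of $A$ are compared with their full probability $2/(j-i+1)$, and if $A$ consists of $k\approx\sqrt n$ equal blocks separated by single non-$A$ elements, these cross-block pairs alone contribute about $2n\ln k\approx n\ln n$, which already exceeds the stated bound $\alpha(5+\ln(nk/\alpha))+4\beta\ln n\approx \tfrac12 n\ln n+5n$ for large $n$. (The paper's proof reaches coefficient $1$ only by silently dropping a factor $2$ when it evaluates $\sum_{j=1}^{a_t-1}\tfrac{2}{i-j+1}$ as $H_i-H_{i-a_t+1}$ rather than $2(H_i-H_{i-a_t+1})$.) What your argument can honestly deliver is a bound of the form $2\alpha\ln(nk/\alpha)+O(\alpha)+4\beta\ln n$, which still suffices for the asymptotic partial-sorting bound $O(n\log m+ml\log n)$ used in the body of the paper, but not the literal statement. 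Separately, note an internal inconsistency in your plan: the savings decomposition is an identity for the \emph{total} cost, not a bound on the different-interval pairs alone, so combining it with your separately bounded groups 1 and 2 double counts; either bound group 3 directly via harmonic sums (as the paper does for its cross terms) or apply the savings identity to everything and skip the grouping.
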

\begin{proof}
Our analysis is similar to that of Quicksort in that we for each pair $x,y\in X$ calculate the probability that $x$ is ever compared to $y$ and sum over all pairs $x,y$. The main difference it that for standard Quicksort, the probability that $x_{\pi(i)}$ and $x_{\pi(j)}$ are ever compared is $\frac{2}{j-i+1}$ but in our setting we get that in some cases the probability is even smaller. 

Denote $\ell_t=|I_t|$ and write $I_t=\{a_t,a_t+1,\dots, a_t+\ell_t-1\}$ for $t=1,\dots,k$. For $i<j$, we denote the $B_{i,j}$ the event that $x_{\pi(i)}$ and $x_{\pi(j)}$ are ever compared. In general $\Pr[B_{i,j}]\leq \frac{2}{j-i+1}$ by a similar logic to Quicksort: $x_{\pi(i)}$ and $x_{\pi(j)}$ will only be compared if we ever select a pivot in the set $\{x_{\pi(i)},\dots, x_{\pi(j)}\}$ and that pivot is either $x_{\pi(i)}$ or $x_{\pi(j)}$. However, as we will show, for some values of $i$ and $j$ it is not even certain that we will ever select such a pivot (if at some point they are both contained in a subinstance where we don't recurse but instead pick an arbitrary $\rho$). In particular, suppose that $i<j$ are such that $\pi(i)$ and $\pi(j)$ both lie in the \emph{same} set $I_t$. We claim that
\begin{align*}
\Pr[B_{i,j}]&=\frac{i-a_t}{\ell_t}\cdot\frac{2}{a_t+\ell_t-i} +\\
&+\frac{a_t+\ell_t-j-1}{\ell_t}\cdot\frac{2}{j-a_t+1}+\frac{2}{\ell_t}.
\end{align*}
Here the first term accounts for the probability that the first randomly selected pivot in $I_t$ is smaller $i$ and that the second time we select a pivot in $\{i,i+1,\dots, a_t+\ell_t-1\}$ it is either $i$ or $j$. The second term symmetrically accounts for the probability that the first randomly selected pivot in $I_t$ is larger than $j$ and that the second time we select a pivot in $\{a_t,a_t+1,\dots, j\}$ it is either $i$ or $j$. Finally the third term accounts for the probability that the first pivot selected in $I_t$ is either $i$ or $j$. These are the only options for $x_{\pi(i)}$ and $x_{\pi(j)}$ to ever be compared. To see this, note that if the first sampled pivot in $I_t$ is below $i$, and the first sampled pivot between $i$ and $a_t+\ell-1$ is not $i$ or $j$, then $x_{\pi(i)}$ and $x_{\pi(j)}$ are never compared: Indeed, if (1) the pivot is between $i$ and $j$, they are placed in different subinstances, and if (2) the pivot is above $j$, then $A_1=[n_1]$ in the left subinstance, so we don't recurse. The case where the first pivot is above $j$ is similar.

Summing this quantity over $i,j$ with $a_t\leq i <j<a_t+\ell_t$ gives the expected number of comparisons between elements within $\{x_{\pi(i)}:i\in I_t\}$. Note that 
\begin{align*}
\sum_{a_t\leq i <j<a_t+\ell_t} &\frac{i-a_t}{\ell_t}\cdot\frac{2}{a_t+\ell_t-i}=\sum_{0\leq i <j<\ell_t} \frac{i}{\ell_t}\cdot\frac{2}{\ell_t-i}\\
&=\sum_{0\leq i\leq \ell_t-2}(\ell_t-i-1)\frac{i}{\ell_t}\cdot\frac{2}{\ell_t-i}\\
&\leq \sum_{0\leq i\leq \ell_t-2}\frac{2i}{\ell_t}\leq \ell_t
\end{align*}
and similarly,
\[
\sum_{a_t\leq i <j<a_t+\ell_t} \frac{a_t+\ell_t-j-1}{\ell_t}\cdot\frac{2}{j-a_t+1}\leq \ell_t
\]
We can thus bound the expected number of comparisons between pairs of elements in $\{x_\pi(i): i \in I_t\}$ by
\begin{align}\label{eq:within}
\sum_{a_t\leq i <j<a_t+\ell_t}\Pr[B_{i,j}]\leq 3\ell_t.
\end{align}
This accounts for comparisons within $\{x_{\pi(i)}:i\in I_t\}$. 
For pairs $i,j$ where $\pi(i)\in I_t$ and $\pi(j)\notin I_t$, we use that $\Pr[B_{i,j}]=\frac{2}{|j-i|+1}$ (in fact, we only need that this is an upper bound on $\Pr[B_{i,j}]$ but it is easy to check that equality holds). Thus, we can bound the expected number of comparisons between all such $i,j$ by
\begin{align}\label{eq:cross}
\sum_{\substack{i:\pi(i)\in I_t \\ j: \pi(j) \in  [n]\setminus I_t}}&\Pr[B_{i,j}]=\\
&=\sum_{i=a_t}^{\ell_t+a_t-1}\left(\sum_{j=1}^{a_{t}-1}\frac{2}{i-j+1}+\sum_{j=a_t+\ell_t}^{n}\frac{2}{j-i+1}\right)
\end{align}
Denoting by $H_m=\sum_{i=1}^m\frac{1}{i}$, and recalling the formula $\sum_{i=1}^m H_i=(m+1)H_m-m$ we thus obtain that 
\begin{align*}
\sum_{i=a_t}^{\ell_t+a_t-1}\left(\sum_{j=1}^{a_{t}-1}\frac{2}{i-j+1}\right)&=\sum_{i=a_t}^{\ell_t+a_t-1}H_i-H_{i-a_t+1}\\
&\leq \ell_t H_n-\sum_{i=1}^{\ell_t}H_i\\
&=\ell_t(H_n-H_{\ell_t})+\ell_t-H_{\ell_t}.
\end{align*}
And the same bounds symmetrically holds for the second term in~\ref{eq:cross}. Summing~\ref{eq:within} and~\ref{eq:cross} over all $t$, we get that the expected number of comparisons involving \emph{any} element $x_{\pi(i)}$ where $\pi(i)$ lies in some $I_t$ can be upper bounded by
\begin{align}\label{eq:in-and-out}
4\alpha+\sum_{t=1}^k \ell_t(H_n-H_{\ell_t}).
\end{align}
Now it is always the case that $H_m\geq \ln(m)$, and so
\[
\sum_{t=1}^k\ell_t H_{\ell_t}\geq \sum_{t=1}^k \ell_t\ln(\ell_t)\geq \alpha \ln(\alpha/k)
\]
and thus, we can upper bound the expression in~\ref{eq:in-and-out} by
\[
\alpha(4+H_n-\ln(\alpha/k))\leq\alpha(5+\ln(nk/\alpha)).
\]
This accounts for the number of comparisons involving elements $x_{\pi(i)}$ where $\pi(i)\in I$. Finally, if $\pi(i)\in [n]\setminus I$, it follows from the standard Quicksort analysis that the expected number of comparisons involving $x_{\pi(i)}$ is at most $4\ln n $, so the expected number of such comparisons is at most $4\beta \ln n$. This completes the proof of the theorem.
\end{proof}

\end{document}